\def\@footnotecolor{red}
\def\@footnotemark{%
    \leavevmode
    \ifhmode\edef\@x@sf{\the\spacefactor}\nobreak\fi
    \stepcounter{Hfootnote}%
    \global\let\Hy@saved@currentHref\@currentHref
    \hyper@makecurrent{Hfootnote}%
    \global\let\Hy@footnote@currentHref\@currentHref
    \global\let\@currentHref\Hy@saved@currentHref
    \hyper@linkstart{footnote}{\Hy@footnote@currentHref}%
    \@makefnmark
    \hyper@linkend
    \ifhmode\spacefactor\@x@sf\fi
    \relax
  }%
\newcommand{\bitset}{\{0,1\}}
\newcommand{\fbitset}{\{1,-1\}}
\newcommand{\eps}{\varepsilon}
\renewcommand{\set}[1]{\left\{#1\right\}}
\newcommand{\U}{\mathcal{U}}
\newcommand{\NP}{\mathcal{NP}}
\newcommand{\Ptime}{\mathcal{P}}
\newcommand{\PT}{\mathcal{PT}}
\newcommand{\QPT}{\mathcal{QPT}}
\newcommand{\MAP}{\mathcal{MAP}}
\newcommand{\QMAP}{\mathcal{QMAP}}
\newcommand{\QCMAP}{\mathcal{QCMAP}}
\newcommand{\IPP}{\mathcal{IPP}}
\newcommand{\UPP}{\mathcal{UPP}}
\newcommand{\PP}{\mathcal{PP}}
\newcommand{\QMA}{\mathcal{QMA}}
\newcommand{\QCMA}{\mathcal{QCMA}}
\newcommand{\BQP}{\mathcal{BQP}}
\newcommand{\MA}{\mathcal{MA}}
\newcommand{\E}{\mathbb{E}}
\newcommand{\F}{\mathbb{F}}
\renewcommand{\H}{\mathcal{H}}
\newcommand{\K}{\mathcal{K}}
\newcommand{\N}{\mathbb{N}}
\renewcommand{\P}{\mathbb{P}}
\newcommand{\R}{\mathbb{R}}
\renewcommand{\U}{\mathcal{U}}
\newcommand{\V}{\mathcal{V}}
\newcommand{\disj}{\mathsf{DISJ}}
\newcommand{\thrdeg}{\mathsf{thrdeg}}
\newcommand{\sgn}{\mathsf{sgn}}
\DeclareMathOperator{\pos}{pos}
\DeclareMathOperator{\poly}{poly}
\DeclareMathOperator{\polylog}{polylog}
\newtheorem{definition}{Definition}
\newtheorem{inftheorem}{Theorem}
\newtheorem{theorem}{Theorem}
\numberwithin{theorem}{section}
\newtheorem{corollary}{Corollary}
\newtheorem{lemma}{Lemma}
\newtheorem{proposition}{Proposition}
\theoremstyle{definition}
\newtheorem{remark}{Remark}
\newtheorem{openquestion}{Open Question}
\begin{document}

\title{Quantum Proofs of Proximity}

\author{Marcel Dall'Agnol}
\email{msagnol@pm.me}

\author{Tom Gur}
\affiliation{Department of Computer Science, University of Warwick, UK}
\email{tom.gur@warwick.ac.uk}
\thanks{Tom Gur is supported by the UKRI Future Leaders Fellowship MR/S031545/1.}

\author{Subhayan Roy Moulik}
\affiliation{Department of Mathematics, UC Berkeley, USA \& Department of Computer Science, University of Oxford, UK}
\email{srm@math.berkeley.edu}
\thanks{SRM is supported by the  National Science Foundation under the QLCI program through grant number OMA-2016245 and the Clarendon Fund.}

\author{Justin Thaler}
\affiliation{Department of Computer Science, Georgetown University, USA}
\email{justin.thaler@georgetown.edu}
\thanks{Justin Thaler is supported in part by NSF CAREER award CCF-1845125,  NSF SPX award CCF-1918989, and DARPA under Agreement No. HR00112020022. Any opinions, findings and conclusions or recommendations expressed in this material are those of the author and do not necessarily reflect the views of the United States Government or DARPA.}





\maketitle
\begin{abstract}
We initiate the systematic study of QMA algorithms in the setting of property testing, to which we refer as \emph{QMA proofs of proximity} (QMAPs). These are quantum query algorithms that receive explicit access to a sublinear-size untrusted proof and are required to accept inputs having a property $\Pi$ and reject inputs that are $\eps$-far from $\Pi$, while only probing a minuscule portion of their input.

We investigate the complexity landscape of this model, showing that QMAPs can be \emph{exponentially} stronger than both classical proofs of proximity and quantum testers. To this end, we extend the methodology of Blais, Brody, and Matulef (Computational Complexity, 2012) to prove quantum property testing lower bounds via reductions from communication complexity. This also resolves a question raised in 2013 by Montanaro and de Wolf (cf.\ Theory of Computing, 2016).

Our algorithmic results include a purpose an algorithmic framework that enables quantum speedups for testing an expressive class of properties, namely, those that are succinctly \emph{decomposable}. A consequence of this framework is a QMA algorithm to verify the Parity of an $n$-bit string with $O(n^{2/3})$ queries and proof length.
We also propose a QMA algorithm for testing graph bipartitneness, a property that lies outside of this family, for which there is a quantum speedup.

\end{abstract}

\newpage
\setcounter{tocdepth}{3}
\tableofcontents
\newpage

\section{Introduction}

Quantum property testing is a fundamental model of sublinear-time quantum computation. Its importance stems both from the practical difficulty in manipulating large quantum states, as well as from the fertile ground that it provides for complexity theoretic investigations of the power of quantum mechanics as a computational resource.
 Accordingly, this model has garnered a large amount of attention in the last decade (see, e.g., \cite{CFMW10, HA11, ACL11, CM13, OW15, ABRW16, NV17, AA18, BOW19, GL20, BCL20}, and the survey \cite{MdW16}).

Building on the vast literature of classical property testing (cf.\ the recent book \cite{G17}), quantum testers are defined as quantum query algorithms that solve the \emph{approximate} decision problem of membership in a subset $\Pi$ (of possibly quantum objects), which is typically referred to as a \emph{property}; that is, the tester must accept if its input is in the property $\Pi$ and reject if it is \emph{far} from $\Pi$ with respect to a natural metric.

This paper is concerned with the notion of QMA proofs, the quantum analogue of NP proofs, in property testing. Namely, we investigate the following question:

\begin{center}
\emph{What is the power of QMA proofs for quantum property testing?}
\end{center}



\subsection{Quantum proofs of proximity}
\label{sec:qpop}

The question of decision versus verification is foundational in theoretical computer science, and extends far beyond $\mathcal{P}$ vs.\ $\mathcal{NP}$. Indeed, the study of \emph{classical} proof systems in the property testing setting is well established \cite{EKRR04, BGHSV06, RVW13, KR15, GG21, BCGRS17, GR17, BRV18, GR18, GLR21, GGR18, RRR21, RR20}. These objects are called \emph{proofs of proximity}, and include, among others, PCPs of proximity, interactive proofs of proximity, and MA proofs of proximity (MAPs). We henceforth adopt this standard terminology, noting it is synonymous with \emph{proof systems in the property testing setting}.

A \emph{Quantum Merlin-Arthur} (QMA) proof of proximity protocol for a property \emph{of unitaries} $\Pi$,\footnote{We note that testing unitaries is not the only sensible definition; see \Cref{rem:operators-vs-states} for a discussion.} with respect to proximity parameter $\eps$, is defined as follows. The \emph{verifier}, a computational device given oracle access to a unitary $U$, receives a quantum state $\ket{\phi}$ from an all-powerful but untrusted \emph{prover}. Making use of these two resources, it must decide whether $U \in \Pi$ or $U$ is $\eps$-far from $\Pi$ with respect to a specific metric.\footnote{Note that, unlike in the classical case, where Hamming distance is with few exceptions the natural choice, there are many natural metrics on the set of unitary matrices (e.g., those induced by the operator or Hilbert-Schmidt norms).} Such a protocol is said to verify (or \emph{test}) $\Pi$ if, with high probability, the verifier accepts in the former case and rejects in the latter (see \Cref{sec:qmap} for a formal definition). We remark that the notion of QMA proofs of proximity is implicit in the literature as QMA query algorithms for approximate decision problems (e.g., the permutation testing problem \cite{A12, ST19}). In this paper, we initiate the systematic study of the notion of QMA proofs of proximity (QMAPs), and explore its power and limitations.

The complexity of a QMAP protocol is measured with respect to the amount of resources required by the verifier. Namely, we will evaluate the efficiency of a protocol by its \emph{proof complexity} $p$ (the number of qubits in the proof $\ket{\psi}$) and \emph{query complexity} $q$ (the number of oracle calls made by the verifier for a worst-case input $U$). In particular, both parameters should be \emph{sublinear} in nontrivial protocols.

Before proceeding to state our results, we briefly discuss three applications that underscore the motivation to study quantum proofs of proximity.

\paragraph{Delegated remote quantum computation.}
 With efficient data structures such as QRAM \cite{GLM08} and reliable communication channels, one may envision a new paradigm of computing -- where the data is stored in a trusted data centre, and can be accessed (in coherent superposition) by a remote quantum computer (c.f. \cite{LHJ22}). This allows a $O(n)$-qubit computer to access $2^n$ different locations of the data in single query. 
This setting motivates the need for protocols for \emph{delegated remote computation}, where a remote client interacts with a (powerful) server in order to perform computation (on remote data) that it could not perform on its own, but without trusting the server. Current experimental developments such as \cite{Mothers14, Lothers19, Dothers21} further support this possibility \cite{WEH18}.

In this setting, suppose a client wants to compute a function $f$ on the data, $x$. While the client cannot even load the entire database $x$, the server may compute and send $y=f(x)$ to the client and append a proof of proximity that asserts $f(x) = y$. This enables the client to check, with sublinear resources, that $x$ is not far from satisfying $f(x)=y$, i.e., that $x$ is close in, say, Hamming distance to an input $x'$ satisfying $f(x')=y$. Quantum proofs may allow for efficient remote \emph{quantum} computation. This extends the study of verifiable quantum computing (c.f. \cite{GKK19}) to the setting where the input is held in a data centre.

\paragraph{Quantum certification.} Suppose a manufacturer produces a device that it claims implements a quantum circuit with high fidelity. However, we do not have access to the architecture or authority to crack open the device and examine it. The physical device is given to us as a black box, into which we may feed a quantum state and receive another quantum state as output.

If we do not trust the manufacturer and would like to assert that the device is indeed implementing the claimed functionality, one alternative is to perform tomography and characterise its input-output behaviour. This needs extensive resources, however; to characterise an $n$-bit transformation, $2^{\Omega(n)}$ uses of the device are required \cite{HHJWY17, OW16}. Alternatively, we could require the manufacturer to provide a QMA proof of proximity that certifies the operation of the device is at least \emph{close} to what is expected.

The task of benchmarking and certification of quantum devices has been a prominent topic of research \cite{Eothers20}. In fact, the idea of using tools from property testing to this end has been suggested in past works \cite{C00, HM13, HLM17, BOW19} and is closely related to self-testing \cite{TKTB18, RKB18, SB20} and the more recent programme, on quantum algorithmic measurements, that gives a complexity theoretic perspective on experimental physics and already has applications in characterisation of quantum many-body systems \cite{ACQ22}. 

\paragraph{Complexity class separations in property testing.}
A fundamental question in quantum complexity is to determine whether $\QMA$ (the quantum analog of $\NP$) is strictly more powerful than $\BQP$ (the quantum analog of $\Ptime$). Quantum proofs of proximity offer a natural setting to study analogous questions.
As we explain shortly, we show unconditionally that in the property testing setting, $\QMA$ protocols are exponentially more powerful
than both $\BQP$ and (classical) $\MA$ protocols for certain problems,
and thus that QMA proofs of proximity are ``larger than the sum of their parts'' (see \Cref{sec:algorithms}). We further show that they nevertheless have limited power, and investigate the complexity landscape surrounding QMAPs (see \Cref{sec:complexity-separations}).

\begin{remark}[Operators vs.\ states]
    \label{rem:operators-vs-states}
    The fact that the set of objects subject to testing are \emph{unitaries} is a definitional matter of importance. Property testing of quantum objects can be (roughly) divided into two classes: testing unitaries, as in this work, and testing \emph{states}. The complexity of quantum testers for the latter is generally quantified by the number of copies of the state that are necessary (see, e.g., \cite{MdW16}). Testing (pure) states can be seen as a special case of testing unitaries: a property $\Pi$ of $n$-qubit states defines a property $\Lambda$ of $n$-qubit unitaries $U$ such that (say) $U \ket{0^n} \in \Pi$ with the metric on unitaries induced by that on states. (One may also generalise testing mixed states similarly by considering quantum channels.)
    
    However, if $\Pi$ is a set of states and allowing for \emph{entanglement between the state being tested and that given as proof} yields a markedly different model; indeed, its study may shed light on the \emph{quantum PCP conjecture} \cite{AAV13}.
\end{remark}

\subsection{Our results}
\label{sec:results}

Our main results are divided into two parts: \Cref{sec:complexity-separations} charts fundamental  aspects of the complexity landscape surrounding quantum proofs of proximity, while \Cref{sec:algorithms} covers algorithmic results, where we show sufficient conditions for properties to admit efficient QMAP protocols.

We write $\QMAP(\eps, p, q)$ for the class of $\eps$-testable properties by a QMA proof of proximity protocol with proof length $p$ and query complexity $q$ (non-calligraphic acronyms to refer to algorithms and protocols, while calligraphic letters denote complexity classes).

\subsubsection{Complexity separations}
\label{sec:complexity-separations}

Our first collection of results aims to chart the complexity landscape of quantum proofs of proximity. Recall that $\QMAP(\eps,p,q)$ is the class of $\eps$-testable properties by a QMAP with proof complexity $p$ and query complexity $q$. The classes $\MAP$ (MA proofs of proximity) and $\IPP$ (interactive proofs of proximity) are defined analogously. $\PT(\eps,q)$ and $\QPT(\eps,q)$ are the properties admitting classical and quantum $\eps$-testers with query complexity $q$, respectively, and $\QCMAP(\eps,p,q)$ is the restriction of $\QMAP(\eps,p,q)$ where the proofs are classical bit strings. Formal definitions of all of these classes can be found in \Cref{sec:prelims} and \Cref{sec:qmap-def}.

We write complexity classes with the parameters omitted (e.g., $\QMAP$) to denote the corresponding class of properties such that for some proximity parameter $\eps \in (0, 1)$ that is a universal constant, there is a protocol with proof and query complexities bounded by $\polylog(n)$.

We begin by showing the existence of a property that admits efficient QMAPs, yet neither quantum property testers nor MAPs can efficiently test it.

\begin{inftheorem}
    \label{infthm:qmap-vs-qpt-map}
    There exists a property $\Pi$ such that, for any small enough constant $\eps > 0$,
    \begin{align*}
    	\Pi &\in \QMAP(\eps, \log n, O(1)) \text{ and}\\
    	\Pi &\notin \QPT(\eps, o(n^{0.49}) ) \cup \MAP(\eps, p, q)
    \end{align*}
    when $p \cdot q = o(n^{1/4})$. In particular,
    \begin{equation*}
    	\QMAP \not \subseteq \QPT \cup \MAP\;.
    \end{equation*}
\end{inftheorem}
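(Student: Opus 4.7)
The plan is to construct an explicit property $\Pi$ tailored to exhibit all three separations simultaneously, then establish the upper bound and the two lower bounds via separate arguments. The strategy has three parts: (1) design $\Pi$ so that a short quantum proof enables testing with $O(1)$ queries; (2) establish the quantum testing lower bound by extending the Blais--Brody--Matulef (BBM) reduction methodology to the quantum regime and invoking a quantum communication complexity lower bound; and (3) establish the $\MAP$ lower bound analogously via Merlin--Arthur communication complexity.

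For the QMAP upper bound, I would define $\Pi$ to have a natural witness structure: inputs in $\Pi$ are consistent with some compact hidden parameter $w$ (for instance, a $\log n$-bit label, index, or pointer) and a verifier holding $\ket{w}$ in superposition can check consistency with $O(1)$ queries, e.g.\ via controlled-oracle calls and phase-kickback-based interference. Completeness is immediate from the honest prover sending $\ket{w}$, and soundness follows because any input $\eps$-far from $\Pi$ is inconsistent with every possible $w$, so the verifier rejects with constant probability regardless of the claimed proof (after mild amplification).

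The $\QPT$ lower bound is obtained by reducing from quantum communication complexity. Splitting the input between two parties, each classical query to the input can be simulated with $O(\log n)$ qubits of communication via standard oracle-simulation techniques, so a $\QPT$ algorithm using $q$ queries yields a two-party quantum communication protocol using $O(q \log n)$ qubits on the derived communication problem. The property $\Pi$ is chosen so that distinguishing YES from $\eps$-far instances embeds a problem with an $\Omega(\sqrt{n})$ quantum communication lower bound (a Disjointness-style problem via Razborov's pattern-matrix/polynomial method), yielding the $n^{0.49}$ bound after accounting for the polylogarithmic overhead.

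For the $\MAP$ lower bound, the same style of reduction transforms a $\MAP$ with proof length $p$ and query complexity $q$ into a classical MA communication protocol with proof length $p$ and communication $O(q \log n)$. The $p \cdot q = \Omega(n^{1/4})$ trade-off then follows from an MA communication lower bound for the embedded problem (in the spirit of Klauck's and Aaronson--Wigderson's trade-offs, where MA protocols for Disjointness require $\text{proof} \cdot \text{communication} = \tilde\Omega(n)$), combined with the parameter blow-up in the property-level encoding. The main obstacle is the \emph{quantum extension of BBM}: classical queries translate cleanly to a constant number of communicated bits, but quantum oracle queries act on superpositions, so the reduction must carefully track the coherence of Alice and Bob's joint state across queries and ensure the simulation preserves the tester's bias; this is precisely the technical contribution advertised in the abstract. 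A secondary difficulty is engineering a single $\Pi$ that simultaneously admits a $\log n$-qubit witness \emph{and} embeds a problem hard for both quantum testers and $\MAP$ protocols, which requires threading a narrow needle between ``quantumly certifiable'' and ``classically and query-wise hard.''
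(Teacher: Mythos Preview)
Your proposal diverges from the paper's proof in its central architecture, and the divergence creates a real gap.

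The paper does \emph{not} construct a single property that is simultaneously hard for $\QPT$ and $\MAP$ yet easy for $\QMAP$. Instead, it takes the \emph{product} $\Pi = \Pi_B \times \Pi_F$ of two properties with complementary profiles: $\Pi_B$ (non-Booleanity of encoded messages) is easy for $\MAP$ but hard for $\QPT$, while $\Pi_F$ (Forrelation) is easy for $\QPT$ but hard for $\MAP$. The QCMAP verifier for the product runs the MAP for $\Pi_B$ on the first half and the quantum tester for $\Pi_F$ on the second; each lower bound is inherited by fixing the other coordinate. This sidesteps exactly the ``narrow needle'' you acknowledge, and it is not incidental: if your single $\Pi$ embeds a Disjointness-type problem so that both the $\QPT$ and the $\MAP$ lower bounds follow from communication complexity, then the very same reduction would turn your claimed $\QMAP(\eps,\log n,O(1))$ protocol into a QMA communication protocol with $O(\log n)$ proof and $O(\log n)$ communication for that Disjointness instance, which is not available. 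You have not explained how the property escapes this tension.

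Two further points. First, the $\MAP$ lower bound in the paper is \emph{not} obtained via MA communication complexity at all; it goes through Forrelation, the query-model containment $\MA \subseteq \mathcal{BPP}_{\text{path}}$ (with the overhead $q \mapsto O(pq)$), and an existing $\Omega(n^{1/4})$ $\mathcal{BPP}_{\text{path}}$ lower bound for Forrelation. Your proposed MA-communication route is a genuinely different argument, and you would need to verify that it yields the stated $pq = \Omega(n^{1/4})$ tradeoff after the blowup from the encoding. Second, your claim that a quantum query can be simulated with $O(\log n)$ qubits of communication ``via standard oracle-simulation techniques'' is precisely the nontrivial step: naively, simulating a quantum query requires shipping the entire $n$-dimensional query register, incurring linear overhead. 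The paper's observation is that this drops to $O(\log n)$ only when the property is a subset of a \emph{linear code of polynomial blocklength}, so that Alice and Bob can each apply their half of the encoding locally. Treating this as standard obscures the main technical content of the $\QPT$ lower bound and leaves your construction of $\Pi$ underspecified.
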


\Cref{infthm:qmap-vs-qpt-map} is, in fact, implied by a stronger result. We show that, for certain properties, MAPs are stronger than quantum testers, i.e. $\MAP \not \subseteq \QPT$ (\Cref{thm:map-vs-qpt}); and, for others, quantum testers are stronger than MAPs, i.e. $\QPT \not \subseteq \MAP$ (\Cref{thm:qpt-vs-map}). Combining these results, we conclude that $\QCMAP \not \subseteq \QPT \cup \MAP$, i.e., even QMAPs with classical proofs suffice to obtain an exponential speedup over both MAPs and quantum testers.

Having shown the aforementioned separation, a natural question poses itself: are there cases in which a \emph{quantum proof} cannot be substituted for a classical one? We observe that a straightforward adaptation of a known result shows this is indeed the case, albeit \emph{only for subconstant proximity parameters}: the $\QMA$ vs.\ $\QCMA$ oracle separation of Aaronson and Kuperberg \cite{AK07} carries over to the property testing setting, implying $\QMAP(1/\sqrt{n}, \log n, 1) \not \subseteq \QCMAP(1/\sqrt{n}, p, q)$ if $\sqrt{p}q = o(\sqrt{n})$ (see \Cref{sec:qmap-vs-qcmap} for details).

We then shift to proving \emph{limitations} on the algorithmic power of QMAPs, showing that there exist explicit properties that are extremely difficult for such protocols.
First, we observe that known lower bounds on the complexity of $\QMA$ protocols for the \emph{Permutation Testing} problem
\cite{A12, ST19} yield an explicit property that does
not have any QMA protocol of polylogarithmic proof length and query complexity, and in fact establishes that
$\IPP \not \subseteq \QMAP$ (see \Cref{sec:ipp-vs-qmap} for details). We thus obtain the complexity landscape shown in \Cref{fig:complexity}.

Finally, we present an entire class of properties that cannot be solved by efficient QMAP protocols.
This extends and simplifies one of the main results of \cite{FGL14}, which obtains the same result for classical MAPs.

\begin{inftheorem}[\Cref{cor:kwise}, informally stated]
    \label{infthm:kwise}
    If a non-trivial property $\Pi$ is $k$-wise independent and $\eps = \Omega(1)$ is sufficiently small, then $\Pi \notin \QMAP(\eps, p, q)$ when $pq = o(k)$.
\end{inftheorem}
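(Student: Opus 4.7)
\textbf{(Plan for \Cref{infthm:kwise}.)} The plan is to extend the classical MAP lower bound of \cite{FGL14} to the quantum setting by adapting the polynomial method for QMA query complexity, in the spirit of Aaronson and Shi's lower bounds for permutation testing.

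First, I would set up a pair of hard distributions $(D_{\mathrm{yes}}, D_{\mathrm{no}})$ with matching low-order moments. Since $\Pi$ is $k$-wise independent, there exists a distribution $D_{\mathrm{yes}}$ supported on $\Pi$ whose marginals on any $k$ coordinates are uniform on $\set{0,1}^k$. For $D_{\mathrm{no}}$ I would take the uniform distribution on $\set{0,1}^n$ conditioned on being $\eps$-far from $\Pi$; non-triviality of $\Pi$ ensures that, for small enough constant $\eps$, this conditioning event has constant probability, so the $k$-wise marginals of $D_{\mathrm{no}}$ agree with those of $D_{\mathrm{yes}}$ up to negligible error. Consequently, the expectations of every multilinear polynomial of degree at most $k$ in the input bits coincide (up to that error) under the two distributions.

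Second, I would argue that for any QMAP verifier $V$ with $p$-qubit proof and $q$ queries, after Marriott--Watrous amplification to completeness $1 - 2^{-\Omega(p)}$ and soundness $2^{-\Omega(p)}$ (at the cost of a factor $O(p)$ in the query complexity, leaving $p$ unchanged), the optimal acceptance probability $f(x) := \max_{\ket{\psi}} \Pr[V^x(\ket{\psi}) = 1]$ is trapped between polynomials of degree $O(pq)$. Writing $f(x) = \lambda_{\max}(M(x))$ for the $2^p \times 2^p$ PSD acceptance operator $M(x)$ on the proof register, each entry of $M(x)$ is a polynomial of degree $O(pq)$ in $x$ by the standard Beals--Buhrman--Cleve--Mosca--de Wolf argument. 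The matrix inequality
\begin{equation*}
\lambda_{\max}(M(x))^t \,\leq\, \tr\!\left(M(x)^t\right) \,\leq\, 2^p \cdot \lambda_{\max}(M(x))^t,
\end{equation*}
valid for every $t \in \mathbb{N}$, then shows that $f(x)^t$ is within a factor of $2^p$ of the scalar polynomial $\tr(M(x)^t)$, which has degree $O(pqt)$.

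Third, I would combine the two ingredients by choosing $t$ to be a sufficiently large constant. If $pq = o(k)$, then the degree $O(pqt)$ stays below $k$, so the matching marginals give $\E_{D_{\mathrm{yes}}}[\tr(M(x)^t)] = \E_{D_{\mathrm{no}}}[\tr(M(x)^t)]$ up to negligible error. Via the sandwich, this forces $\E_{D_{\mathrm{yes}}}[f(x)^t] \leq 2^p \cdot \E_{D_{\mathrm{no}}}[f(x)^t]$. But completeness drives the left-hand side to $(1 - 2^{-\Omega(p)})^t = 1 - o(1)$, whereas soundness bounds the right-hand side by $2^p \cdot 2^{-\Omega(pt)} = o(1)$ once $t$ is a large enough constant, a contradiction.

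The main obstacle, and the place where the proof needs the most care, is in the third step: the amplification overhead and the exponent $t$ must be balanced so that the $2^p$ slack in the trace bound is overpowered by the exponentially small soundness, while the polynomial degree $O(pqt)$ stays safely below $k$. A secondary technicality is that $D_{\mathrm{no}}$ matches $D_{\mathrm{yes}}$'s marginals only approximately (via conditioning on being $\eps$-far), so a small error term must be absorbed into the final inequality; this is what forces $\eps$ to be a sufficiently small constant rather than an arbitrary one.
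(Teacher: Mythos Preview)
Your approach is correct in spirit and would go through, but it differs from the paper's in two places worth noting. First, the paper routes the argument through $\UPP$ and threshold degree: it shows (via Marriott--Watrous, as you do) that a QMAP with parameters $p,q$ yields a $\UPP$ tester of query complexity $O(pq)$, then proves that $\UPP$ query complexity equals threshold degree, and finally lower-bounds the threshold degree of the testing problem by $k$ using the Theorem of the Alternative. Your trace-power sandwich $\lambda_{\max}(M)^t \le \tr(M^t) \le 2^p \lambda_{\max}(M)^t$ is a direct substitute for this detour and is arguably more self-contained, though the threshold-degree formulation has the advantage of isolating a clean combinatorial quantity.

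Second, and more substantively, the paper's choice of $D_{\mathrm{no}}$ is different and stronger than yours. Rather than conditioning the uniform distribution on being $\eps$-far (which matches moments only approximately), the paper constructs a linear code $C$ that is simultaneously $\eps$-far from $\Pi$ and has dual distance $\Omega(n)$, hence is \emph{exactly} $k$-wise independent. This is obtained by a random-linear-code argument together with the Gilbert--Varshamov bound. The payoff is that the moment-matching is exact, so no $2^p \cdot \delta$ error term arises; in your version that term forces a sparsity assumption on $\Pi$ that depends on $p$ (you need $|N_\eps(\Pi)|/2^n = o(2^{-p})$), whereas the paper's construction needs only a fixed sparsity bound $|\Pi| < 2^{(1/4-H(\eps))n}$ independent of $p$. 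Your secondary technicality is therefore real and is exactly what the paper's code construction sidesteps.
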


\begin{figure}
	\centering
	\begin{tikzpicture}
	  \matrix (m) [matrix of math nodes,row sep=4em,column sep=4em,minimum width=2em]
	  {
	  	\IPP & ~ & \QMAP\\
	  	~ & ~ & \QCMAP\\
		\MAP & ~ & \QPT \\
		~ & \PT & ~ \\};
	  \path[-stealth,very thick]
	  	(m-4-2) edge [color=gray] node [below left, color=black] {\cite{GR18}} (m-3-1)
	  			edge [color=gray] node [below right, color=black] {\cite{BFNR08}} (m-3-3)
	  	(m-3-1) edge [color=red] node [above, color=black] {\Cref{infthm:qmap-vs-qpt-map}} (m-3-3)
	  			edge [color=red] node [left] {} (m-2-3)
	  			edge [color=gray] node [left, color=black] {\cite{GR18}} (m-1-1)
	  	(m-3-3) edge [color=red] node {} (m-3-1)
	  			edge [color=red] node {} (m-2-3)
	  	(m-2-3) edge [color=violet] node [right, color=black, align=left] {\Cref{thm:qmap-vs-qcmap},\\ implicit in \cite{AK07}} (m-1-3)
	  	(m-1-3) edge [color=red] node [above, color=black, align=center] {\Cref{thm:ipp-vs-qmap}, implied\\by \cite{GLR21, ST19}} (m-1-1);
	\end{tikzpicture}
	\caption{Classification of complexity classes. An arrow from $\mathcal{A}$ to $\mathcal{B}$ is present when there exists a property requiring $n^{\Omega(1)}$ proof length or query complexity by algorithms of $\mathcal{A}$ but only $\polylog n$ proof/query complexity by algorithms of $\mathcal{B}$ (coloured red or grey when with respect to a proximity parameter $\eps = \Omega(1)$ that is a universal constant; and violet when $\eps \leq 1/\sqrt{n}$). The grey arrows are previously known separations.}
\label{fig:complexity}
\end{figure}

\subsubsection{Algorithms}
\label{sec:algorithms}

We show two general classes of properties whose structure allows for efficient QMAP (QMA proof of proximity) protocols. Moreover, these protocols only require \emph{classical} proofs (though the verifier is quantum).\footnote{Equivalently, all of the results in this section rely on QCMAP protocols. They are thus slightly stronger, continuing to hold if we replace $\QMA$ by $\QCMA$ in the theorem statements.} The first class is comprised of what we call \emph{decomposable properties}, which generalise the ``parametrised concatenation properties'' introduced in \cite{GR18}.

Roughly speaking, a property $\Pi$ is $(k,s)$-decomposable if testing whether $x \in \Pi$ can be reduced, with the help of a message of length $s$ from the prover, to that of testing whether $x^{(i)} \in \Lambda^{(i)}$ for $k$ smaller strings $x^{(i)}$ and properties $\Lambda^{(i)}$ (see \Cref{def:decomposable}). Since there may be several decompositions of the same string, the prover's message is said to \emph{specify} a decomposition, i.e., mappings $x \mapsto x^{(i)}$ and $\Pi \mapsto \Lambda^{(i)}$.

\begin{inftheorem}[\Cref{thm:decomposable}, informally stated]
\label{infthm:decomposable}
	If a property $\Pi$ is $(k,s)$-decomposable into strings of length $m$, each of which is $\eps$-testable by a MAP protocol with proof complexity $p$ and query complexity $q = q(m,\eps) = m^\alpha/\eps^\beta$, then
	\begin{equation*}
		\Pi \in \QMAP(\eps, s + kp, q')\;,
	\end{equation*}
	where $q'$ is much smaller than $q$ for many parameter values $\alpha$ and $\beta$.
\end{inftheorem}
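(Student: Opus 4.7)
The plan is to design a QMAP verifier that leverages the $(k, s)$-decomposition to reduce testing $\Pi$ to testing each sub-property $\Lambda^{(i)}$ on its sub-instance $x^{(i)}$, then to use quantum amplitude amplification in order to avoid running all $k$ MAP sub-verifications sequentially.

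Concretely, the prover sends (i) a decomposition specification of length $s$, and (ii) for each $i \in [k]$, a MAP proof $\pi_i$ of length $p$ for the statement $x^{(i)} \in \Lambda^{(i)}$, for a total of $s + kp$ bits. Writing $V_i$ for the classical randomised MAP verifier associated with $\Lambda^{(i)}$, the QMAP verifier coherently implements the reversible map
\[
\ket{i}\ket{r}\ket{b} \;\mapsto\; \ket{i}\ket{r}\ket{b \oplus V_i(x^{(i)}, \pi_i; r)}
\]
using $q(m, \eps')$ queries to the input $x$, where $r$ is the MAP's random coin string carried as a quantum register and $\eps' = \Theta(\eps)$. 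It then runs amplitude amplification over the uniform superposition of $(i, r)$, using $V_i$'s output as the tested predicate, and accepts iff no rejecting pair is found.

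For the analysis, I would invoke a standard averaging argument: when $x$ is $\eps$-far from $\Pi$, any valid decomposition claimed by the prover satisfies $\sum_i d_H(x^{(i)}, \Lambda^{(i)}) \geq \eps \cdot km$, so by Markov an $\Omega(\eps)$-fraction of the sub-instances $x^{(i)}$ is $\eps' = \Omega(\eps)$-far from $\Lambda^{(i)}$. After amplifying each MAP sub-verifier's error to $O(\eps/C)$, for a large constant $C$, via $O(\log(1/\eps))$ internal repetitions of $V_i$, the rejection probability of $V_i$ averaged over uniform $(i, r)$ is $O(\eps/C)$ in the completeness case and $\Omega(\eps)$ in the soundness case, opening an $\Omega(\eps)$-sized gap. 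Quantum amplitude amplification distinguishes the two regimes with $O(1/\sqrt{\eps})$ invocations of $V_i$, yielding
\[
q' \;=\; O\!\left( q(m,\eps) \log(1/\eps) / \sqrt{\eps} \right) \;=\; O\!\left( m^\alpha \log(1/\eps) / \eps^{\beta + 1/2} \right).
\]
Writing $n = km$, this is $O((n/k)^\alpha / \eps^{\beta + 1/2})$ up to logs, which beats the ``directly test the full input'' benchmark $n^\alpha/\eps^\beta$ by a factor of $k^\alpha \sqrt{\eps}$, substantial whenever $k$ is a non-trivial polynomial in $n$ and $\eps$ is a constant (and, balanced appropriately against $s + kp$, recovers the $O(n^{2/3})$ QMAP for parity promised in the abstract).

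The main obstacle I anticipate is calibrating the MAP soundness amplification and the amplitude-amplification precision so that the completeness/soundness gap survives the averaging over sub-instances: an adversarial prover may choose proofs $\pi_i$ that push the rejection probability on ``good'' sub-instances slightly upward in an attempt to close the $\Omega(\eps)$ gap. Handling this requires amplifying each MAP verifier to error $o(\eps)$ and verifying that the resulting $O(\log(1/\eps))$ overhead does not erase the quantum speedup, as well as checking that exact-amplitude issues (the $\pi/3$ phase problem in Grover iteration) can be sidestepped by standard techniques. A secondary technical point is ensuring that the decomposition map $x \mapsto x^{(i)}$, read off from the prover's specification, as well as each MAP sub-verifier $V_i$ itself, admit coherent reversible implementations with no additional input-query overhead beyond their classical cost; this is standard but worth spelling out.
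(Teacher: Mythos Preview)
Your overall architecture---proof consisting of a specification plus $k$ sub-proofs, coherent sub-verification indexed by $(i,r)$, amplitude amplification to locate a rejecting pair---matches the paper's. The substantive difference is in how the unknown distribution of sub-distances $(\eps_1,\ldots,\eps_k)$ is handled. You take a single Markov cut (an $\Omega(\eps)$-fraction of indices is $\Omega(\eps)$-far), run every sub-MAP at a fixed proximity $\Theta(\eps)$, and amplitude-amplify once, which gives $q' = \tilde O(m^\alpha/\eps^{\beta+1/2})$. The paper instead uses \emph{precision sampling}: for each level $j \in [\lceil\log(1/\eps)\rceil+1]$ it runs the sub-MAP at proximity $2^{-j}$ and amplitude-amplifies with $\tilde O\bigl(\sqrt{(\log(1/\eps))/(2^j\eps)}\bigr)$ iterations, using the fact that for some $j$ an $\tilde\Omega(2^j\eps)$-fraction of indices is $2^{-j}$-far. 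Summing the costs over $j$ yields the sharper $q' = \tilde O(m^\alpha/\eps^{\max(1/2,\beta)})$, which coincides with your bound only at $\beta=0$ and beats it by a factor of $\eps^{-\min(1/2,\beta)}$ otherwise. The paper's proof in fact explicitly identifies your Markov-based route as the ``more naive strategy'' that precision sampling is brought in to improve.

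A second, smaller difference: the paper's formal theorem assumes the sub-MAPs are \emph{one-sided}, so in completeness the rejection probability over $(i,r)$ is exactly zero and vanilla amplitude amplification applies. Your two-sided variant---internal $O(\log(1/\eps))$-fold majority to push the completeness rejection rate below $\eps/C$, then a gap-based Grover argument---is sound, but it adds a genuine $\log(1/\eps)$ overhead and requires the unknown-$p$ variant of amplitude amplification together with a false-positive bound when the marked fraction is only $O(\eps/C)$. If you adopt the one-sidedness hypothesis as the paper does, that entire paragraph of your proposal disappears and the analysis becomes cleaner.
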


The second class of properties amenable to QMA proofs of proximity are those admitting \emph{one-sided} (classical) MAPs which do not receive a proximity parameter explicitly, but rather reject strings $\eps$-far from the property with probability that is a function of $\eps$. Such algorithms are called \emph{proximity-oblivious} MAPs and readily admit quantum speedups (via the technique of amplitude amplification; see \Cref{sec:po-map}).

\begin{inftheorem}
\label{infthm:po-map}
	If a property $\Pi$ admits a proximity-oblivious MAP protocol with proof complexity $p$ and query complexity $q$, which always accepts $x \in \Pi$ and rejects when $x$ is $\eps$-far from $\Pi$ with probability $\rho(\eps) > 0$, then
	\begin{equation*}
		\Pi \in \QMAP\left(\eps, p, O\left(\frac{q}{\sqrt{\rho(\eps)}}\right)\right)\;.
	\end{equation*}
\end{inftheorem}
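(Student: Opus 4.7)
The plan is to use amplitude amplification to obtain a quadratic speedup over the naive classical amplification of a proximity-oblivious MAP. The QMAP verifier will use, as its proof, the classical MAP proof $\pi$ of length $p$ (embedded as a computational-basis state), so the proof complexity is preserved. It then runs the MAP's decision procedure coherently: the $q$-query classical verifier is made reversible in the standard way and corresponds to a unitary $V_\pi$ that makes $q$ quantum queries to the input oracle and writes its verdict to a one-qubit output register.

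Let $\ket{\Psi_x} = V_\pi \ket{0}$ and let $P$ project onto the ``reject'' branch of the output register. By hypothesis the rejection probability $a_x := \lVert P \ket{\Psi_x} \rVert^2$ satisfies $a_x = 0$ whenever $x \in \Pi$ (perfect completeness) and $a_x \geq \rho(\eps)$ whenever $x$ is $\eps$-far from $\Pi$. I would then apply the amplitude amplification procedure of Brassard, H\o yer, Mosca and Tapp to $V_\pi$ with respect to $P$. Each iteration of the Grover operator $-V_\pi S_0 V_\pi^\dagger S_P$ invokes $V_\pi$ and $V_\pi^\dagger$ once, costing $O(q)$ queries, and after $O(1/\sqrt{\rho(\eps)})$ iterations the amplitude on $P$ is amplified from $\sqrt{\rho(\eps)}$ to a constant. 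Perfect completeness is maintained automatically: when $a_x = 0$ the initial state lies entirely in $\ker P$, which is pointwise fixed by every Grover iteration, so measuring the output register still returns ``accept'' with probability $1$.

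The step I expect to be the main obstacle is handling the fact that in the NO case we only know $a_x \geq \rho(\eps)$; for larger $a_x$ a fixed iteration count tuned to $\rho(\eps)$ could overshoot and reduce the success probability. Three standard remedies each resolve this without affecting the asymptotic complexity: (i) the randomised-iteration protocol of BHMT, which samples $k$ uniformly from $\{0, \dots, K-1\}$ with $K = \Theta(1/\sqrt{\rho(\eps)})$ and succeeds with constant probability for any $a_x \in [\rho(\eps),1]$; (ii) fixed-point amplitude amplification; or (iii) quantum amplitude estimation with enough precision to separate $0$ from $\rho(\eps)$. All three options use $O(1/\sqrt{\rho(\eps)})$ invocations of $V_\pi$ and all preserve the perfect-completeness guarantee, since each acts as the identity on states with no overlap with the rejecting subspace.

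Combining the proof length $p$ with the amplified query count $O(q/\sqrt{\rho(\eps)})$ yields $\Pi \in \QMAP(\eps, p, O(q/\sqrt{\rho(\eps)}))$, as required. Note that the proof is used classically throughout, so the construction in fact places $\Pi$ in the QCMAP analogue of the class on the right-hand side.
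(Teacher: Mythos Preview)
Your proposal is correct and follows essentially the same approach as the paper: fix the classical proof $\pi$, view the MAP verifier $V_\pi$ as a one-sided promise-problem algorithm, and apply amplitude amplification (as in \cite{BHMT02}) to boost the rejection probability from $\rho(\eps)$ to a constant at a cost of $O(1/\sqrt{\rho(\eps)})$ invocations of $V_\pi$. The paper's proof is slightly terser, deferring the ``unknown $a_x \ge \rho(\eps)$'' issue to the cited BHMT result rather than spelling out the randomised-iteration/fixed-point/estimation alternatives as you do, but the content is the same, and both arguments note that the result in fact lands in $\QCMAP$.
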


As applications of \Cref{infthm:decomposable}, we show:

\begin{corollary}
\label{cor:kmonotone}
	Let $\Pi_{k,[n]}$ denote the set of \emph{$k$-monotone} functions $f \colon [n] \to \bitset$, i.e., those which change from nondecreasing to nonincreasing and vice-versa at most $k-1$ times. For all $\eps \in (0,1)$,

	\begin{equation*}
		\Pi_{k,[n]} \in \QMAP\left(\eps, k \log n, \tilde{O}\left(\frac1{\eps}\right)\right)\;.
	\end{equation*}
\end{corollary}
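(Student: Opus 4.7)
The plan is to apply \Cref{thm:decomposable} to a natural decomposition of $\Pi_{k,[n]}$ into $k$ monotone pieces. By definition, $f \in \Pi_{k,[n]}$ if and only if $[n]$ partitions into $k$ consecutive intervals $I_1, \ldots, I_k$ on each of which $f$ is monotone, with directions alternating between nondecreasing and nonincreasing. This yields a $(k, s)$-decomposition with $s = O(k \log n)$: the prover's decomposition message specifies the $k-1$ breakpoints between intervals (each in $[n]$, costing $(k-1)\log n$ bits) together with the monotonicity direction of each interval ($k$ additional bits). The $i$-th substring is then $x^{(i)} := f|_{I_i}$, and the corresponding sub-property $\Lambda^{(i)}$ is the set of Boolean functions on an interval of length $|I_i|$ that are monotone in the specified direction.

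Next, I would exhibit a simple MAP for each sub-property $\Lambda^{(i)}$. A monotone Boolean function $g\colon [m] \to \bitset$ is a single threshold function, so the sub-prover sends the threshold $t \in [m]$ (using $\log m$ bits) and the verifier queries $O(1/\eps)$ uniformly random points, accepting iff each queried value is consistent with $t$. Completeness is immediate; for soundness, a function that is $\eps$-far from $\Lambda^{(i)}$ is in particular at Hamming distance $\geq \eps m$ from whichever threshold function the prover declares, so a random sample of $O(1/\eps)$ points detects an inconsistency with constant probability. Hence each sub-property is MAP-testable with proof complexity $p = O(\log m)$ and query complexity $q(m,\eps) = O(1/\eps)$, i.e., with $\alpha = 0$ and $\beta = 1$ in the parametrisation of \Cref{thm:decomposable}.

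Instantiating \Cref{thm:decomposable} with these parameters then gives a QMAP for $\Pi_{k,[n]}$ with proof complexity $s + kp = O(k \log n)$ and query complexity $q' = \tilde{O}(1/\eps)$, as claimed. The main obstacle --- really the only nontrivial point in the argument --- is to check that the theorem's promised $q'$ collapses, in this $(\alpha,\beta)=(0,1)$ regime, to something independent of $k$ up to polylogarithmic factors, rather than scaling as $k \cdot q$ (which would be the naive cost of testing each piece separately). Intuitively this should hold because the Hamming distance from $f$ to $\Pi_{k,[n]}$ distributes additively across the pieces declared by the prover, so that a global sample of $\tilde O(1/\eps)$ points from $[n]$, each routed to the sub-property tester of the interval containing it, already witnesses some violation with high probability whenever $f$ is $\eps$-far from $k$-monotone.
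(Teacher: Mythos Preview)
Your approach is correct and matches the paper's: show $\Pi_{k,[n]}$ is succinctly $k$-decomposable into monotone pieces on the line, then apply \Cref{thm:decomposable} with $(\alpha,\beta)=(0,1)$. The only real difference is the sub-algorithm for each piece. The paper invokes the standard $O(1/\eps)$-query monotonicity \emph{tester} on $[m]$ (so the per-piece proof is empty, $p=0$), whereas you use a threshold MAP in which the prover names the step location and the verifier samples $O(1/\eps)$ random points. Your variant is more elementary --- the line tester relies on a non-trivial binary-search analysis --- at the cost of an extra $O(\log m)$ proof bits per piece, which is absorbed into the $O(k\log n)$ total anyway. (The $k$ direction bits you include are in fact redundant: once the breakpoints are fixed, the directions alternate and are determined by the parity of $i$, as in the paper's decomposition.)

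As for the hesitation in your final paragraph: there is nothing left to check. Plugging $(\alpha,\beta)=(0,1)$ and $b=1$ into the second case of \Cref{thm:decomposable} gives $q' = \tilde O\big(\min\{\sqrt{m/\eps},\,1/\eps\}\big)$ directly, and the second term dominates whenever $\eps \geq 1/m$, so the bound is $\tilde O(1/\eps)$ with no dependence on $k$ or $m$. Your ``global sample routed to sub-testers'' picture is not the mechanism the theorem actually uses --- it proceeds via precision sampling over distance scales $2^{-j}$ combined with amplitude amplification --- but the conclusion you need falls out of the theorem statement as written.
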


\begin{corollary}[\Cref{cor:eulerian}, informally stated]
\label{infcor:eulerian}
  For every $k \in [n]$, the property of Eulerian graph orientations of $K_{2,n-2}$ is $\eps$-testable by a QMAP protocol with proof complexity $\tilde{O}(k)$ and query complexity $\tilde{O} \left(\frac{n}{k\sqrt{\eps}}\right)$.
\end{corollary}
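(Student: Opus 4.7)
The plan is to reduce the Eulerian property to a balanced-string property and then apply the decomposable-property framework of \Cref{infthm:decomposable}. The structural observation is that every vertex on the size-$(n-2)$ side of $K_{2,n-2}$ has degree $2$, so any Eulerian orientation orients exactly one of its two incident edges inward; encoding this local choice by a single bit $b_i \in \bitset$ puts Eulerian orientations in bijection with strings in $\bitset^{n-2}$. Under this identification, the Eulerian condition at the two degree-$(n-2)$ vertices collapses to the single global constraint $\sum_{i=1}^{n-2} b_i = (n-2)/2$, and Hamming distance is preserved up to a constant factor. Thus $\eps$-testing Eulerian orientations of $K_{2,n-2}$ reduces to $\eps$-testing the balanced strings in $\bitset^{n-2}$.

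To invoke \Cref{infthm:decomposable}, I partition $[n-2]$ into $k$ consecutive blocks of length $m = (n-2)/k$. The prover specifies the decomposition by sending claimed block-weights $a_1, \dots, a_k \in \{0, 1, \dots, m\}$, using $O(k\log m) = \tilde{O}(k)$ bits in total; the verifier first checks, with zero queries, that $\sum_i a_i = (n-2)/2$, rejecting otherwise. The $i$-th sub-property $\Lambda^{(i)} \subseteq \bitset^m$ consists of all $m$-bit strings of Hamming weight exactly $a_i$, and by construction a string satisfies every $\Lambda^{(i)}$ if and only if it is balanced. Each $\Lambda^{(i)}$ admits a trivial zero-error test that reads the whole block and compares its weight to $a_i$, giving a sub-MAP with proof length $0$ and query complexity $q(m,\eps) = m$ (i.e.\ $\alpha = 1$ and $\beta = 0$ in the language of \Cref{infthm:decomposable}).

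The soundness analysis is an averaging step: writing $w_i(x)$ for the Hamming weight of the $i$-th block of the input $x$, if $x$ is $\eps$-far from balanced then for any $(a_i)$ with $\sum_i a_i = (n-2)/2$,
\begin{equation*}
    \sum_{i=1}^k \bigl|w_i(x) - a_i\bigr| \;\geq\; \Bigl|\sum_i w_i(x) - (n-2)/2\Bigr| \;\geq\; \eps(n-2) \;=\; \eps k m\;,
\end{equation*}
so at least an $\eps/2$ fraction of blocks are $\Omega(\eps)$-far from their $\Lambda^{(i)}$. Plugging this sub-MAP into \Cref{infthm:decomposable} yields the advertised proof complexity $\tilde{O}(k)$ and query complexity $\tilde{O}(m/\sqrt{\eps}) = \tilde{O}(n/(k\sqrt{\eps}))$; the $1/\sqrt{\eps}$ saving is the amplitude-amplification speedup over the $\Omega(\eps)$-dense set of bad blocks, with the per-block check serving as a coherent indicator subroutine.

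I expect the main obstacle to be bookkeeping rather than conceptual: one must pin down the input encoding of orientations so that Hamming distance on edges agrees, up to constants, with Hamming distance on the induced $b$-string, verify that the trivial read-the-block tester formally fits the $q(m,\eps) = m^{\alpha}/\eps^\beta$ template of \Cref{infthm:decomposable}, and track the polylogarithmic factors hidden in $\tilde{O}(\cdot)$ coming from success amplification and the $O(\log m)$ bits per claimed block-weight.
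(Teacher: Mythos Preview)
Your approach matches the paper's: both apply \Cref{infthm:decomposable} to the decomposition from \cite{GR18} with the trivial per-block tester ($\alpha=1$, $\beta=0$), the only difference being that you reconstruct that decomposition rather than cite it.

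Two points to tighten. First, the bijection with $b\in\bitset^{n-2}$ only applies to orientations that are already locally Eulerian at every degree-$2$ vertex; an arbitrary orientation may have both edges at some $v_i$ pointing the same way, and then there is no $b_i$ to extract and no distance-preservation statement to make. The fix is to define each $\Lambda^{(i)}$ on the $2m$ edges incident to the block (not on an $m$-bit $b$-string) and to include in it the local in-degree $=$ out-degree condition at each $v_j$ of the block together with the weight constraint; after this change your averaging argument goes through verbatim on the edge encoding. Second, the sentence ``at least an $\eps/2$ fraction of blocks are $\Omega(\eps)$-far'' does not follow from $\sum_i|w_i-a_i|\geq\eps(n-2)$ (a single block can absorb the entire deficit) and is false in general---but it is also unnecessary, since \Cref{infthm:decomposable} only requires $\E_{i}[\eps_i]=\Omega(\eps)$, which your inequality chain already establishes; the theorem itself handles the unknown distribution of the $\eps_i$ via precision sampling rather than a density-of-bad-blocks argument.
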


Applying \Cref{infthm:po-map}, we also show the following:

\begin{corollary}[\Cref{cor:robp,cor:cfl}, informally stated]
\label{cor:robp-cfl}
	For every $k \in [n]$, acceptance by read-once branching programs and membership in context-free languages are both $\eps$-testable by QMAP protocols with proof complexity $\tilde{O}(k)$ and query complexity $O\left(\frac{n}{k \sqrt{\eps}}\right)$.
\end{corollary}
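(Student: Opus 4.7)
The plan is to invoke \Cref{infthm:po-map} on appropriate proximity-oblivious MAPs for each property, so the work is to exhibit such classical protocols and verify that their rejection probability on $\eps$-far inputs scales as $\rho(\eps) = \Omega(\eps)$. This is exactly the regime that yields the advertised $\sqrt{\eps}$ quantum speedup: combining a classical MAP with $\tilde{O}(k)$ proof length, $O(n/k)$ queries and rejection probability $\Omega(\eps)$ gives a QMAP with query complexity $O(q/\sqrt{\rho(\eps)}) = O(n/(k\sqrt{\eps}))$.

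For read-once branching programs, I would use the natural ``checkpoint'' MAP. Partition the input coordinates (in the order read by the ROBP) into $k$ contiguous blocks of length $n/k$; the prover sends the purported state of the ROBP at each of the $k{-}1$ block boundaries, using $O(\log|Q|) = \tilde{O}(1)$ bits per state for a total of $\tilde{O}(k)$ bits. The verifier samples a block uniformly at random, reads the $n/k$ input bits in that block, simulates the ROBP starting from the claimed state at the left boundary, and rejects iff the endpoint does not match the claimed state on the right (and symmetrically for the final accept state). Completeness is perfect. For proximity obliviousness, I would argue that, for any transcript of claimed states, the number of blocks whose input bits must be edited to make the simulation consistent with the claimed transcript and the accept state is at least the distance from $x$ to the set of accepted inputs; hence at least $\eps n$ bits lie in ``bad'' blocks, so at least $\eps k$ blocks are bad and a uniformly chosen block is bad with probability $\rho(\eps) \geq \eps$. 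Plugging into \Cref{infthm:po-map} yields the claimed bounds.

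For context-free languages I would import the classical testers of Alon--Krivelevich--Newman--Szegedy and their MAP refinement (in the same spirit as the ROBP construction above): the prover sends a succinct skeleton of a purported parse tree restricted to $k$ boundary positions together with the nonterminals labeling each boundary, totalling $\tilde{O}(k)$ bits; the verifier samples one of the $k$ sub-instances induced by consecutive boundaries and checks that the restricted substring, together with the claimed nonterminal boundary data, admits a local parse. Completeness is again perfect, and an averaging argument identical to the ROBP case shows that if $x$ is $\eps$-far from the language then, against any prover message, an $\Omega(\eps)$ fraction of the $k$ sub-instances are locally unparseable, so $\rho(\eps) = \Omega(\eps)$. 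The quantum bound then follows from \Cref{infthm:po-map}.

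The main obstacle is the soundness analysis for the CFL protocol: unlike the ROBP case, where ``distance to accept'' reduces directly to per-block Hamming distance along a linear computation, CFL parsing is tree-structured, so making a local sub-instance parse may require changing bits in other blocks or changing the claimed boundary nonterminals rather than only bits within that block. I would handle this by charging each edit needed to repair $x$ to the unique leaf of the global parse tree it touches, grouping leaves by block, and then arguing that any prover strategy either admits a consistent global parse of a string within Hamming distance less than $\eps n$ of $x$ (contradicting $\eps$-farness) or leaves an $\Omega(\eps)$ fraction of blocks locally inconsistent; the second case is precisely what the verifier catches with probability $\Omega(\eps)$. Once this combinatorial claim is in place, the remainder of both corollaries is a direct invocation of \Cref{infthm:po-map}.
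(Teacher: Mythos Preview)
Your approach is exactly the paper's: invoke \Cref{infthm:po-map} on one-sided proximity-oblivious MAPs with detection probability $\rho(\eps)=\eps$, so that $q/\sqrt{\rho(\eps)}=O\!\left(\frac{n}{k\sqrt{\eps}}\right)$. The only difference is that the paper does not reconstruct these MAPs; it simply quotes \cite{GGR18} (their Lemma~3.1 for ROBPs and Lemma~4.5 for CFLs), which already supply one-sided proximity-oblivious MAPs with proof length $O(k\log s)$ (resp.\ $O(k\log n)$), query complexity $n/k$, and detection probability $\eps$, and the corollaries follow in one line.

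Your ROBP sketch is essentially the \cite{GGR18} protocol, but one phrase is not quite right: a general read-once branching program need not read the coordinates in a single fixed order (different source-to-sink paths may query variables in different orders), so ``partition the input coordinates in the order read by the ROBP'' is ill-defined. The actual construction has the prover send $k$ \emph{vertices} along its accepting computation path, and the verifier simulates a random segment between consecutive claimed vertices; the soundness argument then goes through as you describe. For CFLs, your outline is in the right spirit, but since the combinatorial soundness claim you flag is precisely what \cite{GGR18} proves, you can (and the paper does) simply cite it rather than redo the analysis.
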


\Cref{infcor:eulerian,cor:robp-cfl} achieve a dependence on $\eps$ that is quadratically better as compared to the best known MAPs \cite{GR18, GGR18}, while \Cref{cor:kmonotone} is more efficient than the best known (classical) testers and MAPs for a wide range of parameters (see \Cref{sec:k-monotonicity}).

Classically, casting \emph{exact decision} problems in the framework of proofs of proximity (i.e., testing with respect to proximity parameter $\eps = 1/n$) is completely trivial except for degenerate cases, as most functions of sublinear query complexity are extremely simple.
 Rather surprisingly, this is \emph{not} the case quantumly, and indeed setting $\eps = 1/n$ in \Cref{cor:kmonotone,infcor:eulerian,cor:robp-cfl} yields sublinear algorithms for the corresponding exact decision problems. For \emph{layered} branching programs, we also prove the following, which improves on the parameters of \Cref{cor:robp-cfl} and lifts the read-once restriction:

\begin{inftheorem}
\label{thm:bp}
	There exists a QMA protocol for acceptance of $n$-bit strings by layered branching programs of width $w = w(n)$ and length $\ell = \ell(n)$ with query complexity $O(\ell^{2/3})$ and proof complexity $O(\ell^{2/3} \log w)$.
\end{inftheorem}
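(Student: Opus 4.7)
The plan is to apply the standard ``checkpoint'' approach to verifying sequential computations, and then amplify its efficiency using Grover search in the quantum setting. Concretely, I would have the verifier split the $\ell$ layers of the branching program into $k$ consecutive blocks of length $\ell/k$ each. The prover's classical proof consists of the names of $k-1$ intermediate vertices $v_1, v_2, \ldots, v_{k-1}$, one at each block boundary, together with the claimed final vertex $v_k$; each requires $\log w$ bits, for a total proof length $O(k \log w)$. The verifier accepts iff (i)~$v_k$ is an accept state and (ii)~for every $i \in [k]$, block $i$ is \emph{consistent}, meaning that starting from $v_{i-1}$ (with $v_0$ the fixed start vertex) and traversing the $\ell/k$ edges dictated by the corresponding input bits of the block yields exactly $v_i$.

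Checking a single block is a deterministic procedure using $\ell/k$ queries: read the $\ell/k$ relevant input bits in sequence, update the current vertex via the (known) BP transition function, and compare the result to $v_i$. This can be implemented coherently, with the block index $i$ in a superposition register, giving a quantum oracle $O_f$ for the predicate $f(i) = 1$ iff block $i$ is \emph{in}consistent, at a cost of $O(\ell/k)$ input queries per call to $O_f$. The verifier then runs amplitude-amplification-based existence testing to decide whether $\exists i : f(i) = 1$, which needs $O(\sqrt{k})$ calls to $O_f$ and hence $O(\sqrt{k} \cdot \ell/k) = O(\ell/\sqrt{k})$ input queries in total. Balancing proof length $k \log w$ against query complexity $\ell/\sqrt{k}$ by setting $k = \ell^{2/3}$ yields exactly the advertised bounds: $O(\ell^{2/3})$ queries and $O(\ell^{2/3} \log w)$ proof bits.

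For completeness, when the BP accepts the input, the honest prover sends the true sequence of vertices traversed; every block is consistent, $O_f$ has no marked elements, and the existence test outputs ``no'' with high probability, so the verifier accepts. For soundness, suppose the BP rejects. Because the BP is deterministic and the start vertex is fixed, the only vertex sequence for which \emph{all} blocks are consistent is the true computation path, whose terminal vertex is non-accepting; hence any prover message that passes condition (i) must fail condition (ii) on at least one block, i.e.\ $O_f$ has at least one marked element, and the existence test catches this with constant probability.

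The main subtlety, rather than any routine Grover analysis, is handling the two-sided nature of existence testing cleanly so that both completeness and soundness errors are pushed below the desired constant; this is achieved by standard amplification of the amplitude-estimation-based existence tester, whose constant-factor overhead is absorbed into the $O(\cdot)$. A minor bookkeeping point is that the Grover oracle must treat the acceptance check of $v_k$ and the boundary consistency of each block uniformly; this is done by folding the acceptance check into ``block $k+1$'' so the verifier only runs a single existence test. No properties of the input beyond oracle access to its bits are used, so the argument applies to arbitrary layered (not necessarily read-once) branching programs, matching the statement of the theorem.
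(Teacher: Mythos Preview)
Your proposal is correct and essentially identical to the paper's argument: the paper shows that $A_B$ is $(k,O(k\log w))$-partitionable via exactly the checkpoint decomposition you describe (Proposition~\ref{prop:bp}), then invokes the exact-decision case of Theorem~\ref{thm:decomposable}, which amounts precisely to amplitude amplification over the $k$ blocks with the trivial $\ell/k$-query tester for each, and sets $k=\ell^{2/3}$. The only difference is packaging---the paper routes through its general decomposability framework while you write out the instance directly---and your concern about two-sided error is unnecessary, since amplitude amplification applied to a one-sided subroutine (each block check always accepts a correct checkpoint) yields a one-sided verifier.
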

For details on decomposability and its implications, see \Cref{sec:decomposable}.
Finally, we prove that QMAP protocols are useful beyond proximity-oblivious and decomposable properties. The problem of testing bipartiteness of a graph does not fit either class, yet admits an efficient protocol nonetheless (see \Cref{sec:bipartite}).

\begin{inftheorem}[\Cref{thm:bipartite}, informally stated]
\label{infthm:bipartite}
	Bipartiteness for \emph{rapidly-mixing} $n$-vertex graphs (in the bounded-degree model) is $\eps$-testable by a QMAP protocol with proof complexity $\tilde{O}(\sqrt{n})$ and query complexity $\tilde{O}(n^{1/3}/\eps^{5/6})$.\footnote{We remark that this test assumes the rapidly-mixing promise for both yes- and no-instances.}
\end{inftheorem}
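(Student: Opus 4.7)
The plan is to combine the classical Goldreich--Ron bipartiteness tester with a quantum collision-finding subroutine, using the QMA proof to compress the relevant bipartition witness. Recall that, in the rapidly-mixing regime, that tester picks a uniformly random starting vertex $s$, performs several random walks of length $O(\log n)$ from $s$, and declares non-bipartiteness upon observing two walks that end at the same vertex with opposite parities. By the rapid-mixing promise, such colliding walks are abundant when $G$ is $\eps$-far from bipartite and are absent when $G$ is bipartite.

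First I would have the prover send a purported bipartition restricted to the ``effective neighbourhood'' of $s$: concretely, a description of a set $W$ of $\tilde O(\sqrt n)$ vertices together with a claimed $2$-colouring $\chi : W \to \bitset$. Because the endpoint distribution of $O(\log n)$-length walks from $s$ is close to uniform, a birthday-style argument ensures that the walks the verifier could plausibly inspect concentrate on $\tilde O(\sqrt n)$ vertices, so $\chi$ can be transmitted in $\tilde O(\sqrt n)$ classical bits. With this witness in hand, the verifier is left with a single well-defined search task: find a walk from $s$ whose endpoint parity disagrees with $\chi$.

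Next I would instantiate this search via a quantum walk in the Szegedy/Ambainis style, viewing each walk as a sample of the form (endpoint, parity) and applying an element-distinctness-type routine to the $\chi$-labelled walk tuples. This should locate a $\chi$-inconsistent walk in $\tilde O(n^{1/3})$ quantum samples, each costing $O(\log n)$ queries to the graph oracle. The $\eps^{-5/6}$ factor should then emerge from interpolating amplitude amplification over $\eps$-farness witnesses (contributing a factor of $\eps^{-1/2}$) with the $\eps^{-1/3}$ cost incurred when the density of colliding walks scales with $\eps$, since $\eps^{-1/2} \cdot \eps^{-1/3} = \eps^{-5/6}$. Completeness is immediate: when $G$ is bipartite, the prover simply transmits the restriction of the true bipartition to $W$, so the verifier never finds an inconsistent walk.

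The main obstacle is soundness, since the prover (rather than Nature) chooses $\chi$: I must argue that no colouring of $W$ is consistent with more than a $(1 - \Omega(\eps))$-fraction of walks from an average choice of $s$. The standard route is to show that every bipartition of $V$ leaves an $\Omega(\eps)$-fraction of edges monochromatic (by the definition of $\eps$-farness in the bounded-degree model) and then to translate this edge-level defect into a comparable fraction of parity-inconsistent walks, using the rapid-mixing promise together with a careful choice of walk length to control the error introduced by the approximate uniformity of the endpoint distributions. Once this soundness bound is in hand, the claimed $\tilde O(\sqrt n)$ proof length and $\tilde O(n^{1/3}/\eps^{5/6})$ query complexity follow by combining the description length of $(W,\chi)$ with the quantum search bound above.
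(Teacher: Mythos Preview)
Your proposal has a protocol-ordering problem and a misformulated verification task that together break the argument. The proof in a QMAP is sent \emph{before} the verifier samples $s$, so the prover cannot tailor $W$ to any ``effective neighbourhood'' of $s$; moreover, the rapid-mixing promise says the endpoint of a length-$O(\log n)$ walk is nearly uniform on $[n]$, so walks do not concentrate on $\tilde O(\sqrt n)$ vertices---the birthday bound concerns \emph{collisions among pairs} of walks, not the support of the endpoint distribution. Worse, the task ``find a walk whose endpoint parity disagrees with $\chi$'' is ill-posed: the walk parity only tells you whether the endpoint lies on the same side as $s$ or the opposite side, and the verifier does not know which side $s$ is on (nor can the prover tell him, not knowing $s$). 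So a single walk gives nothing to compare against a fixed $2$-colouring, and the task you describe is a Grover-type search, not the collision problem that would justify invoking an element-distinctness routine.

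The paper's protocol avoids both issues by using a different proof and a different task. The proof is simply a set $S$ of $k$ vertices allegedly all on \emph{one} side of a bipartition---no colouring, no dependence on $s$. After receiving $S$, the verifier samples $s$, takes $|T| = \tilde O\big(\tfrac{n}{k\eps}\big)$ lazy random walks from $s$, and searches for a \emph{pair} of walks both ending in $S$ but with opposite parities; if $G$ is bipartite and $S$ is monochromatic, no such pair exists regardless of which side $s$ is on. This is a genuine collision problem for the symmetric relation $\{((1,b),(1,1-b))\}$ on $\{0,1\}^2$, so Ambainis's algorithm finds the pair with $O(|T|^{2/3})$ walk-evaluations. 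Soundness is inherited from \cite{GR18,GR99}: when $G$ is $\eps$-far from bipartite, an $\Omega(\eps/\log n)$ fraction of starting vertices $s$ have the property that walks from $s$ hit $S$ with each parity with probability $\Omega(k\eps/(n\log n))$, so $|T|$ walks contain the needed collision with constant probability. Amplitude amplification over the choice of $s$ contributes the remaining $\eps^{-1/2}$ factor, and setting $k=\sqrt n$ yields the stated parameters.
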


\subsection{Technical overview}

In this section, we discuss the high-level ideas of the techniques used in the proofs of the results stated in \Cref{sec:results}. Our discussion is divided into lower bounds and algorithmic techniques.

In \Cref{sec:seps}, we introduce some of the lower bound techniques that we use in charting the complexity landscape of quantum proofs of proximity. En route, we extend the framework of Blais, Brody and Matulef \cite{BBM12} to show lower bounds for \emph{quantum} property testers. To the best of our knowledge, this is the first quantum testing lower bound proved via a reduction from quantum communication complexity, an open question raised by Montanaro and de Wolf \cite[Question 4]{MdW16}. In addition, we show how to prove lower bounds on QMAP algorithms via an argument about the threshold degree of Boolean functions.

In \Cref{sec:algs} we show how to construct quantum proofs of proximity for properties that can be decomposed into sub-problems, and we prove that these QMAP protocols outperform both quantum testers as well as classical proof of proximity protocols. Moreover, we give an overview of an efficient QMAP protocol for a natural property of bounded-degree graphs, \emph{bipartiteness}, which does not fall into the decomposability paradigm.

\subsubsection{Lower bounds}
\label{sec:seps}

In this section, we highlight two techniques that we exploit to prove complexity separations and limitations on QMAPs: (1) proving quantum testing lower bounds via reductions from quantum communication complexity \cite{BBM12}, which we use to show a separation between MAPs and quantum testers; and (2) proving QMAP lower bounds by studying the threshold degree of Boolean functions.

\paragraph{Quantum testing lower bounds via reductions from communication complexity.} The methodology of \cite{BBM12} has proven very successful for showing \emph{classical} property testing lower bounds. However, extending this methodology to the quantum setting poses an inherent difficulty that we expand upon next. Following the exposition of \cite{MdW16}, we illustrate the methodology and the difficulty in the quantum setting by considering the problem of testing whether a function $f \colon \bitset^n \to \bitset$ is $k$-linear, i.e., a Fourier character of weight $k$.

 We can obtain query complexity lower bounds on testers via a reduction from the randomised \emph{communication complexity} problem of disjointness, as follows. Recall that, in the disjointness problem, Alice receives $x \in \bitset^n$ and Bob receives $y \in \bitset^n$ (for lower bound purposes, we may assume without loss of generality that both bit strings are promised to have Hamming weight $k/2$ for some known $k \in [n]$), and their goal is to decide whether or not there exists an index $i\in[k]$ such that $x_i=y_i=1$, while communicating a minimal number of bits.

Suppose that there exists a property tester for $k$-linearity with query complexity $q$. We will use this tester to construct a communication complexity protocol for disjointness (i.e., deciding if, for every $i \in [n]$, either $x_i = 0$ or $y_i = 0$) as follows. First, Alice and Bob use shared randomness and simulate the tester on the input $f$, interpreted as a function mapping $\bitset^n$ to $\bitset$ defined as $f(z) = \bigoplus_{i \in [n]} z_i \cdot (x_i \oplus y_i)$. To simulate a query $f(z)$, Alice computes $A(z) = \bigoplus_{i \in [n]} z_i \cdot x_i$ and sends it to Bob, while Bob computes $B(z) = \bigoplus_{i \in [n]} z_i \cdot y_i$ and sends it to Alice. Since $f(z) = A(z) \oplus B(z)$, each query to $f$ incurs 2 bits of communication. Moreover, if $x$ and $y$ are disjoint, then $f$ is $k$-linear; and if they are not disjoint, $f$ is $\ell$-linear for some $\ell < k$, and is in particular $1/2$-far from every $k$-linear function. Therefore, the simulated tester indeed solves the communication problem, so that the $\Omega(k)$ lower bound for the latter implies an $\Omega(k)$ lower bound for testing $k$-linearity.

An attempt to extend this to \emph{quantum} testers, however, reveals a severe bottleneck in the reduction. Note that, classically, the fact that Alice and Bob can use shared randomness to fix a \emph{deterministic} tester to simulate is crucial: at every step, both parties know which query $z$ the tester will make next \emph{without} the need to communicate it. The problem is that there is no way to fix the ``quantumness'' using shared randomness. Details follow.

While disjointness is still hard in the quantum communication complexity model, communicating the query (which may be in a superposition) that the quantum tester requires will incur a \emph{linear} overhead, rendering the reduction useless. Namely, to simulate a query to $f$ \emph{in superposition}, the parties need to exchange all $n$ qubits at each round: Alice would apply the  unitary (on $n+1$ qubits) $\ket{z}\ket{b} \mapsto \ket{z}\ket{b \oplus A(z)}$, and send the $(n+1)$-qubit state to Bob, who applies $\ket{z}\ket{b \oplus A(z)} \mapsto \ket{z}\ket{b \oplus A(z) \oplus B(z)} = \ket{z}\ket{b \oplus f(z)}$ and returns them to Alice. Simulating a single query then requires the communication of $2n+2$ qubits, rather than the $2$ needed by a classical tester. Thus, the reduction can only prove a degenerate $\Omega(1)$ testing lower bound. This is, in fact, not suprising, since $k$-linearity is testable with $O(1)$ queries by the Bernstein-Vazirani  \cite{BFNR08} algorithm!

While the discussion above might suggest that communication complexity can only yield trivial quantum testing lower bounds, we show this is not the case; indeed, the absence of nontrivial (quantum) applications of the technique thus far points to a conceptual barrier that is clarified by a coding-theoretic perspective similar to \cite{G20}, which reveals that the linear overhead is not inherent to any quantum reduction. Observe that testing $k$-linearity is a special case of \emph{testing a subset of a code}: namely, a $k$-linear function $f$ where $f(z) = w \cdot z$ corresponds to the \emph{Hadamard encoding} of the string $w$ with Hamming weight $k$ (which maps $w \in \bitset^n$ into the codeword $C(w) = (w \cdot z : z \in \bitset^n)$ with blocklength $n' = 2^n$). Note that the aforementioned quantum simulation strategy is efficient, in the sense that it requires only $O(\log n')$ qubits to communicate a representation of the length-$n'$ encoding; the issue is the Hadamard code's exponential blocklength $n' = 2^n$, which renders the simulation's efficiency moot. As we see next, however, \emph{the same reduction} yields nontrivial bounds if we choose the code appropriately.

Given a linear code $C \colon \bitset^n \to \bitset^{n'}$ with $n'  = \poly(n)$, only $\log n' = O(\log n)$ qubits are necessary to represent $C(x)$ and $C(y)$. More precisely, Alice can apply the $O(\log n)$-qubit unitary $\ket{i}\ket{b} \mapsto \ket{i}\ket{b \oplus C(x)_i}$ and send all $O(\log n)$ qubits to Bob, who applies $\ket{i}\ket{b} \mapsto \ket{i}\ket{b \oplus C(y)_i}$ and returns them. This composition of unitaries is
\begin{equation*}
	\ket{i}\ket{b} \mapsto \ket{i}\ket{b \oplus C(x)_i \oplus C(y)_i} = \ket{i}\ket{b \oplus C(x \oplus y)_i} \:,
\end{equation*}
simulating a query with \emph{logarithmic}, rather than linear, overhead.

Therefore, the $\Omega(\sqrt{n})$ quantum communication lower bound for disjointness \cite{R03} implies an $(n')^{\Omega(1)}$ lower bound for the problem of testing a subset of $C$. Indeed, we show that, for a linear code $C \colon \F^n \to \F^{n'}$ (over a larger field of odd characteristic), the property $\set{C(z) : z \in \bitset^n}$, of \emph{Booleanity},\footnote{In fact, we show (and use to prove the separation $\QPT \not \subseteq \MAP$) a lower bound for \emph{non}-Booleanity; but the symmetry of the model of communcation complexity implies the same lower bound holds for Booleanity as well.} which may be of interest in PCP constructions, has a quantum testing lower bound of $\Omega(\sqrt{n}/\log n)$ via a reduction from disjointness (see \Cref{sec:map-vs-qpt} for details). We remark that since this technique is used to show a separation between quantum testers and MA proofs of proximity, we use codes that are \emph{locally testable} and \emph{relaxed locally decodable}, which allow for efficient testing by a MAP. Since there exist such codes with a nearly-linear blocklength \cite{BGHSV06, AS21}, the lower bound we obtain is only slightly worse than a square root.

\paragraph{QMAP lower bounds via threshold degree.} We prove lower bounds for QMAPs via the \emph{threshold degree} of related functions.
A function $f \colon \bitset^n \to \bitset$ is said to have threshold degree (at most) $d$ if there exists a degree-$d$ polynomial $P(X_1, \ldots, X_n)$ over $\R$ such that $f(x) = 1$ if $P(x) > 0$ and $f(x) = 0$ if $P(x) < 0$; in other words, the threshold degree of $f$ is the smallest degree of a polynomial that \emph{sign-represents} $f$.

As a first step, we show that the inclusion $\QMA \subseteq \PP$ \cite{MW05} (in the \emph{polynomial-time} setting, implied by the technique known as Marriott-Watrous amplification) carries over to the property testing setting, implying $\QMAP \subseteq \UPP$.\footnote{$\UPP$ is the query model version of the class $\PP$ of unbounded-error randomised algorithms, where in particular the amount of randomness available to the algorithm is unbounded. Since PP algorithms run in polynomial time, they may access at most a polynomial number of random bits; this restriction does not hold for $\UPP$.} Next, we show that the query complexity of a UPP algorithm that computes $f$ is exactly the threshold degree of $f$ (this result is folklore, but we provide a proof for completeness). Since a property $\Pi$ induces the (partial) function $f_\Pi$ such that $f_\Pi(x) = 1$ when $x \in \Pi$ and $f_\Pi(x) = 0$ when $x$ is $\eps$-far from $\Pi$, the query complexity of a UPP algorithm that ``tests'' $\Pi$ (i.e., computes $f_\Pi$) is a lower bound on the product $pq$ of the proof and query complexities of any QMAP protocol for testing $\Pi$. Finally, we show that if $\Pi$ is \emph{$k$-wise independent} (i.e., looks perfectly random on any subset of $k$ coordinates) and not too large, the threshold degree of $f_\Pi$ is at least $k$, so that $pq = \Omega(k)$ (see \Cref{sec:kwise} for details).

In particular, any code with linear dual distance and small enough rate is an example of a hard property for QMAPs, requiring proof and query complexities that satisfy $pq = \Omega(n)$ for proximity parameter $\eps = \Omega(1)$.

\subsubsection{Algorithmic techniques}
\label{sec:algs}

As a warm-up, consider the \emph{exact decision} problem of verifying that an $n$-bit string $x$ has even parity. This is maximally hard for both IP algorithms and quantum query algorithms, requiring $\Omega(n)$ queries to the bit string, and thus asymptotically no better than trivially querying every coordinate. As we will see next, however, QMA algorithms can capitalise on having a proof \emph{and} quantum processing power to break the linear barrier.

We rely on the technique of \emph{amplitude amplification} \cite{BHMT02} to obtain such an algorithm with sublinear proof and query complexities. Loosely speaking, amplitude amplification takes a (randomised) decision algorithm that always accepts yes-inputs and rejects no-inputs with probability $\rho$, and produces an algorithm with rejection probability $2/3$ (for no-inputs) using the former algorithm only $O(1/\sqrt{\rho})$ times as a subroutine.

We can thus obtain a QMA (query) algorithm for the parity problem as follows. The proof string specifies the purported parities of each block of an equipartition of the input $x\in\bitset^n$ into $p$ blocks of length $n/p$. The verifier first checks that the proof string has even parity, rejecting immediately otherwise. Then, the verifier performs amplitude amplification on the following subroutine: sample $i \in [p]$ uniformly at random, read the entire block of $n/p$ bits and check that its parity coincides with that claimed by the proof; if so, accept, and reject otherwise.

Note that the aforementioned subroutine always accepts if $x$ has even parity and the proof corresponds to the parity of every block. On the other hand, if a string has odd parity and the proof has even parity, \emph{at least one bit of the proof} disagrees with the corresponding block, so that the subroutine rejects with probability at least $1/p$. Since we need only repeat $O(\sqrt{p})$ times, each of which queries $n/p$ bits, the query complexity of our algorithm is $q = O(n/\sqrt{p})$; in particular, if $p = n^{2/3}$ then $q = O(n^{2/3})$.\footnote{Subsequent to discovering the QMA protocol for parity, we learned that the same result was obtained in unpublished prior work of Alessandro Cosentino, Robin Kothari, and John Watrous.}

This is a special case of a more general phenomenon, which holds for all \emph{decomposable} properties (see \Cref{sec:exact} for a discussion of how exact decision follows as a special case). Since amplitude amplification can only be applied to one-sided algorithms (i.e., those that always accept a valid input), we restrict our attention to this type of algorithm hereafter.

\paragraph{Decomposable properties.} Roughly speaking, a property $\Pi$ is said $(k,s)$-decomposable if a ``specification'' of length $s$ efficiently reduces testing $\Pi$ to testing $k$ smaller properties $\Lambda^{(1)}, \ldots, \Lambda^{(k)}$. More precisely, $\Pi$ is $(k,s)$-decomposable if, 
\begin{enumerate}
\item there exists some $s$-bit string that specifies a set of $k$ properties $\Lambda^{(i)}$ as well as $k$ strings $x^{(i)} \in \bitset^{m_i}$ whose bits are determined by a small number of bits of $x$; and 
\item $\eps$-testing $x \in \bitset^n$ with respect to $\Pi$ reduces to testing $x^{(i)}$ with respect to $\Lambda^{(i)}$ in the following sense: when $x \in \Pi$ then $x^{(i)} \in \Lambda^{(i)}$ for all $i \in [k]$, whereas when $x$ is $\eps$-far from $\Pi$, then $x^{(i)}$ is $\eps_i$-far from $\Lambda^{(i)}$ for some $\eps_i$ satisfying $\E_i[\eps_i] = \Omega(\eps)$, where the expectation over $i$ means that $i$ is sampled with probability proportional to $m_i$. 
\end{enumerate}
If the specification is short (i.e., $s = O(k \log n)$), we say $\Pi$ is \emph{succinctly} $k$-decomposable (see \Cref{sec:decomposable} for details). Decomposable properties generalise the notion of \emph{parametrised $k$-concatenation properties} introduced in \cite{GR18}, which corresponds to the special case of a $(k,0)$-decomposition that is an equipartition of the input string.

Our simplest example of a decomposable problem is that of testing the set of $k$-monotone functions $f: [n] \to \bitset$, i.e., functions that change from non-decreasing to non-increasing and vice-versa at most $k - 1$ times. A natural decomposition of this property is to specify the set of at most $k- 1$ ``critical points'', which induce a set of at most $k$ subfunctions $f_i$ that are monotone and overlap with $f_{i-1}$ and $f_{i+1}$ at their endpoints; then, it suffices to test for (1-)monotonicity of each subfunction. More precisely, this property is $(k, (k-1)\log n)$-decomposable (thus succinctly $k$-decomposable), and given the (alleged) critical points $n_1 < n_2 < \cdots < n_{k-1}$, the subproperty $\Lambda^{(i)}$ for odd (resp. even) $i$ is the set of non-decreasing (resp. non-increasing) functions on $[m_i]$, where $m_i = n_i - n_{i-1} + 1$ (with $n_0 = 1$ and $n_k = n$). Note, moreover, that if $f$ is $\eps$-far from $k$-monotone, then its absolute distance from all functions specified by the critical points is at least $\eps n$; thus, denoting by $\eps_i$ the distance of $f_i$ to $\Lambda^{(i)}$, we have $\sum_i \eps_i m_i \geq \eps n$, implying $\E_i[\eps_i] = \Omega(\eps)$. We remark that decomposing other properties (e.g., branching programs, context-free languages, and Eulerian graph orientations) is much less straightforward and often allows for breaking the property into any desired number of sub-properties, which in turn admits proof length versus query complexity tradeoffs. See \Cref{sec:decomposable} for details.

Given a $(k,s)$-decomposable property that admits MAPs for the subproperties $\Lambda^{(i)}$, a natural protocol for $\Pi$ is to sample $i \in [k]$ uniformly at random and execute the verifier for $\Lambda^{(i)}$. Note that, if these MAPs have proof complexity $p$ and query complexity $q$, the protocol for $\Pi$ has proof length $s + kp$. Moreover, $\E_i[\eps_i] = \Omega(\eps)$ means that a randomly chosen $i \in [k]$ is (in expectation) at distance roughly $\eps$ from $\Lambda^{(i)}$, so it is reasonable to expect that $O(1/\eps)$ classical repetitions of the base protocol would ensure a rejection with high probability, and that a QMAP protocol can make do with only $O(1/\sqrt{\eps})$ repetitions using amplitude amplification.

The above outline glosses over the fact that we have no information on \emph{the distribution of errors} $(\eps_1, \dots, \eps_k)$.
For example,
it may be that most $\eps_i$ are of the same order of magnitude (in which case a random $i \in [k]$ is likely to point to a mildly corrupted $x^{(i)}$),
or it may be that a few $\eps_i$ are very large while all other $\eps_i$ are small or even zero (in which case $x^{(i)}$ is unlikely to be corrupted for a random $i \in [k]$, but when it is, the amount of corruption is large).
Fortunately, this issue can be addressed by the technique of \emph{precision sampling} \cite{L87}, incurring a merely logarithmic overhead. We thus obtain a QMAP protocol for $\eps$-testing $\Pi$ with proof complexity $s + kp$ and query complexity comparable to $q$ (and often smaller; see \Cref{thm:decomposable} for details).

\paragraph{Bipartiteness testing.} Consider the problem of testing whether a bounded-degree graph $G$ (given as an oracle to its adjacency list) is bipartite or far from any bipartite graph. (Note that this is not a decomposable property.) There exists a MAP protocol for a promise variant of this problem, where graphs are \emph{rapidly-mixing} \cite{GR18}. We will show that it is possible to combine quantum speedups obtained by amplitude amplification \emph{and} by replacing a classical subroutine with a more efficient quantum analogue.

Let us first consider the (classical) MAP verifier for bipartiteness, which receives a subset of vertices $S$ of size $k$, allegedly on the same side of a bipartition, as a proof. To test with respect to proximity parameter $\eps$, the verifier repeats the following procedure: sample a uniformly random vertex $v$, take roughly $n/(k\eps)$ short (lazy) random walks starting from $v$, recording whether the walk ended at a vertex in $S$ as well as the parity of the walk (i.e., the parity of the number of non-lazy steps). If two walks start from the same vertex $v$ and end in $S$ with different parities, then reject; otherwise, accept. Setting $m \coloneqq n/k$, the query complexity of (one iteration of) the verifier is $m/\eps$ (ignoring constants and polylogarithmic factors).

If the graph is bipartite and the proof $S$ is indeed on the same side of a bipartition, there cannot exist two paths from the same vertex into $S$ with different parities (as that would imply a path of odd length with both endpoints on the same side). Therefore, the verifier always accepts in this case. If the graph is $\eps$-far from bipartite, however, each iteration finds evidence to this effect with probability $\Omega(\eps)$. Thus, the classical verifier samples a new vertex roughly $1/\eps$ times, for a total query complexity of $m/\eps^2$.

Now, one immediate way to improve this algorithm is to perform amplitude amplification: the resulting algorithm repeats the procedure $1/\sqrt{\eps}$ times, improving the query complexity to $m/\eps^{3/2}$. A second (and less straightforward) strategy is to use the quantum \emph{collision-finding} algorithm \cite{A07} to reduce the number of random walks taken from each vertex to $(m/\eps)^{2/3}$, as in \cite{ACL11}.\footnote{Here and throughout, we use the term collision-finding to refer to Ambainis's algorithm that, for any $f$ with $1$-certificate complexity at most $2$, uses $\Theta(n^{2/3})$ queries and with constant probability outputs a $1$-certificate when run on any input $x \in f^{-1}(1)$.} This strategy reduces the required number of queries to $m^{2/3}/\eps^{5/3}$, improving the dependency on $m$ but achieving a worse one on $\eps$.

Of course, this begs the question: why not apply both optimisations? Indeed, we show how to tweak the classical MAP verifier in order to do so, and thus simultaneously obtain the speedups from each of them.\footnote{Amplitude amplification requires that the algorithm be \emph{invertible}, i.e., be given by a unitary $A$, as the technique repeatedly applies $A$ and $A^{-1}$. For this reason, it is often said to apply to quantum algorithms without intermediate measurements (as these make an algorithm non-invertible), which is not the case for collision-finding. However, the (standard) \emph{principle of deferred measurement} (see, e.g., \cite{NC16}) allows us to transform any quantum algorithm $A$ into a reversible $A'$ with the same query complexity, and apply amplitude amplification to the latter.}

More precisely, sample a uniformly random vertex $v$ and let $g_v$ denote the mapping $r \mapsto (a,b) \in \bitset^2$ obtained by executing a random walk starting from $v$ with $r$ as its inner randomness, where $a = 1$ if the walk stops at a vertex in $S$ and $b$ is the parity of the walk. The collision-finding algorithm is capable of finding a pair $r_0, r_1$ such that $g_v(r_i) = (1,i)$ for $i \in \{0, 1\}$, if such a pair exists. The query complexity of the collision-finding algorithm is the domain size to a $2/3$ power, and, since we take $m/\eps$ walks from $v$, the number of queries is
 $(m/\eps)^{2/3}$. Although such a collision is not guaranteed to exist for all starting vertices $v$, it is for a fraction of roughly $\eps$ of them. By applying amplitude amplification to the procedure described in this paragraph, we obtain a QMAP protocol for bipartiteness with proof length $O(k \log n)$ and query complexity $\tilde{O}((m/\eps)^{2/3} \cdot 1/\sqrt{\eps}) = \tilde{O}((n/k)^{2/3}/\eps^{5/6})$ (see \Cref{thm:bipartite} for details).

\subsection{Open problems}

This work begins the exploration of quantum proofs of proximity, leaving a host of uncharted research directions. We wish to highlight a small number of open problems, which we find to be of particular interest.

In the diagram of complexity class separations of \Cref{fig:complexity}, an evident shortcoming is the absence of $\QMAP \not \subseteq \QCMAP$ in the natural setting of parameters, i.e., with proximity $\eps = \Omega(1)$ rather than $\eps = \Theta(1/\sqrt{n})$.

\begin{openquestion}
    \label{qtn:qmap-vs-qcmap}
	What is the largest proximity parameter $\eps$ such that
	\begin{equation*}
	    \QMAP(\eps, \polylog(n), \polylog(n)) \not \subseteq \QCMAP(\eps, p, q)
	\end{equation*}
	with some proof and query complexities satisfying $pq = n^{\Omega(1)}$?
\end{openquestion}

Given our focus on quantum MA (i.e., \emph{non-interactive}) proofs of proximity, it is natural to ask what is achievable by allowing quantum property testers to interact with quantum provers, as opposed to static proofs.

\begin{openquestion}
	What is the power of quantum IP proofs of proximity (QIPPs)?
\end{openquestion}

More specifically, it is known that there exist classical interactive proof of proximity (IPP) protocols with $\tilde{O}(\sqrt{n})$ proof and query complexities for large classes of languages \cite{RVW13, RR20}. Moreover, these complexities are optimal (up to polylogarithmic factors) for \emph{classical} protocols, under reasonable cryptographic assumptions \cite{KR15}. Could quantum interactive proofs break the square-root barrier?

\begin{openquestion}
	Can QIPPs test logspace-uniform $\mathcal{NC}$ languages with $o(\sqrt{n})$ proof and query complexities?
\end{openquestion}

Finally, while we show a strong lower bound for QMAPs for $k$-wise independent properties, they do not rule out the existence of sublinear QMAP protocols. Could a stronger lower bound be shown?

\begin{openquestion}
	Do there exist maximally hard properties for QMAPs, requiring $\Omega(n)$ query complexity when the proof complexity is $p = cn$ for some $c = \Omega(1)$?
\end{openquestion}

We note that the question has an easy (negative) answer if we take $c = 1$: with a proof (allegedly) equal to the input string $x$, a QMAP based on Grover search can test with $O(1/\sqrt{\eps}) = O(\sqrt{n})$ queries. Moreover, \cite{RS04} provides a (positive) answer for proximity parameter $\eps = 1/n$: there exists a property $\Pi \subset \bitset^n$ for which deciding whether $x \in \Pi$ or $x \notin \Pi$ requires proof and query complexities satisfying $p + q = \Omega(n)$ (indeed, this is true of most bipartitions of $\bitset^n$). Then taking $c = \Omega(1)$ small enough implies $q = \Omega(n)$. Thus, similarly to \Cref{qtn:qmap-vs-qcmap}, while we have answers for subconstant $\eps$, the problem is open when $\eps = \Omega(1)$.

\subsection*{Organisation}

The rest of this paper is organised as follows. In \Cref{sec:prelims}, we discuss the preliminaries for the technical sections. QMA proofs of proximity are formally defined in \Cref{sec:qmap}, and we prove our complexity class separations in \Cref{sec:qmap-vs-qpt-map} (which \Cref{sec:qmap-vs-qcmap,sec:ipp-vs-qmap} complement with separations implied by known results). \Cref{sec:kwise} proves lower bounds for QMAPs and concludes the complexity-theoretic part of the paper. Proceeding to the algorithmic part, we show in \Cref{sec:po-map} that QMAP protocols enable speedups for proximity-oblivious MAPs. In \Cref{sec:decomposable}, we define decomposability and prove the bulk of our algorithmic results, including exact decision problems as a special case. Finally, in \Cref{sec:bipartite} we show a QMAP protocol for testing graph bipartiteness.

\section{Preliminaries}
\label{sec:prelims}

We begin with standard notation. For an integer $\ell\geq1$, we denote by $[\ell]$ the set $\{1,2, \ldots, \ell\}$. We use $\polylog(n)$ to denote an arbitrary polylogarithmic function, i.e., a  polynomial in the logarithm of $n$. For ease of notation, we also define $N \coloneqq 2^n$.

We use $\H, \K$ to denote arbitrary finite-dimensional Hilbert spaces and use indices to differentiate between distinct spaces. The set of linear operators mapping $\H$ to $\K$ is denoted by $\mathcal{L}(\H,\K)$; the shorthand $\mathcal{L}(\H)$ stands for $\mathcal{L}(\H,\H)$. The set of positive semidefinite operators on $\H$ having unit trace is denoted by $\pos(\H)$. The set $T(\H, \K)$ consists of the linear mappings from $\mathcal{L}(\H)$ to $\mathcal{L}(\K)$.
We say $T$ is a completely positive map (CP-Map) if $T \otimes I_{\mathcal{L}(\H)}$ is positive for all $\H$, where $I_{\K}$ denotes the identity operator on a Hilbert space $\K$ . Furthermore, $T$ is a completely positive trace preserving map (CPTP map) if $T$ is a trace preserving CP-Map, i.e., such that $\Tr(T(\rho))= \Tr(\rho)$ for all $\rho$.
$\U(\H)$ denotes the set of unitary operators on $\H$. For a unitary transformation $U \in \U$, its conjugate transpose is denoted $U^{\dagger}$.

A pure state is a unit vector in the Hilbert space $\H$, and represented by the Dirac notation, e.g., $\Ket{\psi}$. 
A mixed state is a distribution on pure states $\{ p_k, \Ket{\psi_k}\}$, represented as a \emph{density matrix} $\rho = \sum p_k \Ket{\psi_k}\Bra{\psi_k}$. We write $\dim(\H)$ to denote the dimension of the Hilbert space $\H$. Also, for brevity, we represent $\Ket{0}^{\otimes n}$ as $\Ket{\mathbf{0}}$, for an arbitrary $n\geq 1$, where $\Ket{0}=[1\ \  0]^T$.

\paragraph{Complexity classes.} We use the convention of writing complexity classes in calligraphic capitals and denoting algorithms or protocols by latin capitals, e.g., $\BQP$ is a complexity class whose problems are solvable by BQP algorithms and $\mathcal{IP}$ is a complexity class characterised by IP protocols.

\paragraph{Quantum query model.}
A quantum algorithm $V$ with query access to a bit string $x \in \bitset^n$ is specified by a sequence of unitary operations $V_0 \ldots V_q$, that do not depend on the input $x$. A query to $x$ is given by the unitary $U_x$ on $\log n + 1$ qubits such that

\begin{equation*}
	U_x \Ket{i}\ket{b} = \Ket{i}\ket{b \oplus x_i}\;.
\end{equation*}

We denote by $V^U$ the output of $V$ with oracle access to a unitary $U$. Similarly, $V^U(n)$ denotes the case where $V$ has access to an additional explicit input $n$.

The final state of an algorithm that makes $q$ queries to the oracle, before measurement, is given by
\begin{equation*}
	V_q (U_f \otimes I) V_{q-1} (U_f \otimes I) \ldots V_1 (U_f \otimes I) V_0 \Ket{\mathbf{0}}\;.
\end{equation*}
The overall Hilbert space $\H$ used by the algorithm is split into three subspaces $\H_\text{in} \otimes \H_\text{w} \otimes \H_\text{out}$, and we denote by $\ket{\mathbf{0}} = \ket{0}^{\otimes \log(\dim \H)}$ the initial state of the verifier's memory. The oracle acts on the space $\H_\text{in}$, the workspace $\H_\text{w}$ can have arbitrary size and $\H_\text{out}$ represents the single-qubit output of the algorithm. The final step of the algorithm is to measure the $\H_\text{out}$ register in the computational basis and return the outcome.

\paragraph{Distance measures.} Unless otherwise stated, we will assume the distance measure $d$ for \emph{classical} objects, such as strings of symbols in a finite field $\F$, as that induced by normalised \emph{Hamming weight}: for $x, y \in \F^n$, the (normalised) Hamming weight of $x$ is $\abs{x} \coloneqq \abs{\set{i \in [n] : x_i \neq 0}}/n$ and the distance between $x$ and $y$ is $d(x,y) = \abs{\set{i \in [n] : x_i \neq y_i}}/n$. We will often refer to distances between bit strings, which corresponds to the case where $\F = \F_2 = \bitset$.

For unitary matrices $U, V$, unless otherwise stated, we consider the distance measure to be that induced by the normalised \emph{Hilbert-Schmidt norm}: the norm of $U$ is $\norm{U} \coloneqq \sqrt{\frac1{n}\sum_{i = 1}^n \sigma_i(U)^2}$, where $\sigma_i(U)$ is the $i^\text{th}$ eigenvalue of $A$ (in some arbitrary order); and the distance between $U$ and $V$ is $d(U,V) = \norm{U - V}$.

We say two objects $X, Y$ are \emph{$\eps$-close} if $d(X,Y) \leq \eps$, and otherwise they are \emph{$\eps$-far}. We also say $X$ is $\eps$-close to a set of objects $\set{Y_i}$ if there exists $i$ such that $d(X,Y_i) \leq \eps$.

\paragraph{Property testing.} An \emph{interactive proof of proximity} (IPP) for $\Pi$ is a proof system that solves the approximate decision problem of membership in $\Pi$. The \emph{verifier} algorithm receives as input a proximity parameter $\eps$ and has oracle access to $x$. It queries $x$ in at most $q$ coordinates and interacts with an all-powerful but untrusted prover by exchanging $m$ messages, where the total number of communicated bits is $c$. The verifier must accept when $x\in \Pi$ and reject when $x$ is $\epsilon$-far from $\Pi$, with bounded probability of error; the outcome of such an interaction is denoted $\langle P(x), V^x\rangle$. Formally,

\begin{definition}
  An \emph{interactive proof of proximity} (IPP) for a property $\Pi = \cup_n \Pi_n$ is an interactive protocol with two parties: a
  (computationally unbounded) prover $P$ and a verifier
  $V$, which is a probabilistic algorithm. The parties send
  messages to each other in turns, with the first sent from prover to verifier, and at the end of the communication, the
  following two conditions are satisfied:

  \begin{enumerate}
  \item \emph{Completeness:} For every $\eps > 0$, $n \in \N$, and
    $x \in \Pi_n$, there exists a prover $P$ such that
    \begin{equation*}
      \P \left[\langle P(x), V^x\rangle(n, \eps) = 1 \right] \ge 2/3\;,
    \end{equation*}
    where the probability is over the coin tosses of $V$.

  \item \emph{Soundness:} For every $\eps > 0$, $n \in \N$,
    $x \in \bitset^n$ that is $\eps$-far from $\Pi_n$ and for
    every computationally unbounded (cheating) prover $P^*$
    it holds that
    \begin{equation*}
      \P \left[\langle P^*(x), V^x\rangle(n, \eps) = 1 \right] \le 1/3\;,
    \end{equation*}
    where the probability is over the coin tosses of $V$.
  \end{enumerate}

  The \emph{query complexity} $q$ of the protocol is the maximum number of queries the verifier makes to $x$ in its execution; the \emph{round complexity} $m$ is the number of messages exchanged between prover and verifier; and the \emph{communication complexity} $c$ is the total number of bits communicated by these messages.

  The set of properties $\Pi$ for which there exists an IPP protocol with proximity parameter $\eps$ with $m$ messages, communication complexity $c$ and query complexity $q$ is denoted $\IPP(\eps, c, q, m)$.
\end{definition}

When the completeness condition holds with probability 1, i.e., the verifier always accepts when $x \in \Pi$, we call the protocol \emph{one-sided}. Moreover, if the verifier does not receive $\eps$ explicitly, but rejects inputs that are $\eps$-far from $\Pi$ with \emph{detection probability} $\rho(n,\eps) > 0$, the procotol is said to be \emph{proximity-oblivious}.

A \emph{Merlin-Arthur proof of proximity} (MAP) for $\Pi$ is an IPP where the entire communication is a single message from the prover to the verifier (i.e., an IPP with round complexity 1); the class $\MAP(\eps,p,q)$ is thus defined as $\IPP(\eps, p,q,1)$. The formal definition of the \emph{quantum} generalisation of MAPs is given in \Cref{sec:qmap-def}.

A \emph{property tester} is an IPP with $r = 0$, i.e., where no communication occurs. In this case, the verifier is called a \emph{tester}, and we define $\PT(\eps,q) \coloneqq \IPP(\eps, 0, q, 0)$.

\paragraph{Branching programs.} A branching program on $n$ variables is a directed acyclic graph that has a unique source vertex $v_0$ with in-degree 0 and (possibly) multiple sink vertices with out-degree 0. Each sink vertex is labeled either with 0 (i.e., reject) or 1 (i.e., accept). Each non-sink vertex is labeled by an index $i \in [n]$ and has exactly 2 outgoing edges, which are labeled by 0 and 1. The output of the branching program $B$ on input $x \in \bitset^n$ , denoted $B(x)$, is the label of the sink vertex reached by taking a walk, starting at the source vertex $v_0$, such that at every vertex labeled by $i \in [n]$, the step taken is on the edge labeled by $x_i$.

A branching program is said to be \emph{read-once} (or ROBP for short) if, along every path from source to sink, every index $i \in [n]$ appears at most once. The size of a branching program $B$ is the number of vertices in its graph.

A branching program is \emph{layered} if its nodes can be partitioned into $V_0$, $V_1$, \ldots, $V_\ell$, where $V_0$ only contains the source node, $V_\ell$ are the sink nodes and every edge is between $V_{i-1}$ and $V_i$ for some $i \in [\ell]$. The \emph{length} of a layered branching program is its number of (nontrivial) layers $\ell$, and its \emph{width} is the maximum size of its layers, i.e., $\max_{i \in [\ell]} \set{\abs{V_i}}$.

\paragraph{Coding theory.} A \emph{code} $C \colon \F^k \to \F^n$ (where $\F$ is a finite field) is an injective mapping from \emph{messages} of length $k$ to codewords of \emph{blocklength} $n$. The \emph{rate} of the code $C$ is $k/n$ and its \emph{distance} is the minimum, over all distinct messages $x,y \in\F^k$, of $d(C(x),C(y))$. We shall sometimes slightly abuse notation and use $C$ to denote the set of all of its codewords $\set{C(x) : x\in\F^k} \subset \F^n$. If the mapping $C$ is linear, we say $C$ is a linear code.

\section{QMA Proofs of Proximity}
\label{sec:qmap}

This section provides a formal definition of QMAPs.

\subsection{Definition}
\label{sec:qmap-def}
A \emph{quantum Merlin-Arthur proof of proximity} (QMAP) for a property $\Pi = \bigcup \Pi_n$ is a proof system consisting of a quantum algorithm $V$, called a verifier, that is given as explicit input an integer $n\in \N$ and a proximity parameter $\epsilon>0$. It has oracle access to a unitary $U \in \V \subseteq \U(2^n)$ acting on $n$ qubits, which belongs to a \emph{universe} $\V$ with an associated distance measure.\footnote{Note that this definition generalises classical properties, where $\V = \set{U_x : x \in \bitset^n} \subset \U(2^{n+1})$, the unitary $U_x$ acts as $U_x \ket{i}\ket{b} = \ket{i}\ket{b \oplus x_i}$, and the distance between $U_x$ and $U_y$ is the Hamming distance between $x$ and $y$.} Furthermore, the verifier receives a $p$-qubit quantum state $\rho$ explicitly as a purported proof that $U \in \V$.

We allow the oracles in the rest of the work to be \textit{quantum oracles}, i.e., CP-Maps. However, using the Stinespring dilation \cite{S55, P02}, we view them as unitary operators again, in a larger Hilbert space, and the formalism generalises readily. This allows us to work with most general transformations allowable by quantum theory.

The verfier $V$ receives $n$ and $\eps$ as inputs, and outputs a sequence of unitary operators $V_0 \ldots V_q$ that satisfies the two following conditions.

\begin{enumerate}
    \item \emph{Completeness.} For every $n\in \N$  and $U \in \Pi_n$, there exists a $p$-qubit quantum state $\ket{\psi}$ such that,\footnote{While the proof can also be a mixed state, assuming it to be pure is without loss of generality; see \Cref{rem:pure-state}.} for every proximity parameter $\epsilon>0$,
    \begin{equation*}
        \P \left[V^U (n,\eps, \ket{\psi}) = 1\right] \geq 2/3\;.
    \end{equation*}
    Equivalently, \emph{some} $p$-qubit quantum state $\ket{\psi}$ satisfies
    \begin{equation*}
        \norm{(\Ket{1}\Bra{1}\otimes I) W (\ket{\psi} \otimes \Ket{\textbf{0}})}^2 \geq 2/3\;,
    \end{equation*}
    where $\Ket{\textbf{0}}$ is the initial state of the verifier and $W$ the unitary obtained by interspersing $q$ calls to the oracle $U$ between the $V_i$; that is, $W = V_q (U\otimes I) V_{q-1} \ldots (U\otimes I) V_{0}$.

    \item \emph{Soundness.} For every $n\in \N$, $\eps > 0$ and $p$-qubit quantum state $\ket{\psi}$, if $U \in \U(2^n)$ is $\eps$-far from $\Pi_n$, then
    \begin{equation*}
        \P \left[V^U (n,\eps, \ket{\psi}) = 1\right] \leq 1/3\;.
    \end{equation*}
    Equivalently, \emph{every} $p$-qubit quantum state $\ket{\psi}$ satisfies
    \begin{equation*}
        \norm{(\Ket{1}\Bra{1}\otimes I) W (\ket{\psi} \otimes \Ket{\textbf{0}})}^2 \leq 1/3\;,
    \end{equation*}
    where $W = V_q (U\otimes I) V_{q-1} \ldots (U\otimes I) V_{0}$.
\end{enumerate}

The \textit{query complexity} of a QMAP is number of times the verifier calls the oracle $U$. More precisely, the query complexity is $q = q(n,\eps)$ if, for every $n \in \N$, $\epsilon>0$ and $U \in \U(2^n)$, the verifier makes at most $q$ queries to the input. Its \textit{proof complexity} is $p = p(n,\eps)$ if, for every $n\in \N$ and $U \in \Pi_n$, there exists a $2^p$-dimensional quantum state $\ket{\psi}$ satisfying both of the above conditions.

The running time $t_V=t_V(n,\eps)$ of the verifier is the minimum depth of the unitaries $V_0 \ldots V_q$, composed of gates from a fixed constant-qubit gate set. The running time $t_P=t_P(n,\eps,U)$ of the prover is similarly the minimum depth of the circuit that prepares the proof state $\ket{\psi}$.

\begin{definition}[$\QMAP$ complexity class]
    Fix a universe set of unitary operators $\V$ and distance measure $d: \V \times \V \rightarrow [0,1]$.
    $\QMAP(\eps,p,q,t_V,t_P)$ is the class of properties $\Pi \subseteq \V$ that admit a verifier for proximity parameter $\eps$ with query complexity $q$ and proof complexity $p$ such that the verifier and prover runtimes are $t_V$ and $t_P$, respectively.

    The complexity class $\QCMAP(\eps,p,q,t_V,t_P)$ is defined as above, with the additional restriction that the proof be \emph{classical}, i.e., that the $p$-qubit quantum state given as proof is a computational basis state.
\end{definition}

\begin{figure}[h]
  \raggedleft
  \begin{minipage}{0.9\textwidth}
	\[
		\Qcircuit @C=0.6cm @R=.2cm {
		  & \qw & \multigate{7}{V_0} & \multigate{3}{U} & \multigate{7}{V_1} & \multigate{3}{U} & \qw & & \multigate{7}{V_{q-1}} & \multigate{3}{U} & \multigate{7}{V_q} & \meter \\
		  & \qw & \ghost{V_0} & \ghost{U} & \ghost{V_1} & \ghost{U} & \qw & & \ghost{V_{q-1}} & \ghost{U} & \ghost{V_q} & \qw \\
		 \lstick{\ket{\psi} \otimes \ket{\mathbf{0}}} & \qw & \ghost{V_1} & \ghost{U} & \ghost{V_0} & \ghost{U} & \qw & \cdots ~~~~ & \ghost{V_{q-1}} & \ghost{U} & \ghost{V_q} & \qw \\
		  & \qw & \ghost{V_0} & \ghost{U} & \ghost{V_1} & \ghost{U} & \qw & &  \ghost{V_{q-1}} & \ghost{U} & \ghost{V_q} & \qw \\
		  &  & & & & & & & & & & \\
		  & \vdots & & & & & & & & & & \\
		  & & & & & & & & & & & \\
		  & \qw & \ghost{V_0} & \qw & \ghost{V_1} & \qw & \qw & & \ghost{V_{q-1}} & \qw & \ghost{V_q} & \qw
		}
	\]
  \end{minipage}
\caption{Schematic of a QMAP protocol that receives a proof state $\Ket{\psi}$ and makes $q$ queries to a unitary $U$.}
\end{figure}

For ease of notation, since the measures of complexity we will use throughout are \emph{proof} and \emph{query} complexity (but \emph{not} time complexity), we will often use $\QMAP(\eps,p,q)$ (and likewise for $\QCMAP)$ to denote the class as above, where $t_V, t_P$ are arbitrary functions.

We also denote the class of properties that are $\eps$-testable by quantum testers with $q$ queries as $\QPT(\eps, q)$, that is, $\QPT(\eps, q) \coloneqq \QMAP(\eps, 0, q)$.

\begin{remark}[Proofs are pure states]
    \label{rem:pure-state}
    Without loss of generality, the quantum state given as the proof is a pure state on $p$ qubits, i.e., a rank one positive semi-definite matrix. To see why, note that, if some mixed state $\rho = \sum p_i\Ket{\psi_i}\Bra{\psi_i}$ causes the verifier to accept with probability $2/3$, then, by convexity, there exists a state $\Ket{\psi_k}$ in that mixture that would also cause the verifier to output $1$ with probability at least $2/3$. Hence, the proof can be the pure state $\Ket{\psi_k}$. Likewise, if no pure state can make the verifier accept with probability larger than $1/3$, the same holds for mixed states.
\end{remark}


\section{Complexity separations}
\label{sec:qmap-vs-qpt-map}

Armed with a formal definition of QMAPs, we begin to chart the landscape of complexity classes to which quantum proofs of proximity belong. In \Cref{sec:complexity-prelim}, we provide definitions that will be necessary in the remainder of the section, mainly pertaining to coding theory.

Our main goal is to prove \Cref{infthm:qmap-vs-qpt-map}, namely, that QMAPs can exploit quantum resources and the availability of a proof to gain expressivity that neither can provide separately. This theorem follows from the \emph{incomparabilty} between the classes $\MAP$ and $\QPT$: in \Cref{sec:map-vs-qpt}, we exhibit a property $\Pi_B$ that is easy to test classically with a short proof, but requires many queries (without a proof) even for a quantum tester (\Cref{thm:map-vs-qpt}); moreover, in \Cref{sec:qpt-vs-map}, we show the existence of a property $\Pi_F$ that is easily testable quantumly but difficult to test classically, even with the aid of a proof (\Cref{thm:qpt-vs-map}).

The aforementioned results immediately imply the existence of a property, namely $\Pi_B \times \Pi_F$, which does not admit efficient MAPs nor quantum testers, requiring large proof or query complexity, whereas a QMAP with logarithmic proof and query complexities does exist (indeed, one with a \emph{classical} proof).

\begin{theorem}[\Cref{infthm:qmap-vs-qpt-map}, restated]
    \label{thm:qmap-vs-qpt-map}
    There exists a property $\Pi \subseteq \bitset^n$ such that, for any small enough constant $\eps > 0$,
    \begin{equation*}
    	\Pi \in \QCMAP(\eps, \log n, O(1))
    \end{equation*}
    and
    \begin{equation*}
    	\Pi \notin \QPT(\eps, o(n^{0.49}) ) \cup \MAP(\eps, p, q)
    \end{equation*}
    when $p \cdot q = o(n^{1/4})$.
\end{theorem}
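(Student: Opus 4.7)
The plan is to take $\Pi \coloneqq \Pi_B \times \Pi_F$, the direct product of the two incomparable properties already isolated in the excerpt: $\Pi_B \subseteq \bitset^{n/2}$ is the ``Booleanity'' property of \Cref{thm:map-vs-qpt}, which is MAP-testable with polylogarithmic proof and query complexities but requires $\Omega(n^{0.49})$ quantum queries without a proof, and $\Pi_F \subseteq \bitset^{n/2}$ is the property of \Cref{thm:qpt-vs-map}, which is testable by a quantum algorithm with $O(1)$ queries but requires $pq = \Omega(n^{1/4})$ for any MAP. An input to $\Pi$ is a pair $(x,y) \in \bitset^{n/2} \times \bitset^{n/2}$, and the key elementary fact is that normalised Hamming distance satisfies $d((x,y), \Pi) = \tfrac{1}{2}\bigl(d(x, \Pi_B) + d(y, \Pi_F)\bigr)$, since the two halves can be corrected independently. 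Consequently, if $(x,y)$ is $\eps$-far from $\Pi$, at least one of $x,y$ is $\eps$-far from its respective sub-property.

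For the upper bound $\Pi \in \QCMAP(\eps, \log n, O(1))$, the verifier parses its classical proof as the MAP-proof for $\Pi_B$, amplifies both the $\Pi_B$-verifier and the $\Pi_F$-tester to error at most $1/6$, runs each on the appropriate half at proximity $\eps$, and accepts iff both sub-tests accept. Completeness is inherited from the components, and soundness follows from the distance identity and a union bound. The proof length and query complexity are the sums of the component quantities, namely $O(\log n)$ and $O(1)$.

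For the two lower bounds I would apply the standard ``fix one half'' reduction. Suppose a quantum tester $T$ for $\Pi$ at proximity $\eps$ uses $q = o(n^{0.49})$ queries; then fix any $\tilde y \in \Pi_F$ and simulate $T$ on $(x,\tilde y)$, answering queries to the second half locally (since $\tilde y$ is a known constant) and queries to the first half using the $\Pi_B$-oracle. By the distance identity, $x \in \Pi_B$ implies $(x,\tilde y) \in \Pi$, while $x$ being $\eps$-far from $\Pi_B$ implies $(x,\tilde y)$ is $(\eps/2)$-far from $\Pi$, so running $T$ at proximity $\eps/2$ produces a QPT for $\Pi_B$ with $q$ queries, contradicting \Cref{thm:map-vs-qpt} provided $\eps$ is chosen small enough that $\eps/2$ still lies in the regime covered there. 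The MAP case is fully symmetric: fixing $\tilde x \in \Pi_B$ and passing the prover's message through unchanged converts an alleged MAP for $\Pi$ at proximity $\eps$ with $pq = o(n^{1/4})$ into a MAP for $\Pi_F$ at proximity $\eps/2$ with the same $pq$ product, contradicting \Cref{thm:qpt-vs-map}.

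The theorem is thus a bundling of \Cref{thm:map-vs-qpt,thm:qpt-vs-map}, and the only ``hard'' part is purely bookkeeping: one must verify the displayed distance identity for the direct product and choose the universal constant $\eps > 0$ small enough that the factor-of-two loss from passing to $\eps/2$ still lands within the parameter windows of both component lower bounds, after which the three bullets above combine to give the claim.
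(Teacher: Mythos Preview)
Your proposal is correct and follows essentially the same approach as the paper: take $\Pi = \Pi_B \times \Pi_F$, verify the upper bound by running the $\Pi_B$-MAP and the $\Pi_F$-quantum tester on the respective halves, and derive both lower bounds by the ``fix one half'' reduction (the paper fixes $y \in \Pi_F$ to contradict \Cref{lem:Booleanity-not-in-qpt} and $x \in \Pi_B$ to contradict \Cref{thm:qpt-vs-map}, exactly as you outline). The only cosmetic difference is bookkeeping of the proximity parameter: the paper tests each half at $\eps$ and states the theorem for $\eps' = 2\eps$, whereas you test at $\eps$ and absorb the factor of two into the lower-bound side; both are equivalent.
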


\subsection{Preliminaries}
\label{sec:complexity-prelim}

We first define the necessary notions of local codes that will be used in this section. We denote throughout a finite field of constant size by $\F$.

\begin{definition}[Locally Testable Codes (LTCs)]
	A code $C \colon \F^k \to \F^n$ is locally testable, with respect to proximity parameter $\eps$ and error rate $\sigma$, if there exists a probabilistic algorithm $T$ that makes $q$ queries to a purported codeword $w$ such that:
	  \begin{enumerate}
  \item If $w = C(x)$ for some $x \in \F^k$, then $\P \left[T^w = 1 \right] \ge 1 - \sigma$.
  \item For every $w$ that is $\eps$-far from $C$, we have $\P \left[T^w = 0 \right] \ge 1 - \sigma$.
  \end{enumerate}
\end{definition}

Note that the algorithm $T$ that an LTC admits is simply an $\eps$-tester for the property of being a valid codeword of $C$.

\begin{definition}[Locally Decodable Codes (LDCs)]
  \label{def:ldc}
  A code $C \colon \F^k \to \F^n$ is locally decodable with \emph{decoding radius} $\delta$ and error rate $\sigma$ if there exists a probabilistic algorithm $D$ that given index $i \in[k]$ makes $q$ queries to a string $w$ promised to be $\delta$-close to a codeword $C(x)$, and satisfies
    \begin{equation*}
      \P[D^{w}(i) = x_i] \ge 1 - \sigma.
    \end{equation*}
\end{definition}

Since the best known constructions of LDCs have superpolynomial blocklength, we will make use of a relaxation of this type of code that allows for much more efficient constructions and suffices for our purposes.

\begin{definition}[Relaxed LDCs]\label{def:rldc}
  A code $C\colon\F^k \to \F^n$ with relative distance $\delta_C$ is a $q$-local relaxed LDC with success rate $\rho$ and decoding radius $\delta \in (0,\delta_C/2)$ if there exists a randomised algorithm $D$, known as a \emph{relaxed decoder} that, on input $i\in [k]$, makes at most $q$ queries to an oracle $w$ and satisfies the following conditions.

  \begin{enumerate}
  \item \textsf{Completeness}: For any $i\in [k]$ and $w = C(x)$, where $x\in\F^k$,
  \begin{equation*}
    \P[D^{w}(i) = x_i] \ge 2/3.
  \end{equation*}

  \item \textsf{Relaxed Decoding}: For any $i\in [k]$ and any $w \in\F^{n}$ that is $\delta$-close to a (unique) codeword $C(x)$,
    \begin{equation*}
      \P[D^{w}(i) \in \{x_i,\bot\}] \ge 2/3.
    \end{equation*}

    \item \textsf{Success Rate}: There exists a constant $\rho>0$ such that, for any $w\in\F^n$ that is $\delta$-close to a codeword $C(x)$, there exists a set $I_w\subseteq [k]$ of size at least $\rho k$ such that for every $i\in I_w$,
    \begin{equation*}
        \P[D^w(i)=x_i]\geq 2/3 \enspace.
    \end{equation*}
  \end{enumerate}
\end{definition}

As shown by \cite{BGHSV06, GGK19, CGS20, AS21}, there exist linear codes of only slightly superlinear blocklength that are both locally testable \emph{and} relaxed locally decodable:\footnote{We note that while LTCs and RLDCs are usually defined with respect to \emph{binary} alphabets, they can be constructed over larger fields as well as fields of odd characteristic. The constructions essentially rely on two components: a base (linear) code and a PCP of proximity, both of which readily extend to larger alphabets.}

\begin{theorem}
\label{thm:rldc}
    For any constant $\gamma > 0$, there exist linear codes over $\F$ with blocklength $n = k^{1+\gamma}$ that are relaxed locally decodable with $O(1)$ queries with respect to decoding radius $\delta = \Omega(1)$. Moreover, given any constant proximity parameter $\eps \in (0, \delta]$, the code is also locally testable with $O(1)$ queries.
\end{theorem}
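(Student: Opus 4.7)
My plan is to derive the statement from the existing literature on locally testable and relaxed locally decodable codes, with the only nontrivial addition being the extension from binary alphabets to arbitrary finite fields. First, I would invoke the construction of \cite{BGHSV06} to obtain, for every constant $\gamma > 0$, a binary linear code $C_0 \colon \F_2^k \to \F_2^n$ with $n = k^{1+\gamma}$ that is simultaneously an LTC and a $q$-query relaxed LDC with $q = O(1)$, constant decoding radius, and constant error rate. (One could alternatively appeal to \cite{AS21}, which improves the blocklength overhead to polylogarithmic while preserving all other parameters; for the statement as written, any $\gamma>0$ suffices, so BGHSV06 is enough.)

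The second step is to lift this construction to an arbitrary finite field $\F$. The BGHSV-style code is obtained by concatenating a base linear code (e.g., a Reed--Muller code) with a PCP of proximity attesting to the consistency of the base codeword with its message. Both components generalise cleanly to any finite field: Reed--Muller codes are defined over every $\F$, with the same rate, distance, and local correction properties, and the PCP of proximity is built from sum-check and low-degree testing, each of which works verbatim over any field of large enough characteristic (and when $|\F|$ is too small, one simulates over an extension $\F_{|\F|^t}$ of constant degree $t$ and projects back, which preserves all parameters up to constants). The verifier in the local testing and relaxed local decoding procedures only inspects $O(1)$ symbols and its acceptance predicate is an $\F$-linear check, so the entire soundness/completeness analysis of \cite{BGHSV06} carries over mutatis mutandis.

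The part that needs the most care is ensuring that \emph{both} the LTC property and the relaxed-LDC property survive the generalisation simultaneously, together with linearity over $\F$. Linearity is immediate, since both the base code and the PCP of proximity are $\F$-linear whenever their binary versions are $\F_2$-linear (the PCP proof string is a concatenation of low-degree polynomial evaluations and sum-check transcripts, all $\F$-linear in the input). The LTC tester and the relaxed decoder are the ones given by \cite{BGHSV06}, with the $\F_2$ equality checks replaced by $\F$ equality checks; the soundness analysis depends only on the distance of Reed--Muller codes and the soundness of low-degree tests over $\F$, both of which give constant parameters for constant query budget. Putting these pieces together yields the claimed $\F$-linear code with the desired LTC and relaxed LDC parameters.
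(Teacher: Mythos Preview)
Your proposal is correct and matches the paper's treatment: the paper does not prove \Cref{thm:rldc} but simply cites \cite{BGHSV06, GGK19, CGS20, AS21} for the binary construction and, in a footnote, remarks that the extension to larger alphabets (including odd characteristic) holds because the two underlying components---a base linear code and a PCP of proximity---both generalise readily. Your write-up expands on exactly this observation with more detail than the paper provides, but the approach is the same.
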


Moreover, the tester and local decoder for these codes are one-sided (i.e., always accept when given a valid codeword as input), and the blocklength cannot be improved to linear \cite{GL21, DGL21}.

\paragraph{Communication complexity.} In the model of quantum communcation complexity, two parties with unbounded computational power aim to compute a joint predicate by communicating the smallest number of qubits with each other. Alice knows $x \in \bitset^k$, Bob knows $y \in \bitset^k$ and both hold a function $f: \bitset^{2k} \to \bitset$, and, by communicating qubits with each other, they must compute $f(x,y)$ with bounded probability of error. The \emph{communication complexity} of $f$ is the worst-case number qubits that need to be communicated in order to compute $f(x,y)$ over all $x, y$, minimised over all communication protocols.

We will make use of the well-known communication complexity problem of \emph{disjointness}.

\begin{definition}
\label{disj}
Let $x, y \in \bitset^k$ and $S, T \subseteq [k]$ be sets whose indicator vectors are $x$ and $y$, respectively; i.e., $S = \set{i \in [k] : x_i = 1}$ and $T = \set{i \in [k] : y_i = 1}$. Then $\disj_k(x,y) = 1$ if and only if $S$ and $T$ are disjoint, that is,
\begin{equation*}
    \disj_k(x,y) = \left\{\begin{array}{ll}1 & \text{ if } S \cap T = \varnothing\\0 & otherwise\end{array}\right. = \left\{\begin{array}{ll}1 & \text{ if } x_i = 0 \text{ or } y_i = 0 \text{ for all } i \in [k],\\0 & otherwise.\end{array}\right.
\end{equation*}
\end{definition}

This problem is known to be hard for quantum communication protocols, requiring $\Omega(\sqrt{k})$ qubits of communication, as shown in \cite{R03}.

\subsection{MAPs versus quantum testers}
\label{sec:map-vs-qpt}

We now set out to prove  \Cref{thm:map-vs-qpt}, which shows a property $\Pi_B$ that is efficiently testable with a short classical proof (\Cref{lem:Booleanity-in-qmap}) but for which a quantum tester must make a large number of queries (\Cref{lem:Booleanity-not-in-qpt}).

The property in question is defined as follows. First, consider a linear code $C:\F^k \rightarrow \F^n$ with $n= k^{1.001}$, over a field $\F$ of odd characteristic and size $O(1)$, that is both \emph{relaxed locally decodable} with $O(1)$ queries and decoding radius $\delta = \Omega(1)$; as well as \emph{locally testable} with $O(1)$ queries for any constant proximity parameter $\eps \in (0,\delta]$ (recall that \Cref{thm:rldc} shows that codes with these parameters exist).

The property $\Pi_B$ comprises the encoding of \emph{non-Boolean} messages, that is:
\begin{equation*}
    \Pi_B = \set{C(z) : z \in \F^k \setminus \bitset^k} \;.
\end{equation*}

We first show that the property $\Pi_B$ is efficiently testable via a MAP protocol with a short proof.
\begin{lemma}
\label{lem:Booleanity-in-qmap}
For any constant $\eps \in (0, \delta]$, $\Pi_B \in \MAP(\eps, \log n, O(1))$.
\end{lemma}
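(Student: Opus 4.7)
The plan is to construct a MAP for $\Pi_B$ by leveraging the two local algorithms guaranteed by the code $C$: the local tester (which certifies proximity to $C$) and the relaxed local decoder (which lets the verifier inspect a specified coordinate of the underlying message). The prover's message will simply be an index $i \in [k]$ which the prover claims satisfies $z_i \notin \{0,1\}$ for the (unique) codeword $C(z)$ close to the input $w$. This requires only $\lceil \log k \rceil = O(\log n)$ bits.

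On receipt of $i$, the verifier carries out two subroutines, each using $O(1)$ queries to $w$. First, it runs the local tester for $C$ with proximity parameter $\eps$; if this test rejects, the verifier rejects. Second, it runs the relaxed local decoder $D^w(i)$ to obtain a symbol $\sigma \in \F \cup \{\bot\}$, and accepts if and only if $\sigma \in \F \setminus \{0,1\}$. Since each subroutine uses $O(1)$ queries and we can drive their individual error probabilities below any desired constant by a constant number of independent repetitions, the total query complexity is $O(1)$.

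For completeness, if $w = C(z)$ with $z \notin \{0,1\}^k$, the honest prover picks any $i$ with $z_i \notin \{0,1\}$. The LTC accepts with high probability by its completeness, and by the completeness of the relaxed decoder, $D^w(i) = z_i \notin \{0,1\}$ with high probability. A union bound yields acceptance probability at least $2/3$. For soundness, suppose $w$ is $\eps$-far from $\Pi_B$, and fix any prover message $i$. Either $w$ is $\eps$-far from $C$ entirely, in which case the LTC rejects with high probability; or $w$ is $\eps$-close to $C$, hence (since $\eps \le \delta < \delta_C/2$) $\eps$-close to a unique codeword $C(z)$. Because $w$ is $\eps$-far from $\Pi_B$, this $z$ must lie in $\{0,1\}^k$, so in particular $z_i \in \{0,1\}$. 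The relaxed decoding guarantee then ensures $D^w(i) \in \{z_i,\bot\} \subseteq \{0,1,\bot\}$ with high probability, so the verifier rejects with probability at least $2/3$.

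The main subtlety is verifying that the second case of the soundness analysis is clean. The key observation is that if $w$ were $\eps$-close to some $C(z')$ with $z' \notin \{0,1\}^k$, then $w$ would be $\eps$-close to $\Pi_B$, contradicting the hypothesis; the unique decoding afforded by the condition $\eps < \delta_C/2$ is what forces the underlying message to be Boolean. Everything else is a standard composition of a locally testable code and a relaxed local decoder, with the prover's role reduced to pointing out a single purported witness coordinate.
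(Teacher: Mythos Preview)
Your proposal is correct and follows essentially the same approach as the paper: use the proof as an index $i\in[k]$, run the local tester for $C$ with proximity $\eps$, then run the relaxed local decoder on coordinate $i$ and accept iff the output lies in $\F\setminus\{0,1\}$, with the same two-case soundness analysis. The only minor difference is that the paper exploits the one-sidedness of the tester and decoder (acceptance with probability $1$ on valid codewords) rather than appealing to amplification and a union bound, but this does not affect the argument or the stated complexities.
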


\begin{proof}
The verifier will follow the following strategy to test $\Pi_B$: first, test whether the input is $\eps$-close to the code $C$ (which can be accomplished with $O(1)$ queries due to the local testability of $C$). If the tester rejects, then not only is the input far from $\Pi_B$, but from all of $C$, in which case it rejects.

Except with small probability, if the tester accepts the input is $\delta$-close to $C$, so we may locally decode any coordinate of the message (also with $O(1)$ queries); using the proof string to determine this location, the verifier then checks if the symbol at that coordinate is Boolean-valued.

This strategy is laid out in Algorithm \ref{alg:Booleanity-map}.

\begin{center}
    \begin{algorithm}[H]
    \SetAlgoLined

    \KwIn{explicit access to a proximity parameter $\eps > 0$ and a proof string $\pi \in \bitset^{\log n}$, as well as oracle access to $x \in \F^n$.}

    Test if the input $w$ is a valid codeword with proximity parameter $\eps$. Reject if the test rejects.

    Interpret the proof as an index $i \in [k]$, locally decode $z_i$ and accept if $z_i \in \F \setminus \bitset$. Otherwise, reject.

    \caption{MAP verifier for $\Pi_B$}
    \label{alg:Booleanity-map}
    \end{algorithm}
\end{center}

Note that the proof complexity is $\log n$ by definition, and, since both local testing and local decoding have query complexity $O(1)$, the verifier makes $O(1)$ queries in total (note that querying an element of $\F$ requires $O(1)$ \emph{bit} queries, so the complexities of the tester and decoder are still constant in terms of bit queries). Moreover, if $w \in \Pi_B$, then $w = C(z)$ for some $z \in \F^k \setminus \bitset^k$, and local testing succeeds with probability 1; and when the prover specifies a coordinate $i$ such that $z_i \notin \bitset$, local decoding also succeeds with probability 1, so completeness follows.

Now, if $w$ is $\eps$-far from $\Pi_B$, then either (1) the input $w$ is $\eps$-far from \emph{any} codeword of the code $C$; or (2) $w$ is $\eps$-close to some $C(z)$ such that $z \in \bitset^k$.

In the first case, local testing (and thus the verifier) will reject with probability $2/3$. In the second case, the testing step may not trigger a rejection, but the local decoder then outputs, regardless of the proof $i \in [k]$, either $\bot$ or $z_i \in \bitset$ with probability $2/3$, any of which cause the verifier to reject.
\end{proof}

The next lemma shows that, unlike MAPs, quantum testers cannot test $\Pi_B$ efficiently.

\begin{lemma}
\label{lem:Booleanity-not-in-qpt}
Any quantum tester for the property $\Pi_B$ with constant proximity parameter $\eps \in [0,\delta)$ must have query complexity $\Omega(n^{0.49})$.
\end{lemma}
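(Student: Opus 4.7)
The plan is to exhibit a reduction from the quantum communication complexity problem of set disjointness on $k$ bits to the task of quantum testing $\Pi_B$, exploiting the fact that codewords of $C$ have short blocklength (namely $n = k^{1.001}$, so that coordinates are indexable by $O(\log n)$ qubits). Assuming a quantum tester $T$ for $\Pi_B$ with query complexity $q$, I will construct a quantum communication protocol for disjointness using only $O(q \log n)$ qubits, and then invoke Razborov's $\Omega(\sqrt{k})$ lower bound \cite{R03} to conclude $q \log n = \Omega(\sqrt{k})$. Substituting $k = n^{1/1.001}$ yields $q = \Omega(n^{0.49})$.

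For the reduction, Alice and Bob receive $x, y \in \bitset^k$ respectively and view them as elements of $\F^k$. They use shared randomness to fix the internal randomness of $T$ and cooperatively simulate its execution on the purported codeword $w := C(x+y) \in \F^n$, with addition taken in $\F$. A query of the form $\ket{i}\ket{b} \mapsto \ket{i}\ket{b + w_i}$ is implemented as follows: Alice applies $\ket{i}\ket{b} \mapsto \ket{i}\ket{b + C(x)_i}$ locally -- possible because $C(x)_i$ is determined by $i$ and her input -- sends the $O(\log n)$ qubits to Bob, who applies $\ket{i}\ket{b} \mapsto \ket{i}\ket{b + C(y)_i}$ and returns the registers. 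By linearity of $C$, the composition equals the desired oracle to $C(x+y)$, so each simulated query costs $O(\log n)$ qubits of communication.

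It remains to check that the outcome of $T$ decides disjointness. If $x$ and $y$ are disjoint, then $x + y \in \bitset^k$ (coordinatewise, only $0+0 = 0$ and $0+1 = 1+0 = 1$ arise), so $C(x+y)$ is a codeword of a Boolean message; in particular, by the minimum distance $\delta_C$ of $C$, it is at distance at least $\delta_C$ from every \emph{other} codeword, and hence from every element of $\Pi_B$. Taking any constant $\eps \in (0, \delta_C)$, the tester must reject with probability at least $2/3$. If instead $x$ and $y$ intersect, some coordinate of $x+y$ equals $2$, which lies outside $\bitset$ because $\F$ has odd characteristic; hence $C(x+y) \in \Pi_B$, and the tester accepts with probability at least $2/3$. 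Thus the simulated tester solves disjointness, and Razborov's bound delivers the desired lower bound on $q$.

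The key conceptual step -- and the main obstacle I would anticipate in executing this proof -- is precisely the one highlighted in the technical overview: avoiding the naive simulation, in which each query to a length-$n$ oracle requires communicating $\Theta(n)$ qubits and collapses the lower bound to a trivial $\Omega(1)$. The workaround is to observe that because $\Pi_B$ is a property of \emph{codewords} of a polynomial-blocklength linear code, each coordinate is already addressable by $O(\log n)$ bits, and linearity lets Alice and Bob each contribute their share of $C(x+y)$ locally; once this viewpoint is in hand, the remaining bookkeeping on parameters and correctness is routine.
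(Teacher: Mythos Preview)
Your proposal is correct and takes essentially the same approach as the paper: reduce from quantum set disjointness by having Alice and Bob simulate each oracle query to $C(x+y)$ via the linearity of $C$ at a cost of $O(\log n)$ communicated qubits, and then invoke Razborov's $\Omega(\sqrt{k})$ bound together with $k = n^{1/1.001}$. The only minor superfluous step is the appeal to shared randomness---a quantum tester is already a fixed sequence of unitaries interleaved with oracle calls, so there is no ``internal randomness'' to fix---but this is harmless.
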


\begin{proof}

Recall that in the disjointness problem, Alice is given as input $x\in \bitset^k$, Bob is given $y\in \bitset^{k}$ and they must compute $\disj_k(x,y)$; and that $\Pi_B$ is the encoding of non-Boolean strings of length $k = n^{\frac{1}{1.001}}$ by the code $C$.

To show that any quantum tester needs $\Omega(n^{0.49})$ queries to $\eps$-test $\Pi_B$, we give a reduction showing that a tester with query complexity $q$ can be used to compute $\disj_k$ by communicating $O(q \log n)$ qubits. Since the quantum communication complexity of $\disj_{k}$ is $\Omega(\sqrt{k}) = \Omega(n^{\frac1{2} \cdot \frac1{1.001}}) = \Omega(n^{0.499})$, the query complexity of the quantum tester follows.

First, Alice and Bob use $C$ to encode $x$ and $y$, respectively. Now Alice holds $C(x) \in \F^n$ and Bob holds $C(y) \in \F^n$. Note that, defining $z \coloneqq x + y \in \F^k$, we have $\disj(x,y) = 0 \iff z \notin \bitset^k \iff C(z) \in \Pi_B$ (recall that the characteristic of $\F$ is larger than $2$, so that if $x_i = y_i = 1$, we have $z_i = 2 \notin \bitset$). Now, Alice and Bob respectively set up the unitaries $U_A$ and $U_B$ shown below, where the first register holds $\log n$ qubits and the second holds $O(1)$ qubits (enough to specify a single element of $\F$).

\begin{align*}
	\forall i \in [n], \alpha \in \F, \quad &U_A \ket{i}\ket{\alpha} = \ket{i}\ket{\alpha + C(x)_i}\\
	&U_B \ket{i}\ket{\alpha} = \ket{i}\ket{\alpha + C(y)_i}\;.
\end{align*}

Alice then simulates the quantum tester and only communicates with Bob in order to make a query to the oracle; if the tester accepts, Alice outputs 0, and she outputs 1 otherwise.

More precisely, whenever the tester calls the oracle $U$, which acts as $U \ket{i}\ket{\alpha} = \ket{i}\ket{\alpha + C(z)_i}$ on a $(\log n + O(1))$-bit quantum state $\rho$, Alice first applies $U_A$ to $\rho$, then sends all qubits to Bob; Bob then applies $U_B$ on the qubits it receives and returns them to Alice. This communicates a total of $O(\log n)$ qubits and implements the same transformation as querying $U$, since $U_B \cdot U_A \ket{i}\ket{j} =\ket{i}\ket{j + C(x)_i + C(y)_i} = \ket{i}\ket{j + C(x + y)_i}$ by the linearity of the code $C$ and the fact that $z = x + y$.

Each query made by the tester entails $O(\log n)$ qubits of communication, so that after $q$ queries, Alice and Bob exchange $O(q \cdot \log n)$ qubits in total. The tester accepts with probability at least $2/3$ when $C(z) \in \Pi_B \iff \disj(x,y) = 0$, in which case Alice outputs 0. If $\disj(x,y) = 1$, we have that $C(z)$ is $\delta$-far from $\Pi_B$ (since the relative distance of $C$ is $\delta$). Since $\eps \leq \delta$, the $\eps$-tester rejects with probability $2/3$ and Alice outputs 1 in this case.

Thus, Alice is able to compute $\disj_k$ with $O(q \cdot \log n)$ qubits of communication. Since the quantum communication complexity of $\disj_k$ is $\Omega(\sqrt k)$ \cite{R03}, we conclude that the tester must make $\Omega(\sqrt{k}/\log n) = \Omega(n^{0.499}/\log n) = \Omega(n^{0.49})$ queries.
\end{proof}

\begin{remark}
	Although we reduce $\disj$ to testing \emph{non}-Booleanity, a symmetric argument shows the same lower bound for the (arguably more natural) property of \emph{Booleanity} $\set{C(z) : z \in \bitset^k}$. Often, one key step in PCP constructions is to check that an encoding corresponds to a logical assignment, i.e., that it is the encoding of a Boolean message. Therefore, bounds on Booleanity may have consequences for PCPs.
\end{remark}

We conclude this section with the separation immediately implied by \Cref{lem:Booleanity-in-qmap} and \Cref{lem:Booleanity-not-in-qpt}.

\begin{theorem}
    \label{thm:map-vs-qpt}
    For every constant $\eps \in (0, \delta]$, the property $\Pi_B$ belongs to $\MAP(\eps, \log n, O(1))$ but does not belong to $\QPT(\eps, o(n^{0.49}))$. Therefore, \begin{equation*}
        \MAP(\eps, \log n, O(1)) \not \subseteq \QPT(\eps, o(n^{0.49}))\;.
    \end{equation*}
\end{theorem}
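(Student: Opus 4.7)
The plan is that this theorem is essentially a bookkeeping corollary of the two lemmas just established, so very little new work is required. First I would invoke \Cref{lem:Booleanity-in-qmap} to conclude that, for every $\eps \in (0, \delta]$, the property $\Pi_B$ admits a MAP verifier with proof length $\log n$ and query complexity $O(1)$, i.e., $\Pi_B \in \MAP(\eps, \log n, O(1))$. Then I would invoke \Cref{lem:Booleanity-not-in-qpt} to conclude that any quantum tester for $\Pi_B$ at the same proximity parameter requires $\Omega(n^{0.49})$ queries, so $\Pi_B \notin \QPT(\eps, o(n^{0.49}))$.

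Combining these two facts, I would observe that $\Pi_B$ is a witness property that lies in the former class but not in the latter, which immediately gives the desired non-containment $\MAP(\eps, \log n, O(1)) \not\subseteq \QPT(\eps, o(n^{0.49}))$. I do not see any subtlety in the quantifier ordering: both lemmas hold for the same range of $\eps$ and the same fixed property $\Pi_B$, so the witness works uniformly.

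The only thing worth double-checking is that the code parameters used in the construction of $\Pi_B$ are consistent across both lemmas. Namely, the relaxed LDC / LTC from \Cref{thm:rldc} (instantiated with $\gamma = 0.001$ so that $n = k^{1.001}$, over a constant-size field of odd characteristic) must simultaneously support (i) one-sided $O(1)$-query local testing at radius $\eps$, which \Cref{lem:Booleanity-in-qmap} needs, and (ii) linearity plus distance $\delta$, which \Cref{lem:Booleanity-not-in-qpt} uses in the disjointness reduction. Both requirements are part of the same code construction, so there is no obstacle here. Since the argument is purely a combination of two already-proven statements about a single explicit property, I do not anticipate any technical difficulty; the substance of the separation lives entirely in the two lemmas.
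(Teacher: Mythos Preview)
Your proposal is correct and matches the paper's approach exactly: the paper states that the theorem is ``immediately implied by \Cref{lem:Booleanity-in-qmap} and \Cref{lem:Booleanity-not-in-qpt}'' and gives no further argument. Your additional sanity check that both lemmas rely on the same instantiation of the code from \Cref{thm:rldc} is accurate and harmless, though the paper does not bother to spell it out.
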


\subsection{Quantum testers versus MAPs}
\label{sec:qpt-vs-map}

In this section, we will show the existence of a property that is easily testable with a quantum tester, but for which a classical tester -- even with additional access to a proof -- must make a number of queries that depends strongly on the length of the input. More formally, we will show in \Cref{thm:qpt-vs-map} the existence of a property of $n$-bit strings in $\QPT(\eps, O(1/\eps))$ that is not in $\MAP(\eps,p,q)$ when $p \cdot q = o(n^{1/4})$ and $\eps$ is a small enough constant.

The property in question is derived from Forrelation, a problem that strongly separates classical and quantum algorithms in the query model; in fact, the work that proved such a separation already shows that it carries over to the property testing setting \cite{AA18}, which we will extend to the setting of MAPs. Formally, we have
\begin{lemma}[\cite{AA18}]
    \label{PiF_in_QPT}
    Define the property $\Pi_F$ as
    \begin{equation*}
        \Pi_F = \set{(f,g) : \Phi_{f,g} \leq 1/100},
    \end{equation*}
    where $(f,g)$ are $n/2$-bit strings corresponding to pairs of $\log (n/2)$-bit Boolean functions and $\Phi_{f,g} = (n/2)^{-3/2} \sum_{x, y \in \bitset^{\log(n/2)}} f(x) (-1)^{x \cdot y} g(y)$.

    \noindent Then, for any $\eps > 0$ sufficiently small,
    \begin{equation*}
        \Pi_F \in \QPT(\eps, O(1/\eps)) \text{ and } \Pi_F \notin \PT(\eps, o(\sqrt{n}/\log n)).
    \end{equation*}
\end{lemma}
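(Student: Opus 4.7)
The plan is to derive both claims by specialising results of Aaronson and Ambainis on the Forrelation problem \cite{AA18} to the property-testing formulation given here.

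For the $\QPT$ upper bound, I would use the standard Forrelation circuit on $\log(n/2)$ qubits: prepare the uniform superposition $\frac{1}{\sqrt{N}}\sum_x \ket{x}$ with Hadamards, apply the phase oracle for $f$, apply Hadamards again, apply the phase oracle for $g$, apply Hadamards a final time, and measure in the computational basis. A direct calculation shows that the probability of observing the all-zeros outcome equals $\Phi_{f,g}^2$, using only two oracle queries. To upgrade this into an $\eps$-tester I would invoke the robustness statement of \cite{AA18}: for $\eps$ sufficiently small, every $(f,g)$ that is $\eps$-far from $\Pi_F$ satisfies $|\Phi_{f,g}|\ge c(\eps)$ for some explicit $c(\eps)$. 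Amplitude estimation (or iterated execution followed by a threshold test) then amplifies this acceptance-probability gap to a constant with only $O(1/\eps)$ further queries.

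For the classical lower bound I would run a Yao-style indistinguishability argument. Let $D_{\text{yes}}$ be the uniform distribution on $(f,g) \in \bitset^{n}$, for which $|\Phi_{f,g}| = O(1/\sqrt{n})$ with overwhelming probability, so that the sample lies inside $\Pi_F$; and let $D_{\text{no}}$ be the ``forrelated'' distribution of \cite{AA18}, under which $\Phi_{f,g}$ concentrates around a positive constant. The Aaronson--Ambainis lower bound asserts that no classical algorithm making $o(\sqrt{n}/\log n)$ queries distinguishes $D_{\text{yes}}$ from $D_{\text{no}}$ with constant advantage; any $\eps$-tester for $\Pi_F$ would distinguish them, provided $D_{\text{no}}$ samples are typically $\eps$-far from $\Pi_F$.

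The main technical obstacle is precisely this last point: verifying that forrelated inputs are $\Omega(1)$-far from $\Pi_F$ in normalised Hamming distance, rather than merely outside $\Pi_F$. A naive Lipschitz estimate shows that a single bit flip changes $\Phi_{f,g}$ by at most $O(1/\sqrt{n})$, which by itself would yield only $\eps = \Omega(1/\sqrt{n})$-farness, far too weak for a constant-$\eps$ statement. The sharper bound requires exploiting the structure of the forrelated distribution---concretely, that its Fourier spectrum is spread out rather than concentrated on a single coefficient---to argue that no adversarial sequence of bit flips can coherently drive $\Phi$ down at the worst-case Lipschitz rate, so that a typical forrelated pair must disagree with every member of $\Pi_F$ in a constant fraction of coordinates. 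Once this farness claim is in hand, both halves of the lemma follow from the cited results of \cite{AA18}.
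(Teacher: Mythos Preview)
The paper does not give its own proof of this lemma: it is stated purely as a citation of \cite{AA18} and used as a black box. The only place the paper touches the underlying argument is later, in the proof of \Cref{lem:mapma}, where it invokes \cite[Lemma~40]{AA18} to show that any $(f,g)$ with $\Phi_{f,g} > 3/5$ is $\eps$-far from $\Pi_F$ for sufficiently small constant $\eps$.

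Your outline is essentially the right reconstruction of the \cite{AA18} argument, and you correctly isolate the one nontrivial step: farness of forrelated inputs in normalised Hamming distance. However, your proposed resolution mischaracterises the tool. You attribute the sharper-than-Lipschitz bound to ``the structure of the forrelated distribution'' and to a ``typical forrelated pair'' having spread Fourier spectrum. In fact the relevant statement---precisely \cite[Lemma~40]{AA18}, which the paper cites---is a \emph{deterministic, pointwise} stability bound: for \emph{any} $(f,g)$ and $(f',g')$ that are $\eps$-close in Hamming distance, one has $|\Phi_{f,g}-\Phi_{f',g'}| = O(\sqrt{\eps}\log(1/\eps))$. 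This holds for all pairs, not just typical samples from the forrelated distribution, and its proof is a Cauchy--Schwarz argument exploiting unitarity of the Hadamard transform; no distributional or spectral-spreading reasoning enters. Once you have this lemma, every input with $\Phi_{f,g}>3/5$ is automatically $\Omega(1)$-far from $\Pi_F$, and the rest of your sketch (the two-query Forrelation circuit for the upper bound, and the Yao argument against the uniform-versus-forrelated distributions for the lower bound) goes through as you describe.
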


Therefore, the property $\Pi_F$ is easy for quantum testers and hard for their classical counterparts. This section is thus devoted to showing that testing $\Pi_F$ is hard not only for property testers, but for MAPs as well: we will prove that a MAP for $\Pi_F$ requires proof length $p$ and query complexity $q$ satisfying $p q = \Omega(n^{1/4})$. We first introduce relevant definitions and theorems, then describe the steps of the proof.

Recall that $\MA$ is the class of languages that are decidable \emph{in polynomial time} with a (polynomial-size) proof string, the analogue of which is $\MAP$ in the property-testing setting. By \cite{HHT97}, $\MA$ is contained in the class $\mathcal{BPP}_\text{path}$ of languages decidable (with high probability) by a randomised Turing machine whose computational paths are all equally likely.\footnote{In $\mathcal{BPP}$, the probability of following a computational path is a function of its length (which coincides with the number of random coins flipped by the algorithm). $\mathcal{BPP}_\text{path}$ differs from $\mathcal{BPP}$ by lifting this restriction.} Query lower bounds for Forrelation (i.e., deciding whether $\abs{\Phi_{f,g}} \leq 1/100$ or $\Phi_{f,g} > 3/5$ for a pair $(f,g)$ of Boolean functions) against the latter are known:

\begin{proposition}[\cite{A10,C16}]
    \label{thm:bpp-path-lb}
    Any $BPP_\text{path}$ algorithm for Forrelation must make $\Omega(n^{1/4})$ queries to its input.
\end{proposition}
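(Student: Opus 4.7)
The plan is to reduce the problem to a polynomial-method lower bound. First, I would invoke the standard inclusion $\mathcal{BPP}_{\text{path}} \subseteq \mathcal{PP}$ (via postselection), so that it suffices to lower-bound the $\mathcal{PP}$ query complexity of the Forrelation promise problem. Next, I would use the classical characterization that a $q$-query $\mathcal{PP}$ algorithm induces a real polynomial of degree at most $q$ in the input bits that sign-represents the decision predicate on promise inputs; hence the task reduces to lower-bounding the \emph{threshold degree} of Forrelation.

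The key technical step is to construct a dual witness---a signed measure $\mu$ on input pairs $(f,g)$ whose correlation with the Forrelation predicate is $\Omega(1)$, but whose low-degree Fourier moments vanish up to degree $\Omega(n^{1/4})$. I would build $\mu$ as a signed combination of two distributions: one obtained by rounding a Gaussian planted so as to force $\Phi_{f,g}$ to be large in expectation, and another supported on near-uniform pairs in $\bitset^{n/2} \times \bitset^{n/2}$ where $\Phi_{f,g}$ concentrates near zero. With appropriate weighting, the low-degree moments can be made to cancel while the correlation with $\sgn(\Phi_{f,g} - 1/100)$ survives.

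The main obstacle will be the Fourier-analytic argument showing that every low-degree moment of $\mu$ actually vanishes up to degree $\Omega(n^{1/4})$. This is the dual polynomial machinery pioneered by Aaronson and Ambainis for the $\tilde{\Omega}(\sqrt{n})$ quantum query lower bound on Forrelation, but in the sign-representation (rather than $L_\infty$-approximation) regime one can only afford a coarser cancellation, which forces the weaker $n^{1/4}$ scaling---consistent with the fact that $\mathcal{BPP}_{\text{path}}$ is strictly more powerful than $\mathcal{BPP}$ yet weaker than bounded-error quantum queries. Once the dual witness is in hand, LP duality between sign-representing polynomials and signed measures immediately yields the $\Omega(n^{1/4})$ threshold degree bound, and hence the claimed $\mathcal{BPP}_{\text{path}}$ query lower bound.
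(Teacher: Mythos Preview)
Your plan has a fatal gap at the very first reduction. The inclusion $\mathcal{BPP}_{\text{path}}\subseteq\PP$ is fine, and in the query model it does hold query-for-query (so a lower bound on $\PP$ query complexity would indeed lower-bound $\mathcal{BPP}_{\text{path}}$). The problem is that the $\PP$/$\UPP$ query complexity of Forrelation is $O(1)$, not $\Omega(n^{1/4})$. Forrelation is solved by a \emph{single} quantum query; by the polynomial method this gives a real polynomial of degree $O(1)$ that approximates the promise predicate to within $1/3$, and hence (after shifting by $1/2$) sign-represents it. Thus $\thrdeg(\text{Forrelation})=O(1)$, and no dual witness of the kind you describe can possibly exist---the dual LP is infeasible at degree $\omega(1)$. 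Your parenthetical about an ``$\tilde\Omega(\sqrt n)$ quantum query lower bound on Forrelation'' is also inverted: Aaronson and Ambainis prove an $\tilde\Omega(\sqrt n)$ \emph{randomised} lower bound together with an $O(1)$ quantum \emph{upper} bound; that is the whole point of the problem.

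The paper does not actually prove this proposition; it is quoted from \cite{A10,C16}. The arguments there cannot and do not pass through $\PP$ or threshold degree, precisely because those collapse for Forrelation. Instead one works with the structure of $\mathrm{postBPP}=\mathcal{BPP}_{\text{path}}$ directly: a $q$-query postselected randomised algorithm is governed by a \emph{ratio} $p(x)/r(x)$ of two \emph{nonnegative} degree-$q$ polynomials (the probabilities of ``accept and do not abort'' and ``do not abort''), and one shows that no such ratio with $q=o(n^{1/4})$ can separate the forrelated distribution from the uniform one. The nonnegativity constraint on both numerator and denominator is what makes $\mathcal{BPP}_{\text{path}}$ genuinely weaker than $\UPP$ here; a single signed polynomial, as in the threshold-degree characterisation, is free to reuse the degree-$O(1)$ polynomial coming from the quantum algorithm and faces no obstruction.
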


  We are now ready to describe the three steps taken in proving hardness of $\Pi_F$ for $\MAP$: we (1) show that transforming an MA algorithm with proof complexity $p$ and query complexity $q$ into a $BPP_\text{path}$ one \cite{HHT97} yields an algorithm with query complexity $O(pq)$; (2) show how a MAP for $\Pi_F$ implies an $\MA$ algorithm with the same parameters \emph{for Forrelation}; and (3) conclude that $pq = o(n^{1/4})$ implies a $\mathcal{BPP}_\text{path}$ upper bound of $O(pq) = o(n^{1/4})$ for the query complexity of Forrelation, which contradicts \Cref{thm:bpp-path-lb}.

The original proof of $\MA \subseteq \mathcal{BPP}_\text{path}$ (\cite{HHT97}, Theorem 3.7) takes an MA algorithm with proof complexity $p$ and constant success probability, repeats the execution $O(p)$ times, amplifying the success probability to $1 - O(2^{-p})$, and then defines a $\text{BPP}_\text{path}$ machine as follows. The machine (non-deterministically) guesses a proof string and simulates the MA algorithm with it, spawning ``dummy'' execution paths if the MA algorithm accepts. Inspecting this transformation in the query model, we obtain a quadratic overhead: if the MA algorithm has query complexity $q$ and proof complexity $p$, the $\text{BPP}_\text{path}$ machine thus obtained has query complexity $O(p q)$ (the $O(p)$ repetitions of the MA algorithm increase its query complexity multiplicatively by this amount, while the dummy paths make no queries).

The second step is formalised by the following lemma.
\begin{lemma}
    \label{lem:mapma}
    A MAP protocol for $\Pi_F$ with sufficiently small proximity parameter $\eps > 0$ implies an $\MA$ algorithm for Forrelation with the same query and proof complexities as the MAP protocol.
\end{lemma}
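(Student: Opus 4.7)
The plan is to use the MAP verifier directly as an MA verifier for Forrelation, without modifying either the queries or the proof. Given a MAP $V$ for $\Pi_F$ with proximity parameter $\eps$, the MA verifier on input $(f,g)$ with proof $\pi$ simply runs $V^{(f,g)}(n,\eps,\pi)$ and outputs its verdict. Completeness is immediate: a YES instance of Forrelation satisfies $|\Phi_{f,g}| \le 1/100$, hence $(f,g) \in \Pi_F$, so MAP completeness supplies an accepting proof with probability at least $2/3$.

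The substantive step is soundness. I would argue that every NO instance of Forrelation, i.e.\ $(f,g)$ with $\Phi_{f,g} > 3/5$, is $\eps$-far from every member of $\Pi_F$, which by MAP soundness makes the verifier reject with probability at least $2/3$ on every proof. To establish this distance lower bound, I would observe that $\Phi_{f,g}$ is Lipschitz with respect to the Hamming metric on the input $(f,g)$: flipping a single bit $f(x_0)$ alters $\Phi_{f,g}$ by $\pm (n/2)^{-3/2} \sum_y (-1)^{x_0 \cdot y} g(y)$, whose magnitude is at most $(n/2)^{-3/2} \cdot (n/2) = O(1/\sqrt{n})$, and symmetrically for flipping a bit of $g$. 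Consequently, transforming a NO instance (with $\Phi > 3/5$) into any YES instance (with $\Phi \le 1/100$) requires closing a gap of at least $3/5 - 1/100$, which by the Lipschitz bound demands $\Omega(\sqrt{n})$ bit flips, i.e.\ relative Hamming distance $\Omega(1/\sqrt{n})$. Taking $\eps$ strictly smaller than this threshold --- which is precisely what the ``sufficiently small $\eps$'' hypothesis encodes --- ensures that every NO instance is $\eps$-far from $\Pi_F$, closing the soundness case.

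Because the reduction is entirely black-box --- the MA verifier executes the MAP verifier query for query and reads the same proof state bit for bit --- the proof and query complexities are preserved exactly, with no extra queries or auxiliary proof bits introduced. The main (and only) conceptual obstacle is the distance lower bound just described, and it is fully resolved by the elementary Lipschitz calculation above; all other steps are automatic from the definitions of the MA and MAP models.
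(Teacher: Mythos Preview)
Your reduction is the right one and the per-bit Lipschitz calculation is correct as far as it goes, but it only shows that NO instances of Forrelation are $\Omega(1/\sqrt{n})$-far from $\Pi_F$, so your argument establishes the lemma only for $\eps = O(1/\sqrt{n})$. You interpret ``sufficiently small $\eps$'' as permitting this, but the paper intends---and the downstream application (\Cref{thm:qpt-vs-map}) requires---the lemma for a small \emph{constant} $\eps$. That theorem assumes, for contradiction, a MAP for $\Pi_F$ with constant proximity parameter; such a MAP is not automatically a MAP for $\eps = O(1/\sqrt{n})$ (testing with smaller $\eps$ is harder, not easier), so your version of the lemma cannot be invoked there.

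The gap is that while the single-flip Lipschitz constant $O(1/\sqrt{n})$ is tight, summing it naively over $\eps n$ flips is extremely lossy. The paper instead appeals to \cite[Lemma~40]{AA18}, which gives $|\Phi_{f',g'}-\Phi_{f,g}| = O(\sqrt{\eps}\log(1/\eps))$ whenever $(f',g')$ is $\eps$-close to $(f,g)$; hence NO instances are in fact $\Omega(1)$-far from $\Pi_F$, and the lemma holds for constant $\eps$. The same phenomenon is visible directly: writing $\Phi_{f,g}$ as (a normalisation of) $\langle f, Hg\rangle$, flipping the bits of $f$ on a set $S$ changes $\Phi$ by roughly $(n/2)^{-1/2}\sum_{x\in S}\widehat{g}(x)$, which by Cauchy--Schwarz and Parseval is $O(\sqrt{|S|/(n/2)}) = O(\sqrt{\eps})$, not the $O(\eps\sqrt{n})$ your triangle-inequality bound gives.
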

\begin{proof}
    We define an MA protocol \emph{for Forrelation} (as a gap problem) the natural way: the proofs and queries correspond to the proofs and queries of the MAP, and the MA verifier accepts if and only if the MAP rejects.

    To show correctness of this protocol, we follow the reduction of \cite{AA18}. Specifically, \cite[Lemma 40]{AA18} shows that any $(f',g')$ such that $f'$ is $\eps$-close to $f$ and $g'$ is $\eps$-close to $g$ satisfies $\abs{\langle f', H g' \rangle - \langle f, H g \rangle} = O(\sqrt{\eps} \log(1/\eps))$. By choosing a suitably small $\eps$, the right-hand side is at most (say) $1/100$. Thus any $(f,g)$ such that $\langle f, H g \rangle > 3/5$ is $\eps$-far from $\Pi$, and it follows that the MAP protocol will accept (with high probability) pairs $(f,g)$ such that $\abs{\langle f, H g \rangle} \leq 1/100$ and will reject if $\langle f, H g \rangle > 3/5$. Therefore, the MA protocol is able to distinguish between the two cases.
\end{proof}

A simple argument now proves the separation.

\begin{theorem}
    \label{thm:qpt-vs-map}
    Let $\eps > 0$ be a small enough constant. There exists a property $\Pi_F$ such that $\Pi_F \in \QPT(\eps, O(1/\eps))$ and $\Pi_F \notin \MAP(\eps,p,q)$ for any $p,q$ such that $p \cdot q = o(n^{1/4})$.
\end{theorem}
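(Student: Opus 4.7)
The proof essentially assembles the three ingredients already prepared in the paragraphs preceding the theorem statement, so my plan is to execute that assembly carefully while being explicit about where each hypothesis is used.

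First, I would invoke \Cref{PiF_in_QPT} directly to settle the positive part: for sufficiently small constant $\eps > 0$ we have $\Pi_F \in \QPT(\eps, O(1/\eps))$. No additional work is needed here.

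For the negative part I would argue by contrapositive. Suppose, for the sake of contradiction, that there exists a MAP protocol for $\Pi_F$ with proof complexity $p$ and query complexity $q$ satisfying $pq = o(n^{1/4})$, at some sufficiently small constant proximity parameter $\eps$. By \Cref{lem:mapma}, this MAP yields an $\MA$ algorithm for the Forrelation gap problem (distinguishing $|\Phi_{f,g}| \le 1/100$ from $\Phi_{f,g} > 3/5$) with the same proof complexity $p$ and query complexity $q$.

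Next, I would apply the quantitative form of the inclusion $\MA \subseteq \mathcal{BPP}_{\text{path}}$ noted in the preceding discussion following \cite{HHT97}: amplifying the $\MA$ algorithm's success probability to $1 - 2^{-\Omega(p)}$ costs a factor of $O(p)$ in queries, and the $\mathcal{BPP}_{\text{path}}$ machine then non-deterministically guesses the proof and simulates the amplified verifier, spawning accepting dummy paths without additional queries. The resulting $\mathcal{BPP}_{\text{path}}$ algorithm for Forrelation uses $O(pq)$ queries. Combined with our assumption $pq = o(n^{1/4})$, this gives a $\mathcal{BPP}_{\text{path}}$ algorithm for Forrelation making $o(n^{1/4})$ queries, directly contradicting the $\Omega(n^{1/4})$ lower bound of \Cref{thm:bpp-path-lb}. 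Hence no such MAP exists, completing the proof.

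The argument is essentially bookkeeping, and the only step with genuine content is the query-model accounting for the $\MA \subseteq \mathcal{BPP}_{\text{path}}$ reduction; but since the paper has already observed that the standard Hemaspaandra-Hempel-Thierauf construction introduces only the multiplicative $O(p)$ blowup from success amplification (the path-spawning step is query-free), I would simply cite that observation rather than redo it. The only subtlety worth flagging explicitly is that the MAP-to-MA conversion in \Cref{lem:mapma} requires $\eps$ to be small enough that the $O(\sqrt{\eps}\log(1/\eps))$ robustness bound of \cite{AA18} fits inside the Forrelation gap; this is the same constant-$\eps$ restriction under which \Cref{PiF_in_QPT} is applied, so both halves of the theorem are stated for a common range of proximity parameters.
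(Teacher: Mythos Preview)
Your proposal is correct and follows essentially the same approach as the paper's own proof: assume a MAP with $pq = o(n^{1/4})$, convert it to an $\MA$ algorithm for Forrelation via \Cref{lem:mapma}, apply the query-model $\MA \subseteq \mathcal{BPP}_{\text{path}}$ transformation with $O(pq)$ overhead, and contradict \Cref{thm:bpp-path-lb}. Your write-up is slightly more explicit about the positive part and the common $\eps$ regime, but the argument is the same.
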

\begin{proof}
    Suppose, towards contradiction, that there existed a MAP for $\Pi_F$ with any proximity parameter $\eps$, as well as proof complexity $p$ and query complexity $q$ satisfying $p \cdot q = o(n^{1/4})$.

    By \Cref{lem:mapma}, there exists an MA protocol for Forrelation with the same query and proof complexities, which can then be transformed into a $BPP_\text{path}$ algorithm with query complexity $O(p \cdot q) = o(n^{1/4})$ for the same problem. But this contradicts the $\Omega(n^{1/4})$ lower bound of \Cref{thm:bpp-path-lb}.
\end{proof}

We are finally ready to prove the main separation, by exhibiting a property $\Pi$ in $\QCMAP$ which is in neither $\MAP$ nor $\QPT$.

\begin{proof}[Proof of \Cref{thm:qmap-vs-qpt-map}]
Recall that our goal is to show that the property $\Pi = \Pi_B \times \Pi_F$ that is efficiently $\eps$-testable by a QCMAP, but not by a quantum tester (without a proof) nor classicaly with a proof, for some small enough $\eps = \Omega(1)$. To this end, we invoke \Cref{thm:map-vs-qpt} and \Cref{thm:qpt-vs-map} and give the following verifier strategy explicitly. Note that, while the strings in $\Pi_B$ are over an alphabet $\F$ larger than $\bitset$, since $\abs{\F} = O(1)$ each symbol can be represented by $O(1)$ bits (and the proof indicates the first bit in such a block).

\begin{center}
    \begin{algorithm}[H]
    \SetAlgoLined

    \KwIn{explicit access to a proximity parameter $\eps' = 2\eps > 0$ and a proof $i \in [n/2]$, as well as oracle access to a string $z \in \bitset^n$}

    Interpret the input as a concatenation of $n/2$-bit strings $x$ and $y$. Use the algorithm of \Cref{lem:Booleanity-in-qmap} to verify, with $O(1)$ queries and the proof $i$, whether $x \in \Pi_B$ with proximity $\eps \coloneqq \eps'/2$.

    Use the quantum tester of \Cref{PiF_in_QPT} to test, with $O(1)$ queries, if $y \in \Pi_F$ with proximity $\eps$.

    If both of the previous tests accepted, then accept; otherwise, reject.

    \caption{QCMAP verifier for $\Pi_B \times \Pi_F$ }
    \end{algorithm}
\end{center}

Completeness follows immediately from \Cref{lem:Booleanity-in-qmap} and \Cref{PiF_in_QPT}, since the verifier for $\Pi_B$ accepts with certainty and the and tester for $\Pi_F$ accepts with probability $2/3$ when $x \in \Pi_B$ and $y \in \Pi_F$. If, on the other hand, $(x,y)$ is $\eps$-far from $\Pi_B \times \Pi_F$, then either $x$ is $\eps$-far from $\Pi_B$ or $y$ is $\eps$-far from $\Pi_F$, and either the verifier for $\Pi_B$ or the tester for $\Pi_F$ will reject (with probability $2/3$). Thus, the QCMAP verifier for $\Pi$ (executed with respect to proximity parameter $\eps' = 2\eps$) implies $\Pi \in \QCMAP(\eps, \log n, O(1))$.

All that remains is to show $\Pi = \Pi_B \times \Pi_F$ does not admit an efficient quantum tester nor a MAP. Assume, towards contradiction, that either $\Pi \in \QPT(\eps, o(n^{0.49}))$ or $\Pi \in \MAP(\eps, p, q)$ when $p \cdot q = o(n^{1/4})$. In the first case, applying the tester for $\Pi$ to  $\Pi_B \times \set{y}$ for some fixed $y \in \Pi_F$ shows that $\Pi_B \in \QPT(\eps, o(n^{0.49}))$, a contradiction with \Cref{lem:Booleanity-not-in-qpt}. In the second case, applying the MAP protocol for $\Pi$ to $\set{x} \times \Pi_F$, for some fixed $x \in \Pi_B$, shows that $\Pi_F \in \MAP(\eps, p, q)$, a contradiction with \Cref{thm:qpt-vs-map}.
\end{proof}


\section{A hard class of problems for QMAPs}
\label{sec:kwise}

When introducing a new complexity class in the landscape of known classes, it is important not only to exhibit problems it can solve, but also problems it cannot. We set out to show a natural limitation on QMAPs in this section, by answering (negatively) the following question: if a property ``looks random'' on any subset of $q$ coordinates, can a quantum proof be of any help to a verifier with query complexity $q$?
Intuitively, the answer should be no: 
if querying $q$ coordinates provides no information as to whether or not an input satisfies a property, then any proof (quantum or otherwise) should not be able to offer more information in conjunction with the queries than it does on its own.

We formalise this intuition in \Cref{thm:kwise}, which states the following: if a property $\Pi \subset \bitset^n$ is $k$-wise independent and sparse (i.e., its size $|\Pi|$ is sufficiently small compared to the set of all $2^n$ bit strings), then $k$
is a lower bound on the number of queries made by any randomised query algorithm that accepts all inputs in $\Pi$ with probability strictly greater than $1/2$, and rejects with probability strictly greater than $1/2$ when run on any input that is far from $\Pi$. In other words, the $\UPP$ query complexity of testing $\Pi$ is at least $k$ (recall that $\UPP$ is the query model version of $\PP$, which captures randomized computation with small bias). Note that \emph{some} assumption on the sparsity of $\Pi$ is necessary for any non-trivial lower bound to hold, if only to rule out, e.g., the trivially testable property $\Pi = \bitset^n$.

Combining \Cref{thm:kwise} with the well-known inclusion  $\QMA \subseteq \PP$ \cite{MW05} 
allows us to conclude the following:
for any $k$-wise independent and sufficiently sparse property $\Pi$, the product of proof and query complexities of a QMAP for verifying membership in $\Pi$ with constant proximity parameter $\eps$ is $\Omega(k)$ (see \Cref{cor:kwise}).

The proof of \Cref{thm:kwise} works as follows. Our analysis shows that the sparsity of $\Pi$ ensures there exists a subset $\Pi' \subset \bitset^n$ that is far from $\Pi$ such that $\Pi'$  is \emph{also} $k$-wise independent (see \Cref{lem:kwisecode}). This means that any query algorithm making fewer than $k$ queries cannot
distinguish a random input in $\Pi$ from a random input in $\Pi'$, as both sets ``look random'' when inspecting
only $k$ bits of a randomly chosen input from the set. Yet since $\Pi'$ is far from $\Pi$, any testing procedure for $\Pi$ must distinguish $\Pi$ from $\Pi'$. Hence, any tester for $\Pi$ must make $k$ queries (even if it only outputs the correct answer on inputs
in $\Pi$ and $\Pi'$ with probability strictly greater than $1/2$).

\paragraph{Technical Details.}
We begin recalling the definition of $k$-wise independence.
\begin{definition}
    A set of strings $S \subseteq \bitset^n$ is called $k$-wise independent if, for any fixed set of indices $I \subset [n]$ of size $k$, the string $x_{|I}$ is uniformly random when $x$ is sampled uniformly from $S$. Equivalently, for every $y \in \bitset^k$,
    \begin{equation*}
        \abs{\set{x \in S : x_{|I} = y}} = \frac{\abs{S}}{2^k}.
    \end{equation*}
\end{definition}

We next show that, given any small enough set $S$ of strings, there exists a $\Omega(n)$-wise independent set that is $\eps$-far from $S$. In the following lemma, $H$ denotes the binary entropy function $H(\alpha) = -\alpha \log \alpha - (1-\alpha) \log(1-\alpha)$ (whose restriction to $[0,1/2]$ is bijective).
\begin{lemma}
    \label{lem:kwisecode}
    Let $\eps \in (0,H^{-1}(1/5))$ and $S \subseteq \bitset^n$ be such that $\abs{S} < 2^{(1/4 - H(\eps)) n}$. Then there exists a linear code $C$ that is $\eps$-far from $S$ with dual distance $\Omega(n)$; equivalently, $C$ is $\Omega(n)$-wise independent.
\end{lemma}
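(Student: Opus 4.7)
The plan is to prove existence via a probabilistic method argument. I would sample a uniformly random $k$-dimensional subspace $L \subseteq \F_2^n$ of an appropriately chosen dimension $k = \Theta(n)$, together with a uniformly random shift $v \in \F_2^n$, and work with the coset $C = v + L$. The shift $v$ is needed only to neutralise the trivial codeword $0 \in L$, which would otherwise lie in the Hamming ball around any low-weight element of $S$ and break the $\eps$-far condition for a strictly linear code; translation by $v$ preserves the uniform distribution on any fixed set of coordinates, so the $\Omega(n)$-wise independence of $C$ is identical to that of $L$ and the lemma's stated equivalence with the dual distance is unaffected.

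The argument then splits into two probabilistic bounds which I would combine via a union bound. For the dual distance, any fixed nonzero $y \in \F_2^n$ lies in the random $(n-k)$-dimensional subspace $L^\perp$ with probability at most $2^{-k}$, so union-bounding over the at most $2^{H(\delta) n}$ nonzero vectors of weight at most $\delta n$ gives that the minimum distance of $L^\perp$ exceeds $\delta n$ with probability at least $1 - 2^{H(\delta) n - k}$. This is at least $1/2$ provided $k > H(\delta) n + 1$, and I would fix a small constant $\delta = \Omega(1)$ with $H(\delta) < 3/4$.

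For $\eps$-farness, I would use that a uniformly random shift $v$ makes $C = v + L$ uniformly distributed (as a set) over cosets, so $\P[y \in C] = 2^{k - n}$ for every fixed $y \in \F_2^n$. Linearity of expectation together with the Hamming ball volume bound $|B(s, \eps n)| \le 2^{H(\eps) n}$ yields $\E[|C \cap B_\eps(S)|] \le |S| \cdot 2^{H(\eps) n + k - n}$, which by the hypothesis $|S| < 2^{(1/4 - H(\eps))n}$ is at most $2^{k - (3/4) n}$, and hence less than $1/2$ whenever $k < (3/4) n - 1$. The assumption $\eps < H^{-1}(1/5)$ is precisely what forces $1/4 - H(\eps) > 0$, making the bound on $|S|$ nontrivial in the first place. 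Any $k$ in the (then nonempty) interval $(H(\delta) n + 1,\, (3/4) n - 1)$ makes the overall union bound succeed, and the probabilistic method delivers a specific choice of $(v, L)$ for which $L^\perp$ has minimum distance $\Omega(n)$ and $(v + L) \cap B_\eps(S) = \emptyset$, i.e., $C$ is $\eps$-far from $S$.

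The only subtlety I expect is bookkeeping around identifying the affine code $v + L$ with the ``linear code $C$'' of the lemma: since the properties referred to by the statement, namely $\Omega(n)$-wise independence and the dual distance, are properties of the linear subspace $L$ alone, the shift $v$ is merely a device to avoid the pathological placement of the origin, and this identification is natural. Beyond this, the argument is a routine combination of a Gilbert--Varshamov-style union bound for the dual distance with a volume estimate for closeness to $S$, and I do not foresee additional obstacles.
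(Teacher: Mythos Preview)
Your approach is essentially the same as the paper's: both arguments pick a random code of dimension roughly $3n/4$, use a Gilbert--Varshamov-style union bound to get dual distance $\Omega(n)$, and a Hamming-ball volume bound plus the size hypothesis on $S$ to make the code miss $N_\eps(S)$. The paper samples a random generator matrix directly (and appeals to GV for the dual), whereas you sample a random subspace and prove the dual-distance bound by hand; these are cosmetic differences.

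The one substantive deviation is your random shift $v$. The paper works with a genuinely linear $C$ and asserts that each $C(x)$ is uniform in $\bitset^n$, which silently ignores $x=0$; if $S$ contained a string of weight at most $\eps n$, the zero codeword would automatically land in $N_\eps(S)$ and the paper's union bound would not go through as written. Your affine translate cleanly avoids this, and your remark that the only properties the application (\Cref{thm:kwise}) actually uses---$\Omega(n)$-wise independence and $\eps$-farness from $S$---are unaffected by the shift is exactly right. So your proof is a correct, slightly more careful version of the paper's argument; the ``linear'' in the lemma statement is not load-bearing.
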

\begin{proof}
    Let $C \colon \bitset^{3n/4} \to \bitset^{n}$ be a random linear code (where each entry of its generator matrix is a Bernoulli($1/2$) random variable). Then, for every $x \in \bitset^{3n/4}$, the codeword $C(x)$ is uniformly random in $\bitset^n$ (but \emph{not} independent of other codewords). Denoting by $N_\eps(S)$ the $\eps$-neighbourhood of $S$ (i.e., the set of bit strings at distance at most $\eps$ from $S$), we have:
    \begin{align*}
        \P[C \cap N_\eps(S) \neq \varnothing] &\leq \sum_{x \in \bitset^{3n/4}} \P[C(x) \in N_\eps(S)] = 2^{3n/4} \cdot \frac{\abs{N_\eps(S)}}{2^n}\\
        &\leq \frac{\abs{S} \cdot 2^{H(\eps) n}}{2^{n/4}} = o(1),
    \end{align*}
    and, by the probabilistic method, there exists a code $C \subset \bitset^n$ of size $2^{3n/4}$ that is $\eps$-far from $S$. Moreover, the dual code $C^\perp \colon \bitset^{n/4} \to \bitset^n$ is a linear code whose distance meets the Gilbert-Varshamov bound with high probability; that is, the distance of this dual code is $\Omega(n)$ with probability $1 - o(1)$, proving the claim.
\end{proof}

The previous lemma, when applied to a ``random-looking'' set $S$ (i.e., a $k$-wise independent $S$, for $k = o(n)$), will ensure that $S$ and the code $C$ are hard to distinguish. To make this precise, we first recall the definition of the \emph{threshold degree} of a (partial) function.

\begin{definition}
    Let $\mathcal{X} \subseteq \fbitset^n$ and let $f: \mathcal{X} \to \fbitset$ be any function defined on domain $\mathcal{X} \subseteq \fbitset^n$.\footnote{For notational convenience, we consider Boolean functions with codomain $\fbitset$, noting that this is equivalent to the usual codomain $\bitset$ by mapping $0 \to 1$, $1 \to -1$, and $\oplus$ to multiplication.} The threshold degree of $f$, denoted $\thrdeg(f)$, is the minimal degree of an $n$-variate polynomial $p$ that sign-represents $f$, i.e., such that $f(x) = \sgn(p(x))$ for all $x \in \mathcal{X}$.\footnote{Here, $\sgn(t)$ is defined to equal $1$ if $t > 0$, $-1$ if $t < 0$, and $0$ if $t=0$.} Note that no constraints are placed on the behaviour of $p(x)$ at inputs in $\fbitset^n \setminus \mathcal{X}$.
\end{definition}

The threshold degree is a measure of complexity of Boolean functions (in particular), so that we expect functions with high threshold degree to also have high query complexity. This intuition is validated by the following folklore result: the minimal query complexity of a UPP algorithm that computes $f$ is exactly equal to its threshold degree. We provide a proof of this fact for completeness, as, to the best of our knowledge, it is not explicitly proven in the literature.

We write $f \in \UPP(q)$ when there exists a UPP algorithm with query complexity $q$ that computes $f$, and denote by $q(f)$ the integer such that $f \in \UPP(q(f))$ but $f \notin \UPP(q(f)-1)$.

\begin{lemma}
\label{lem:upp}
    For any $\mathcal{X} \subseteq \fbitset^n$ and $f\colon \mathcal{X} \to \fbitset$, it holds that $\thrdeg(f) = q(f)$.
\end{lemma}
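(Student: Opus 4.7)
The plan is to establish the equality $\thrdeg(f) = q(f)$ by proving the two inequalities separately, using the standard correspondence between multilinear polynomials over $\fbitset^n$ and randomised decision trees.

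For $q(f) \le \thrdeg(f)$, I would start from a multilinear polynomial $p(x) = \sum_{S \subseteq [n]} c_S \prod_{i \in S} x_i$ of degree $d = \thrdeg(f)$ that sign-represents $f$, and design a UPP algorithm making at most $d$ queries as follows. Let $C := \sum_S |c_S|$. Using private random coins (which UPP allows unboundedly and which do not count toward the query complexity), sample a set $S$ with probability $|c_S|/C$, query the at most $d$ coordinates of $x$ indexed by $S$, and output $b := \sgn(c_S) \cdot \prod_{i \in S} x_i \in \fbitset$. A one-line computation shows
\begin{equation*}
  \mathbb{E}[b] = \frac{1}{C} \sum_S c_S \prod_{i \in S} x_i = \frac{p(x)}{C},
\end{equation*}
so $\Pr[b = f(x)] - \Pr[b = -f(x)] = f(x) \cdot p(x) / C > 0$ on $\mathcal{X}$, i.e., the algorithm outputs $f(x)$ with probability strictly greater than $1/2$, as required by the UPP definition.

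For the converse $\thrdeg(f) \le q(f)$, let $A$ be a UPP algorithm for $f$ with query complexity $q$. Viewing $A$ as a distribution over deterministic decision trees of depth at most $q$, a standard argument writes the expected output $P(x) := \mathbb{E}[A(x)]$ as a multilinear polynomial in $x$ of degree at most $q$: for each deterministic tree, the output expands as a sum, over leaves, of the leaf's $\fbitset$-label multiplied by the indicator of the leaf being reached, which is a product of at most $q$ factors of the form $(1 \pm x_i)/2$; averaging over the distribution of trees preserves the degree bound. The UPP guarantee gives $\sgn(P(x)) = f(x)$ on $\mathcal{X}$, so $P$ sign-represents $f$ with $\deg(P) \le q$, whence $\thrdeg(f) \le q$.

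The main subtlety, rather than a genuine obstacle, is keeping conventions straight: because $x \in \fbitset^n$ and $f$ takes values in $\fbitset$, the algorithm's output should be expressed in $\fbitset$ to line up cleanly with the polynomial representation, and multilinearity is automatic only in this encoding since $x_i^2 = 1$. With these conventions fixed at the outset, both directions reduce to the short, transparent calculations above; no properties of $\mathcal{X}$ beyond $\mathcal{X} \subseteq \fbitset^n$ are used, which is consistent with the lemma being stated for arbitrary partial functions.
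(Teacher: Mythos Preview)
Your proposal is correct and follows essentially the same argument as the paper: for $q(f)\le\thrdeg(f)$ you sample a monomial with probability proportional to its coefficient's magnitude and output its signed value, and for $\thrdeg(f)\le q(f)$ you expand each deterministic decision tree via leaf indicators $\prod_j (1\pm x_{i_j})/2$ to get a degree-$q$ polynomial whose expectation sign-represents $f$. The paper's proof is identical in substance, only spelling out the leaf-indicator formula slightly more explicitly.
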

\begin{proof}
    We prove both inequalities, starting with $q(f) \leq \thrdeg(f) \coloneqq d$.

    Let $P(X_1,\ldots,X_n) = \sum_{S \subset [n], \abs{S} \leq d} \alpha_S \prod_{i \in S} X_i$ be a polynomial of degree $d$ that sign-represents $f$, i.e., such that $f(x) = \textsf{sgn}(P(x))$ for all $x \in \mathcal{X}$. Consider the algorithm $A$ that queries the set of coordinates $S$ with probability $\abs{\alpha_S}/\sum_{\abs{T} \leq d} \abs{\alpha_T}$ and outputs $\sgn(\alpha_S) \cdot \prod_{i \in S} x_i$ (note that its query complexity is $d$). Fix $x \in \mathcal{X}$ and suppose, without loss of generality, that $f(x) = 1$. We thus have
    \begin{equation*}
        \E[A^x] = \frac{1}{\sum_{\abs{T} \leq d} \abs{\alpha_T}} \sum_{\abs{S} \leq d} \abs{\alpha_S} \cdot \sgn(\alpha_S) \cdot \prod_{i \in S} x_i = \frac{P(x)}{\sum_{\abs{T} \leq d} \abs{\alpha_T}} > 0,
    \end{equation*}
    and, since $A^x$ only outputs 1 or $-1$, we have $\P[A^x = -1] + \P[A^x = 1] = 1$ and thus $\P[A^x = 1] > 1/2$. It follows that $A$ is a UPP algorithm for $f$ with query complexity $d$ and thus $q(f) \leq \thrdeg(f)$.

    To prove the reverse inequality, consider a UPP algorithm $A$ that computes $f$ with query complexity $q \coloneqq q(f)$, given by a distribution over decision trees of depth at most $q$. To see that the function computed by each decision tree $T$ can be sign-represented by a polynomial of degree at most $q$ (which is a standard fact), we follow the exposition on \emph{leaf indicators} in \cite{GM21}. Denote by $L$ the set of leaves of $T$, and identify each $\ell \in L$ with its indicator function $\ell \colon \fbitset^n \to \bitset$ such that $\ell(x) = 1$ if and only if $\ell$ is the unique leaf reached on input $x$ in $T$.

     Then, if $c_\ell \in \fbitset$ is the output of the decision tree when an execution ends at the leaf $\ell$, the output of $T$ on input $x$ is $\sum_{\ell \in L} c_\ell \cdot \ell(x)$. Thus, showing $\ell(\cdot)$ can be represented by a polynomial of degree at most $q$ implies the same degree bound for the computation of $T$. Fix $\ell \in L$, let $(i_1,\ldots,i_d) \in [n]^q$ be the coordinates queried by the root-to-leaf path that ends at $\ell$, and let the sequence of bits $(b^\ell_1,\ldots,b^\ell_d) \in \fbitset^q$ correspond to the queried values that cause this path to be followed. Then,
    \begin{equation*}
    	\ell(x_1,\ldots,x_n) = \prod_{j = 1}^q \frac{x_{i_j} + b_j}{2 b_j},
    \end{equation*}
    so $\ell(\cdot)$ can be represented by the degree-$q$ polynomial $P(X_1, \ldots, X_n) = 2^{-q} \prod_{j = 1}^q (X_{i_j} + b_j)/b_j$. Thus, the output of $A^x$ \emph{when it selects this tree} is the degree-$q$ polynomial $\sum_{\ell \in L} c_\ell \cdot \ell(x)$, and $\E[A^x]$ is a convex combination of such sums (which also has degree $q$). Since $f(x) = \sgn(\E[A^x])$ for all $x \in \mathcal{X}$, we conclude that $\thrdeg(f) \leq q(f)$ and the claim follows.
\end{proof}

The final ingredient to show the lower bound is the next theorem, a special case of the ``Theorem of the Alternative'' \cite{OS03, ABFR91}.

\begin{theorem}
    \label{thm:alternative}
    Let  $\mathcal{X} \subseteq \fbitset^n$ and let $f\colon \mathcal{X} \to \fbitset$ be any partial Boolean function defined over domain $\mathcal{X}$. If there exists a distribution $\mathcal{D}$ on $\mathcal{X}$ such that $\E_{x \leftarrow \mathcal{D}}[f(x) \cdot m(x)] = 0$ for every monomial $m$ of degree less than $k$, then the threshold degree of $f$ is at least $k$.
\end{theorem}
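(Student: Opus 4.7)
The plan is to prove the theorem by contraposition via a short linear-algebraic duality argument (essentially LP duality, though no heavy machinery is needed). Suppose, for the sake of contradiction, that $\thrdeg(f) < k$, so there exists a polynomial $P$ of degree at most $k-1$ that sign-represents $f$ on $\mathcal{X}$, i.e., $f(x) = \sgn(P(x))$ for every $x \in \mathcal{X}$. Expand $P$ in the monomial basis as
\begin{equation*}
P(X_1, \ldots, X_n) = \sum_{S \subseteq [n],\ |S| < k} c_S \prod_{i \in S} X_i\;.
\end{equation*}

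Next I would exploit that $f$ takes values in $\fbitset$ (never $0$), so the sign-representation forces $P(x) \neq 0$ on $\mathcal{X}$; therefore $f(x) \cdot P(x) = |P(x)| > 0$ strictly for every $x \in \mathcal{X}$. Taking expectation under the hypothesised distribution $\mathcal{D}$, which is supported on $\mathcal{X}$, yields
\begin{equation*}
\E_{x \leftarrow \mathcal{D}}\bigl[f(x) \cdot P(x)\bigr] > 0\;.
\end{equation*}
On the other hand, by linearity of expectation and the assumption on $\mathcal{D}$,
\begin{equation*}
\E_{x \leftarrow \mathcal{D}}\bigl[f(x) \cdot P(x)\bigr] = \sum_{|S| < k} c_S \cdot \E_{x \leftarrow \mathcal{D}}\Bigl[f(x) \prod_{i \in S} x_i\Bigr] = 0\;,
\end{equation*}
since every term in the sum is the expectation of $f$ against a monomial of degree less than $k$, which vanishes by hypothesis. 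These two equations contradict each other, so no such sign-representing polynomial $P$ of degree $< k$ can exist, and $\thrdeg(f) \geq k$.

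There is no real obstacle here: the argument is a one-line application of orthogonality plus the strict positivity of $f(x) P(x)$ on $\mathcal{X}$. The only subtleties to be careful about are (i) ensuring that $\sgn(P(x)) \in \fbitset$ forces $P(x) \neq 0$ on $\mathcal{X}$ so that the expectation is \emph{strictly} positive, and (ii) observing that $\mathcal{D}$ being supported on $\mathcal{X}$ is exactly what allows us to use sign-representation pointwise inside the expectation without worrying about the behaviour of $P$ on $\fbitset^n \setminus \mathcal{X}$. This explains why the theorem is often called the ``Theorem of the Alternative'': existence of a witness distribution on one side precludes existence of a low-degree sign-representation on the other.
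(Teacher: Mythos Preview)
Your proof is correct. Note, however, that the paper does not actually prove this theorem: it is stated as a known result (a special case of the ``Theorem of the Alternative'') and cited to \cite{OS03, ABFR91}, so there is no paper-side argument to compare against. Your argument is precisely the standard one-direction proof of this duality: strict positivity of $f(x)P(x)$ on $\mathcal{X}$ forces $\E_{\mathcal{D}}[f\cdot P]>0$, while the orthogonality hypothesis forces the same expectation to vanish term-by-term in the monomial expansion of $P$.
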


We are now ready to prove the main result of this section.

\begin{theorem}
    \label{thm:kwise}
    Let $\Pi \subseteq \fbitset^n$ be a $k$-wise independent property such that $\abs{\Pi} < 2^{(1/4 - H(\eps)) n}$ with $k = o(n)$. Then $f \notin \UPP(k-1)$, where $f$ is the partial function such that $f(x) = -1$ when $x \in \Pi$, $f(x) = 1$ when $x$ is $\eps$-far from $\Pi$, and $f$ is
    undefined otherwise.
\end{theorem}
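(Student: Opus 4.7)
The plan is to chain together the three technical ingredients established earlier in the section. First, I would apply Lemma \ref{lem:kwisecode} to the set $\Pi$ to produce a linear code $C \subseteq \fbitset^n$ that is $\eps$-far from $\Pi$ and itself $\Omega(n)$-wise independent. Because $k = o(n)$, the code $C$ is in particular $k$-wise independent, and since $C$ lies in the ``far'' part of the domain, every $x \in C$ satisfies $f(x) = 1$; conversely every $x \in \Pi$ satisfies $f(x) = -1$. This gives two disjoint subsets of the domain $\mathcal{X}$ of $f$ on which we have full control of the bitwise marginals.

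Next, Lemma \ref{lem:upp} reduces the task of lower-bounding $q(f)$ to lower-bounding $\thrdeg(f)$, so it suffices to show $\thrdeg(f) \geq k$. To do this, I would invoke Theorem \ref{thm:alternative} with the distribution $\mathcal{D}$ that, with probability $1/2$, outputs a uniformly random element of $\Pi$, and, with probability $1/2$, outputs a uniformly random element of $C$. The required check is that
\begin{equation*}
    \E_{x \sim \mathcal{D}}[f(x) \cdot m(x)] = -\tfrac{1}{2}\E_{x \sim \Pi}[m(x)] + \tfrac{1}{2}\E_{x \sim C}[m(x)] = 0
\end{equation*}
for every monomial $m(x) = \prod_{i \in S} x_i$ of degree $|S| < k$. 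For $S = \varnothing$, both expectations equal $1$ and cancel. For $1 \leq |S| < k$, the $k$-wise independence of both $\Pi$ and $C$ implies that the restriction $x_{|S}$ is uniformly distributed over $\fbitset^{|S|}$ in either case, so $\E[\prod_{i \in S} x_i] = 0$ for each term. Hence the hypothesis of Theorem \ref{thm:alternative} holds with parameter $k$, yielding $\thrdeg(f) \geq k$, which in turn gives $q(f) \geq k$ via Lemma \ref{lem:upp} and thus $f \notin \UPP(k-1)$.

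The main conceptual step is the construction of the code $C$, but this has already been isolated in Lemma \ref{lem:kwisecode}; once it is available, the proof is essentially a clean instantiation of the threshold-degree duality, since $k$-wise independence is tailored precisely to annihilate all Fourier coefficients of degree below $k$. The only small subtlety to mind is that the ``size $k$'' clause in the definition of $k$-wise independence implies the analogous statement for every subset of size at most $k$ by marginalisation, which is what allows handling all monomials of degree $< k$ uniformly.
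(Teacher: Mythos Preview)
Your proposal is correct and matches the paper's proof essentially line for line: obtain the $\Omega(n)$-wise independent code $C$ from Lemma~\ref{lem:kwisecode}, form the balanced mixture of the uniform distributions on $\Pi$ and $C$, verify the orthogonality condition of Theorem~\ref{thm:alternative} using $k$-wise independence, and conclude via Lemma~\ref{lem:upp}. You are in fact slightly more careful than the paper in two places---explicitly noting that $k=o(n)$ is what lets the $\Omega(n)$-wise independence of $C$ imply $k$-wise independence, and handling the constant monomial separately---but the argument is the same.
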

\begin{proof}
    First, apply \Cref{lem:kwisecode} to obtain a $k$-wise independent code $C \subseteq \bitset^n$ that is $\eps$-far from $\Pi$. Let $\mathcal{D}$ be the distribution obtained by drawing a uniform random element of $\Pi$ with probability $1/2$ and  drawing a uniform random element of $C$ with probability $1/2$. Then for every monomial $m$ of degree less than $k$,
    \begin{align*}
    	\E_{x \leftarrow \mathcal{D}}[f(x) m(x)] &= \frac{\E_{x \leftarrow \Pi}[f(x) m(x)] + \E_{x \leftarrow C}[f(x) m(x)]}{2}\\
    	&= \frac{\E_{x \leftarrow \Pi}[m(x)] - \E_{x \leftarrow C}[m(x)]}{2} = 0.
    \end{align*}
    The final equality above holds by virtue of the $k$-wise independence of both $\Pi$ and $C$. Let $\mathcal{X}$
    be the union of inputs in $\Pi$ and inputs that are $\eps$-far from $\Pi$. Define the partial function $f$ over
    domain $\mathcal{X}$ via:
    \begin{equation*}
        f(x) = \left\{\begin{array}{cl}
             -1&\text{, if } x \in \Pi  \\
             1&\text{, if } x \in \mathcal{X} \setminus \Pi.
        \end{array}\right.
    \end{equation*}
    By \Cref{thm:alternative}, the distribution $\mathcal{D}$ constructed above witnesses the fact that $\thrdeg(f) \geq k$. Since the $\UPP$ query complexity of $f$ is $\thrdeg(f)$ by \Cref{lem:upp}, the claim follows.
\end{proof}

We conclude the section with a corollary that follows from the inclusion $\QMA \subseteq \PP$. The proof of this inclusion (in the polynomial-time setting) proceeds in two steps: (1) reducing the error rate of a QMA algorithm to roughly $2^{-p}$, where $p$ is the length of the proof given to the verifier, by repeating the algorithm $O(p)$ times; and (2) running the verifier with the proof fixed to be the maximally mixed state. This exhibits a gap of roughly $2^{-p}$ between the acceptance probabilities of yes- and no-inputs, which suffices to place the problem in $\PP$ \cite{MW05, W09}. The same transformation, carried out \emph{in the query model}, implies that any sufficiently small $\Pi \in \QMAP(\eps,p,q)$ can be ``$\eps$-tested'' by a UPP algorithm with query complexity $O(pq)$; that is, any function $f$ as in the statement of \Cref{thm:kwise} is such that $f \in \UPP(O(pq))$. Therefore,

\begin{corollary}
\label{cor:kwise}
    For any sufficiently constant small $\eps > 0$ and $k$-wise independent property $\Pi \subseteq \bitset^n$ such that $\abs{\Pi} < 2^{n/5}$, we have $\Pi \notin \QMAP(\eps, p,q)$ unless $pq = \Omega(k)$.
\end{corollary}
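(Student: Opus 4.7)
The plan is to show that $\Pi \in \QMAP(\eps, p, q)$ implies the partial function $f$ of \Cref{thm:kwise} has $\UPP$ query complexity $O(pq)$, after which \Cref{thm:kwise} immediately forces $pq = \Omega(k)$. Thus the task reduces to transposing the textbook proof of $\QMA \subseteq \PP$ into the query model, tracking queries rather than runtime.

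Starting from a QMAP verifier $V$ of proof complexity $p$ and query complexity $q$ with completeness $2/3$ and soundness $1/3$, I would first apply Marriott--Watrous amplification by running $V$ sequentially $O(p)$ times on the \emph{same} $p$-qubit proof; this is the point of the technique, which acts only on the verifier's unitaries and so transfers to the oracular setting unchanged. The resulting verifier $V'$ has completeness at least $1 - 2^{-(p+2)}$ and soundness at most $2^{-(p+2)}$, still consumes a $p$-qubit proof, and makes $O(pq)$ oracle queries. Next I would feed $V'$ the maximally mixed state $I/2^p$ in place of a proof. Writing this mixture in any orthonormal basis gives acceptance probability $2^{-p}\sum_i \P[V'(x,\ket{e_i}) = 1]$; for a yes-instance, choosing a basis containing the optimal pure proof $\ket{\psi^*}$ yields acceptance at least $2^{-p}(1 - 2^{-(p+2)})$, while for a no-instance every pure proof is accepted with probability at most $2^{-(p+2)}$, so the average is also at most $2^{-(p+2)}$. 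This leaves a strictly positive gap of order $2^{-p}$ between yes- and no-inputs.

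Finally I would invoke the polynomial method: the acceptance probability of this proofless quantum query algorithm is a real multilinear polynomial of degree at most $2 \cdot O(pq) = O(pq)$ in the oracle bits, and after subtracting a threshold strictly between $2^{-(p+2)}$ and $2^{-p}(1 - 2^{-(p+2)})$ it sign-represents $f$ on its domain. Hence $\thrdeg(f) = O(pq)$, and \Cref{lem:upp} turns this into a $\UPP$ query algorithm for $f$ of query complexity $O(pq)$. Combining with the lower bound $q(f) \geq k$ from \Cref{thm:kwise} (valid whenever $|\Pi| < 2^{n/5} \le 2^{(1/4 - H(\eps))n}$ for $\eps$ small enough) closes the argument. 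The only delicate point I foresee is verifying that Marriott--Watrous amplification goes through verbatim in the query model with a mere $O(p)$ multiplicative query overhead and without inflating the proof length; conceptually this is clear since amplification reshapes only the verifier, but it requires a careful bookkeeping of how the witness register is reused via controlled calls to $V$ and $V^\dagger$.
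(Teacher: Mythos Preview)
Your proposal is correct and follows essentially the same route as the paper: Marriott--Watrous amplification to bring the error below $2^{-p}$ while multiplying the query count by $O(p)$, replacement of the proof by the maximally mixed state to obtain a proofless algorithm with an acceptance gap of order $2^{-p}$, and then the threshold-degree/$\UPP$ connection (\Cref{lem:upp}) combined with \Cref{thm:kwise}. The paper presents this more tersely, leaving the polynomial-method step and the size check $2^{n/5} \le 2^{(1/4-H(\eps))n}$ implicit, but the argument is the same.
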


\section{Quantum speedups for proximity-oblivious MAPs}
\label{sec:po-map}

We now shift gears and move to our \emph{algorithmic} results. We recall in this section the technique of quantum \emph{amplitude amplification}, and prove its consequences for the classes of algorithms we consider in this work. Roughly speaking, given an algorithm that finds, with probability $\gamma$, a preimage of 1 of a Boolean function, amplitude amplification allows us to repeat it $O(1/\sqrt{\gamma})$ times in order to find such a preimage with high probability (as opposed to $O(1/\gamma)$ repetitions classically). Formally, we have:

\begin{theorem}[\cite{BHMT02}]
\label{thm:aa}
	Let $v \colon S \to \bitset$ be a Boolean function (from an arbitrary set $S$) and let $A$ be a quantum algorithm that makes no intermediate measurements (i.e., is a unitary transformation), such that measuring the state $A \ket{\mathbf{0}}$ yields as outcome $s \in v^{-1}(1)$ with probability $\gamma > 0$. Then there exists a quantum algorithm $B$ that uses $O(1/\sqrt{\gamma})$ applications of the unitaries $A$ and $A^{-1}$, such that measuring $B \ket{\mathbf{0}}$ yields as outcome $s \in v^{-1}(1)$ with probability $2/3$.
\end{theorem}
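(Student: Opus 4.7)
The plan is to follow the standard geometric approach to amplitude amplification, generalising Grover's algorithm. The key is to identify a two-dimensional invariant subspace in which the relevant operations act as rotations, and then apply the appropriate number of iterations to rotate a ``bad'' starting state to a (nearly) ``good'' target state.

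First, I would decompose the state $\ket{\psi} \coloneqq A \ket{\mathbf{0}}$ as $\ket{\psi} = \sin(\theta) \ket{\psi_g} + \cos(\theta) \ket{\psi_b}$, where $\ket{\psi_g}$ is the normalised projection of $\ket{\psi}$ onto the span of basis states $\ket{s}$ with $v(s) = 1$ (the ``good'' subspace), $\ket{\psi_b}$ is the normalised projection onto the orthogonal complement (the ``bad'' subspace), and $\theta \in (0, \pi/2]$ is defined by $\sin^2(\theta) = \gamma$. Measuring $\ket{\psi}$ in the computational basis yields a good outcome with probability exactly $\gamma$.

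Next, I would define two reflections. Let $R_g \coloneqq I - 2\Pi_g$ be the reflection about the bad subspace, where $\Pi_g$ is the projector onto good basis states; this can be implemented with one query to (a reversible implementation of) $v$ via the standard phase-kickback trick. Let $R_\psi \coloneqq I - 2\ket{\psi}\bra{\psi} = A (I - 2\ket{\mathbf{0}}\bra{\mathbf{0}}) A^{-1}$ be the reflection about $\ket{\psi}$, which costs one application each of $A$ and $A^{-1}$. A direct calculation shows that $Q \coloneqq -R_\psi R_g$ preserves the two-dimensional subspace $\mathrm{span}\{\ket{\psi_g}, \ket{\psi_b}\}$ and, within it, acts as a rotation by angle $2\theta$. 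Consequently, after $k$ applications,
\begin{equation*}
Q^k \ket{\psi} = \sin((2k+1)\theta) \ket{\psi_g} + \cos((2k+1)\theta) \ket{\psi_b}\;,
\end{equation*}
so measuring this state yields a good outcome with probability $\sin^2((2k+1)\theta)$.

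The algorithm $B$ is then ``prepare $\ket{\psi}$ and apply $Q$ a suitable number of times $k$''. The main subtlety — and in some sense the only real obstacle — is choosing $k$ when $\gamma$ (equivalently $\theta$) is not known exactly. Setting $k = \lfloor \pi/(4\theta) \rfloor$ would give success probability $\geq 1 - \sin^2(\theta) \geq 1 - \gamma$ if we knew $\theta$. In general we do not, but for the stated $2/3$ guarantee it is enough to invoke the standard trick of choosing $k$ uniformly at random from $\{0, 1, \dots, \lceil c/\sqrt{\gamma} \rceil\}$ for a suitable constant $c$ (or, if even a lower bound on $\gamma$ is unknown, from an exponentially growing schedule of such ranges, as in \cite{BHMT02}). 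Averaging $\sin^2((2k+1)\theta)$ over this range yields a constant bounded away from $1/2$, and this can be boosted to $2/3$ by a final constant number of repetitions. In every case the total number of invocations of $A$ and $A^{-1}$ is $O(1/\sqrt{\gamma})$, giving the claimed bound.
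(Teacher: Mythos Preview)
The paper does not prove this theorem at all; it simply quotes it as a known result from \cite{BHMT02} and uses it as a black box. Your sketch is the standard geometric proof of amplitude amplification from \cite{BHMT02} and is correct, including the handling of the case where $\gamma$ is not known exactly. (One tiny terminological slip: $R_\psi = I - 2\ket{\psi}\bra{\psi}$ is the reflection through the hyperplane \emph{orthogonal} to $\ket{\psi}$, not ``about $\ket{\psi}$''; the minus sign in $Q = -R_\psi R_g$ is what turns $-R_\psi = 2\ket{\psi}\bra{\psi} - I$ into the reflection fixing $\ket{\psi}$, so your rotation formula is right regardless.)
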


We note that the theorem applies to classical randomised algorithms as a special case. An immediate corollary for \emph{promise problems} in the query model (which is the setting for property testers, MAPs and variations thereof) is the following.\footnote{While we could state amplitude amplification for testers directly, a subtle issue would arise: MAPs are equivalent to a collection of \emph{partial} testers, which are not ``vanilla'' testers but are still promise problems.}

\begin{corollary}[Amplitude amplification for promise problems in the query model]
\label{cor:aa}
	Let $Y, N \subseteq \bitset^n$ with $Y \cap N = \varnothing$ define a promise problem on $n$-bit strings whose yes- and no-inputs are $Y$ and $N$, respectively. Let $A$ be a randomised algorithm with oracle access to a string $x \in \bitset^n$ that makes $q$ queries, always accepts when $x \in Y$ and rejects with probability at least $\gamma$ when $x \in N$. Then, there exists a quantum algorithm $B$ that makes $O(q/\sqrt{\gamma})$ queries to the unitary $U_x \ket{i}\ket{b} = \ket{i}\ket{b \oplus x_i}$, always accepts when $x \in Y$ and rejects with probability $2/3$ when $x \in N$.
\end{corollary}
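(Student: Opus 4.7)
The plan is to purify $A$ into a unitary, then apply \Cref{thm:aa} with the Boolean predicate $v$ that indicates ``reject'' outcomes.

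First I would convert $A$ into a quantum unitary $A'$ of the form demanded by \Cref{thm:aa}. Every classical random coin used by $A$ is replaced by a fresh qubit prepared in $(\ket{0}+\ket{1})/\sqrt{2}$, every classical query to a bit $x_i$ is replaced by a call to the standard oracle $U_x\ket{i}\ket{b} = \ket{i}\ket{b\oplus x_i}$, and every intermediate measurement is deferred to the end via the principle of deferred measurement. The result is a unitary $A'$ that makes exactly $q$ queries to $U_x$ and such that measuring a designated output qubit of $A'\ket{\mathbf{0}}$ reproduces the output distribution of $A$. Let $v$ be the indicator of ``reject'' outcomes on the sample space of this final measurement.

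Next, for inputs $x \in N$, by hypothesis a measurement of $A'\ket{\mathbf{0}}$ yields an outcome with $v=1$ with probability at least $\gamma$. Applying \Cref{thm:aa} to $A'$ with respect to $v$ then yields an algorithm $B$ that uses $O(1/\sqrt{\gamma})$ invocations of $A'$ and $(A')^{-1}$ and outputs an outcome in $v^{-1}(1)$ with probability at least $2/3$; in other words, $B$ rejects with probability at least $2/3$. Since each invocation of $A'$ or $(A')^{-1}$ makes $q$ queries to $U_x$, the total query complexity of $B$ is $O(q/\sqrt{\gamma})$.

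Finally I would verify completeness on $x \in Y$, which is the only subtlety since \Cref{thm:aa} is stated for $\gamma > 0$. For such $x$, $A$ always accepts, so the component of $A'\ket{\mathbf{0}}$ in the span of basis states corresponding to $v^{-1}(1)$ has amplitude exactly $0$. The amplitude amplification procedure is built from $A'$, $(A')^{-1}$, the reflection about $\ket{\mathbf{0}}$, and the reflection about the span of $v^{-1}(0)$; all of these preserve the two-dimensional invariant subspace spanned by the projections of $A'\ket{\mathbf{0}}$ onto $v^{-1}(1)$ and its orthogonal complement. Since the initial state lies entirely in the latter, the final state $B\ket{\mathbf{0}}$ does too, so $B$ outputs a reject outcome with probability $0$, i.e., always accepts. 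Handling this boundary case by direct inspection of the amplification construction is the only step that does not follow immediately by invoking \Cref{thm:aa} as a black box.
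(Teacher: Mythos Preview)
Your proposal is correct and follows essentially the same approach as the paper: purify the randomised algorithm into a unitary making $q$ calls to $U_x$, then apply \Cref{thm:aa} with $v$ the ``reject'' indicator (the paper phrases this as searching for a random string $s$ on which $A^x$ rejects, then re-running $A^x$ on $s$, but the content is the same). Your explicit treatment of the $x\in Y$ boundary case---observing that when the good amplitude is zero the amplification operator fixes $A'\ket{\mathbf 0}$---is a welcome addition that the paper leaves implicit.
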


This follows from the observation that each $x \in Y \cup N$ induces a function $f_x \colon \bitset^r \to \bitset$ where $r$ is the number of random bits used by $A$. If $A^x$ accepts when the outcome of its random coin flips is $s$, we define $f_x(s) = 0$, and if $A^x$ rejects when its random string is $s$, then $f_x(s) = 1$. We then apply \Cref{thm:aa} to the algorithm $A^x$, for each fixed $x \in Y \cup N$ (or, more precisely, to the modified algorithm that computes $f_x$ written as a reversible circuit and thus implements a query to $x$ as $(i,b) \mapsto (i, b \oplus x_i)$), obtaining $B^{U_x}$ (recall that $U_x$ is the unitary mapping $\ket{i}\ket{b} \mapsto \ket{i}\ket{b \oplus x_i}$). Measuring $B^{U_x}\ket{\mathbf{0}}$, using the outcome as the random string for an execution of $A^x$ and outputting accordingly yields the claimed algorithm.

Note that \Cref{cor:aa} directly applies to \emph{one-sided proximity-oblivious} testers, which are testers that always accept $n$-bit strings in the property and reject strings that are $\eps$-far from it with \emph{detection probability} $\rho(\eps,n)$. We now prove the following observation, which shows that the same speedup can be obtained for MAPs; more precisely, properties that admit one-sided proximity-oblivious MAPs allow for more efficient verification by a quantum algorithm using the same proof string.

\begin{theorem}
\label{thm:po-map}
	Let $\Pi$ be a property admitting a one-sided proximity-oblivious MAP protocol, which receives a proof of length $p = p(n)$, makes $q = q(n)$ queries and rejects strings $\eps$-far from $\Pi$ with probability at least $\rho = \rho(\eps, n)$. Then, for any $\eps \in (0,1)$,
	\begin{equation*}
		\Pi \in \QCMAP\left(\eps, p, \frac{q}{\sqrt{\rho}}\right)\;.
	\end{equation*}
\end{theorem}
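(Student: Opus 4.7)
The plan is to show that the honest QCMAP prover simply forwards the same classical proof used by the proximity-oblivious MAP, and that the quantum verifier obtains the claimed query complexity by running amplitude amplification on the classical verifier viewed as a promise problem. Concretely, given the proximity-oblivious MAP verifier $V_{\mathrm{MAP}}$, define for each classical proof $\pi \in \bitset^p$ the randomised algorithm $A_\pi^x := V_{\mathrm{MAP}}^x(\pi)$, which makes $q$ queries to $x$ and has no explicit dependence on $\eps$ (since the MAP is proximity-oblivious). The one-sided, proximity-oblivious guarantee of $V_{\mathrm{MAP}}$ says that (i) if $x \in \Pi$, there is some $\pi^\star = \pi^\star(x)$ for which $A_{\pi^\star}^x$ always accepts, and (ii) if $x$ is $\eps$-far from $\Pi$, then for every $\pi$, $A_\pi^x$ rejects with probability at least $\rho = \rho(\eps,n)$.

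The QCMAP verifier $V$ takes as input $n$, $\eps$, oracle access to $x$, and a classical proof $\pi$ of length $p$. Given $\pi$, $V$ treats the promise problem whose yes-instances are strings $x$ such that $A_\pi^x$ always accepts and whose no-instances are strings that are $\eps$-far from $\Pi$, and applies \Cref{cor:aa} to $A_\pi$ with gap parameter $\rho$. This produces a quantum algorithm $B_\pi$ making $O(q/\sqrt{\rho})$ queries to the oracle $U_x$ that always accepts on yes-instances and rejects with probability at least $2/3$ on no-instances. The QCMAP verifier outputs whatever $B_\pi$ outputs. Since $V$ receives $\eps$ explicitly, it can compute the number of amplitude-amplification rounds $O(1/\sqrt{\rho(\eps,n)})$ up front; no knowledge of the actual rejection probability on a given input is required beyond the worst-case bound $\rho$.

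Completeness follows by taking the honest prover to send $\pi^\star(x)$ for $x \in \Pi$: then $A_{\pi^\star}^x$ always accepts, so the corresponding ``bad'' subspace in the amplitude-amplification analysis has amplitude zero, and $B_{\pi^\star}$ accepts with probability $1 \geq 2/3$. Soundness follows from clause (ii): for any proof $\pi$ supplied by a cheating prover, $A_\pi^x$ rejects with probability at least $\rho$, so \Cref{cor:aa} guarantees that $B_\pi$ rejects with probability at least $2/3$. The proof and query complexities are $p$ and $O(q/\sqrt{\rho})$ respectively, yielding $\Pi \in \QCMAP(\eps, p, q/\sqrt{\rho})$.

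No step presents a real obstacle once the setup is in place; the only mild subtlety is verifying that \Cref{cor:aa} applies with the proof fixed as an explicit classical string (so that $A_\pi$ is genuinely a randomised query algorithm on input $x$ alone) and that the verifier's explicit knowledge of $\eps$ suffices to schedule the correct number of amplification rounds. Both are immediate from the definitions of proximity-oblivious MAPs and of QCMAP verifiers.
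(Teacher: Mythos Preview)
Your proposal is correct and follows essentially the same approach as the paper: both fix the classical proof string $\pi$, view the MAP verifier as a collection $\{V_\pi\}$ of randomised algorithms solving a promise problem, apply \Cref{cor:aa} to each $V_\pi$ with gap $\rho$, and read off the proof and query complexities. Your additional remarks about the verifier using $\eps$ to schedule the number of amplification rounds and about $A_\pi$ being a genuine query algorithm on $x$ alone are accurate and make explicit what the paper leaves implicit.
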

\begin{proof}
	 Observe that the MAP verifier $V$ can be equivalently described as a collection of probabilistic algorithms $\set{V_\pi : \pi \in \bitset^p}$ indexed by all proof strings $\pi$. By definition, for every $x \in \Pi$ there exists $\pi \in \bitset^p$ such that $V^x_\pi$ always accepts; and, for every $x$ that is $\eps$-far from $\Pi$, every proof string $\pi$ is such that $V^x_\pi$ rejects with probability at least $\rho$. Therefore, $V_\pi$ solves the promise problem whose yes-inputs comprise the subset of $\Pi$ for which $\pi$ is a valid proof, and whose no-inputs are the strings $\eps$-far from $\Pi$.

	 Let $W_\pi$ be the algorithm obtained from $V_\pi$ by \Cref{cor:aa}. Then $W^x_\pi$ accepts (with probability 1) when $x \in \Pi$ and $\pi$ is a valid proof for $x$, and $W^x_\pi$ rejects (with probability $2/3$) when $x$ is $\eps$-far from $\Pi$ and $\pi$ is any proof string; in other words, the algorithm $W$ that executes $W_\pi$ when it receives $\pi$ as a proof string is a QCMAP verifier for $\Pi$. Moreover, since the proof string is reused and $W$ makes $O(q/\sqrt{\rho})$ queries, the proof and query complexities are as stated.
\end{proof}

We conclude with two applications of \Cref{thm:po-map}: to \emph{read-once branching programs} (ROBPs) and \emph{context-free languages} (CFLs), which are shown to admit proximity-oblivious MAPs in \cite{GGR18} (see \Cref{rem:ggr} for details on these results).

\begin{theorem}[\cite{GGR18}, Lemma 3.1]
\label{thm:robp-map}
	For every read-once branching program on $n$ variables of size $s = s(n)$, let $A_B \coloneqq \set{x \in \bitset^n : B(x) = 1}$ be the set of strings accepted by $B$. Then, for every $k \leq n$, the property $\Pi_B$ admits a one-sided proximity-oblivious MAP with communication complexity $O(k \log s)$, query complexity $n/k$ and detection probability $\rho(\eps, n) = \eps$.
\end{theorem}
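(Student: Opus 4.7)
The plan is to construct a one-sided proximity-oblivious MAP that verifies membership in $A_B$ by spot-checking a random ``segment'' of the claimed computation path through $B$. Concretely, for an input $x \in A_B$ the honest prover sends a sequence of $k+1$ \emph{landmark} vertices $v_0, v_1, \ldots, v_k$ of $B$, where $v_0$ is the source, $v_k$ is an accepting sink, and $v_i$ is the vertex reached after the execution on $x$ has read the $i$-th block of $\lceil n/k \rceil$ variables along the (unique) computation path. Each vertex is encoded in $\lceil \log s \rceil$ bits, so the proof has length $O(k \log s)$ as required.

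The verifier's procedure is: check (with no queries to $x$) that $v_0$ is the source and $v_k$ is an accepting sink; sample $i \in [k]$ uniformly; simulate $B$ starting from $v_{i-1}$, using $x$ as an oracle to read the variables dictated by the simulation, and accept iff the simulation reaches $v_i$ within the prescribed number of steps. Because $B$ is read-once, each segment touches at most $\lceil n/k \rceil$ variables, so the query complexity is $n/k$ as claimed. Completeness with probability one is immediate: on the honest proof for any $x \in A_B$, every segment is consistent, so the simulation always reaches the stated landmark.

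For soundness, the plan is a patching argument. Fix $x$ that is $\eps$-far from $A_B$ and any proof $\pi = (v_0,\ldots,v_k)$ passing the structural checks, and let $c$ be the number of segments on which the verifier would accept. I claim $c \le (1-\eps)k$, which gives detection probability at least $\eps$. Suppose for contradiction $c > (1-\eps)k$; then fewer than $\eps k$ segments are ``bad''. For each bad segment $i$, either $v_{i-1}$ cannot reach $v_i$ in $B$ at all (in which case the verifier rejects on that $i$ with certainty and we are done), or there exist values for the variables read on some $v_{i-1}\to v_i$ path that make the segment consistent. Flipping $x$ on exactly those variables in each bad segment produces an $x'$ whose execution passes through all the $v_i$ and is therefore accepted by $B$, while differing from $x$ on fewer than $\eps k \cdot (n/k) = \eps n$ coordinates --- contradicting $\eps$-farness.

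The main technical obstacle is making the patching argument rigorous in the presence of the ROBP's \emph{adaptivity}: the set of variables read in segment $i$ depends on both $v_{i-1}$ and the symbol values encountered, so patching segment $i$ may alter which variables belong to it, and one must ensure that patches in different segments do not interfere. The read-once property is exactly what rescues the argument: since no variable is read twice along any root-to-sink path, the variable sets $V_i(x')$ visited by the patched execution are pairwise disjoint, so patching segment $i$ only touches variables not read in any other segment of the new execution. A careful inductive construction of $x'$ --- processing segments in order, fixing values of newly read variables as the simulation demands while leaving all other coordinates equal to $x$ --- makes the count of $\eps n$ modifications precise and completes the proof.
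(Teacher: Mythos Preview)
This theorem is not proved in the present paper; it is quoted from \cite{GGR18} and used as a black box to derive the quantum speedup. Your sketch is essentially the standard argument: landmark vertices along the accepting computation path, a random segment check, and a Hamming-distance patching argument for soundness. Your handling of the adaptivity issue --- that the variable sets read along any single root-to-sink path are pairwise disjoint by the read-once property, so the inductive patches cannot collide --- is correct and is the heart of the proof.

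There is one gap in the soundness argument. When a bad segment is ``impossible'' (no path from $v_{i-1}$ to $v_i$ of the required length exists in $B$), you write that the verifier rejects on that $i$ with certainty ``and we are done.'' But a single always-rejecting segment contributes only $1/k$ to the rejection probability, which may be below $\eps$; and such a segment also breaks the patching construction, since no assignment can route the execution through $v_i$. A malicious prover could, in principle, send landmarks with one unreachable pair $(v_{i-1},v_i)$ and all other segments consistent with $x$, yielding rejection probability $1/k$ while your patching produces no $x'\in A_B$ close to $x$. The fix is to augment the zero-query structural checks: reject unless every consecutive pair $v_{i-1},v_i$ is connected in $B$ by a path of the prescribed length (the verifier knows $B$ explicitly, so this costs no queries and preserves one-sidedness). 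With that check in place, every bad segment admits a patch, and your inductive construction of $x'$ goes through as written.
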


\begin{theorem}[\cite{GGR18}, Lemma 4.5]
\label{thm:cfl-map}
	For every $k \leq n$, every context-free language $L$ admits a one-sided proximity-oblivious MAP with communication complexity $O(k \log n)$, query complexity $n/k$ and detection probability $\rho(\eps, n) = \eps$.
\end{theorem}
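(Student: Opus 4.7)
The plan is to follow the parse-tree decomposition approach of \cite[Lemma 4.5]{GGR18}. Fix a context-free grammar $G$ in Chomsky normal form generating $L$; since $L$ is a fixed language, $G$ has constant description length. The prover's message for an input $x \in L \cap \bitset^n$ has three components: (i) a partition of $[n]$ into $k$ consecutive intervals $I_1, \dots, I_k$, each of length at most $n/k$, specified by their endpoints; (ii) for each $i \in [k]$, a non-terminal $A_i$ of $G$ claimed to derive the substring $x_{I_i}$; and (iii) a ``skeleton'' partial derivation of size $O(k)$ certifying that $A_1 A_2 \cdots A_k$ derives the start symbol of $G$. Each endpoint and each reference to a symbol of $G$ takes $O(\log n)$ bits, giving total communication $O(k \log n)$.

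The verifier first performs a purely local check of the skeleton, without querying $x$: it verifies that every production used in the skeleton belongs to $G$ and that its leaves read off $A_1 A_2 \cdots A_k$. If the skeleton is invalid, reject. Otherwise, the verifier samples $i \in [k]$ uniformly at random, reads all $|I_i| \leq n/k$ bits of $x_{I_i}$, and runs the (deterministic) CYK algorithm to check whether $x_{I_i}$ is derivable from $A_i$ in $G$. Accept if so, reject otherwise. This uses $n/k$ queries.

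Completeness is immediate: for $x \in L$, an honest prover takes the actual parse tree of $x$, chooses $k-1$ internal nodes whose subtrees induce the intervals $I_i$, and writes down the $A_i$ at those nodes together with the skeleton obtained by contracting each chosen subtree to a single leaf; the verifier then always accepts. For soundness, suppose $x$ is $\eps$-far from $L$, and fix any proof whose skeleton is valid (otherwise the verifier rejects deterministically). For each $i$, let $y_i \in \bitset^{|I_i|}$ be the string of minimum Hamming distance to $x_{I_i}$ among those derivable from $A_i$ in $G$ (taking that distance to be $|I_i|$ if no such $y_i$ exists). Then $y_1 y_2 \cdots y_k \in L$, so $\sum_i d_H(x_{I_i}, y_i) \geq \eps n$ in unnormalised Hamming distance. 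Since each summand is bounded by $|I_i| \leq n/k$, at least $\eps k$ of them are strictly positive, i.e.\ at least an $\eps$ fraction of indices $i$ are such that $x_{I_i}$ is \emph{not} derivable from $A_i$. Hence a uniformly sampled $i$ triggers rejection with probability at least $\eps$, giving detection probability $\rho(\eps, n) = \eps$.

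The main obstacle is justifying that the skeleton derivation can be taken to have $O(k)$ nodes regardless of the input length. This is the structural heart of the argument: given any parse tree of $x$ in Chomsky normal form and any set of $k-1$ designated internal nodes whose subtrees partition the leaves, replacing each designated subtree by a single labelled leaf yields a binary tree with exactly $k$ leaves and thus $k-1$ internal nodes, so $O(k)$ total. Producing such designated nodes that induce an (approximately) balanced $k$-partition of $[n]$ is a standard balancing argument for Chomsky normal form grammars; once this structural fact is in hand, the communication bound, query bound, and soundness analysis above fit together, and the routine details of the skeleton encoding and CYK verification complete the proof.
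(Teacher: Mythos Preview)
First, note that the paper itself does not prove this theorem: it is quoted from \cite{GGR18} and used as a black box to derive \Cref{cor:cfl}. So there is no in-paper proof to compare against; the relevant comparison is with the actual construction in \cite{GGR18}.

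Your reconstruction has a genuine structural gap. You claim that one can always choose internal nodes of a Chomsky-normal-form parse tree whose subtrees partition $[n]$ into $k$ consecutive intervals of length at most $n/k$, and you defer this to a ``standard balancing argument.'' But no such argument exists. Consider $\{a^m b^m : m \geq 1\}$ with the CNF grammar $S \to AX \mid AB$, $X \to SB$, $A \to a$, $B \to b$. Every parse tree of $a^m b^m$ is a spine with single leaves hanging off, so the only subtrees containing more than one terminal are the nested ones rooted along the spine. Any antichain of nodes whose subtrees partition the $2m$ leaves into pieces of size at most $2m/k$ must consist of one such nested subtree together with $\Omega(m)$ singleton leaves---far more than $k$ pieces. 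The balanced $k$-partition into subtrees that you need simply does not exist for this grammar.

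The construction in \cite{GGR18} handles this by decomposing the derivation tree into $O(k)$ \emph{partial} derivations of the form $A \Rightarrow^* u\,B\,v$ (a subtree with a hole), not just complete subtrees $A \Rightarrow^* w$. The recursive $1/3$--$2/3$ split available in any binary tree lets one peel off a subtree and recurse on both it and the surrounding context; this is why the present paper's \Cref{rem:ggr} notes that the MAP actually works for ``partial derivation languages,'' whose strings may contain non-terminals. Your skeleton check, CYK step, and soundness counting are all fine once the right decomposition is in hand, but the decomposition itself is the non-trivial part, and the version you stated is false.
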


Therefore, applying \Cref{thm:po-map} to \Cref{thm:robp-map,thm:cfl-map}, we obtain:
\begin{corollary}
\label{cor:robp}
	For every read-once branching program $B$ on $n$ variables of size $s = s(n)$, denote by $A_B \coloneqq \set{x \in \bitset^n : B(x) = 1}$ the set of strings accepted by $B$. Then
	\begin{equation*}
		A_B \in \QCMAP\left(\eps, O(k \log s), O\left(\frac{n}{k \sqrt{\eps}}\right)\right) \quad \text{for every } k \leq n \text{ and } \eps \in (0,1).
	\end{equation*}
\end{corollary}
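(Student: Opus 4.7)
The plan is to derive \Cref{cor:robp} by directly composing two results already established in the excerpt: the proximity-oblivious MAP for read-once branching programs (\Cref{thm:robp-map}) and the generic amplitude-amplification speedup for proximity-oblivious MAPs (\Cref{thm:po-map}). No new combinatorial or quantum-algorithmic machinery is needed.

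First I would fix a read-once branching program $B$ on $n$ variables of size $s$, a parameter $k \leq n$, and a proximity parameter $\eps \in (0,1)$. Invoking \Cref{thm:robp-map} with these parameters yields a one-sided proximity-oblivious MAP protocol for $A_B = \{x \in \bitset^n : B(x) = 1\}$ whose proof (communication) complexity is $p = O(k \log s)$, whose query complexity is $q = n/k$, and whose detection probability on inputs $\eps$-far from $A_B$ is $\rho(\eps, n) = \eps$. In particular this MAP always accepts $x \in A_B$ (one-sidedness) and rejects every $x$ that is $\eps$-far from $A_B$ with probability at least $\eps$, regardless of the claimed proof.

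Next I would feed this protocol into \Cref{thm:po-map}. That theorem takes any one-sided proximity-oblivious MAP with parameters $(p, q, \rho)$ and produces a QCMAP with the same proof length $p$ and query complexity $O(q/\sqrt{\rho})$. Plugging in $p = O(k \log s)$, $q = n/k$, and $\rho = \eps$, the resulting QCMAP has proof complexity $O(k \log s)$ and query complexity
\begin{equation*}
    O\!\left(\frac{q}{\sqrt{\rho}}\right) = O\!\left(\frac{n}{k\sqrt{\eps}}\right),
\end{equation*}
which is precisely the claimed bound. Since the construction goes through for every $k \leq n$ and every $\eps \in (0,1)$, the corollary follows.

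There is no real obstacle here: the statement is a direct instantiation of \Cref{thm:po-map} with the proximity-oblivious MAP guaranteed by \Cref{thm:robp-map}. The only point warranting a line of commentary in the write-up is to note that the classical proof used by the MAP is reused verbatim by the quantum verifier (so the resulting protocol genuinely lives in $\QCMAP$ rather than only $\QMAP$), and that the $1/\sqrt{\eps}$-factor improvement over the classical $1/\eps$ follows because amplitude amplification, rather than straightforward repetition, is used to boost the detection probability to a constant.
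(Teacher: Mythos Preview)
Your proposal is correct and matches the paper's own argument exactly: the paper states the corollary as an immediate consequence of applying \Cref{thm:po-map} to the proximity-oblivious MAP of \Cref{thm:robp-map}, precisely as you describe.
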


\begin{corollary}
\label{cor:cfl}
	For every context-free language $L$,
	\begin{equation*}
		L \in \QCMAP\left(\eps, O(k \log n), O\left(\frac{n}{k \sqrt{\eps}}\right)\right) \quad \text{for every } k \leq n \text{ and } \eps \in (0,1).
	\end{equation*}
\end{corollary}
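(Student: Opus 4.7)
The plan is essentially a direct invocation: combine \Cref{thm:cfl-map} with \Cref{thm:po-map} to obtain the claimed QCMAP. First, I would invoke \Cref{thm:cfl-map} to produce, for the given context-free language $L$ and every $k \leq n$, a one-sided proximity-oblivious MAP verifier with proof complexity $p = O(k \log n)$, query complexity $q = O(n/k)$, and detection probability $\rho(\eps,n) = \eps$.

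Next, I would feed this protocol into the transformation of \Cref{thm:po-map}. Since that theorem turns any one-sided proximity-oblivious MAP with parameters $(p, q, \rho)$ into a QCMAP with proof complexity $p$ and query complexity $O(q/\sqrt{\rho})$, substituting $\rho = \eps$ yields query complexity
\begin{equation*}
    O\!\left(\frac{q}{\sqrt{\rho}}\right) \;=\; O\!\left(\frac{n/k}{\sqrt{\eps}}\right) \;=\; O\!\left(\frac{n}{k\sqrt{\eps}}\right),
\end{equation*}
while the proof complexity $O(k \log n)$ is preserved. This is exactly the statement of the corollary.

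There is essentially no obstacle: the work is fully encapsulated by the two preceding theorems. The only minor bookkeeping is to verify that the ranges of parameters match (any $k \leq n$ and any $\eps \in (0,1)$), and to note that since the MAP of \Cref{thm:cfl-map} uses classical proofs, the resulting verifier indeed lies in $\QCMAP$ (rather than merely $\QMAP$), as \Cref{thm:po-map} preserves the classical nature of the proof. The proof can therefore be stated in a single line invoking the two results.
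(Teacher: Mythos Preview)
Your proposal is correct and matches the paper's own proof exactly: the paper obtains \Cref{cor:cfl} by applying \Cref{thm:po-map} directly to the proximity-oblivious MAP of \Cref{thm:cfl-map}, precisely as you describe.
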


Interestingly, these corollaries make explicit a phenomenon in quantum proofs of proximity that does not hold for their classical counterparts: it is possible to test with proximity $\eps = 1/n$, i.e., solve the \emph{exact decision} problem of acceptance by an ROBP and membership in a context-free language, with sublinear proof and query complexity. In particular, taking $k = n^{3/4}$, both complexities are $O(n^{3/4})$. Nonetheless, for the case of branching programs, we will show in \Cref{sec:exact} how to lift the read-once restriction and improve on the parameters by directly exploiting \emph{decomposability}.

\begin{remark}
\label{rem:ggr}
	We note that a context-free language $L$ is defined in terms of an alphabet of \emph{terminals} (which is generally larger than $\bitset$) as well as an alphabet of variables. However, if both alphabets have constant size, we may represent symbols as bit strings with a constant overhead per query; thus \Cref{cor:cfl} holds for languages over large (constant-size) alphabets.

	Moreover, the results of \cite{GGR18} corresponding to \Cref{cor:robp,cor:cfl} are in fact stronger: both apply more generally to IPPs, and thus to MAPs as a special case (see \cite[Section 3.2]{GGR18} for details). In addition, the detection probability of the MAP for ROBPs is $\eps n/n'$ if the branching program has an accepting path of length $n' \leq n$; and the MAP for context-free languages works for \emph{partial derivation languages}, a generalisation of CFLs whose strings may include variable symbols as well as terminals.
\end{remark}

\section{Decomposable properties}
\label{sec:decomposable}

In this section, we show how quantum speedups can be applied to proof of proximity protocols for properties that can be broken up into sub-problems in a distance-preserving manner. Roughly speaking, a property $\Pi$ of $n$-bit strings is \emph{$(k,s)$-decomposable} if, using $s$ bits of information (which we call a \emph{specification}), $\Pi$ can be mapped to $k$ properties $\set{\Lambda^{(i)}}$ and the input string $x$ can be mapped to a set of $k$ strings $\set{x^{(i)}}$ satisfying the following conditions: (1) when $x \in \Pi$, there exists a specification such that $x^{(i)} \in \Lambda^{(i)}$ for all $i \in [k]$; and (2) when $x$ is $\eps$-far from $\Pi$, then, for some specification, $x^{(i)}$ is roughly $\eps$-far from $\Lambda^{(i)}$ for an average $i \in [k]$.

\begin{definition}[Decomposable property]
\label{def:decomposable}
	Let $\Pi = \bigcup \Pi_n$ be a property of bit strings. For $k = k(n)$, $s = s(n)$, $m_1 = m_1(n), \ldots, m_k = m_k(n)$, we say $\Pi$ is \emph{$(k,s)$-decomposable} if there exists a mapping from $S \subseteq \bitset^s$ to (possibly distinct) subproperties $\Lambda^{(1)} \subset \bitset^{m_1}, \ldots, \Lambda^{(k)} \subset \bitset^{m_k}$ such that every $x \in \bitset^n$ uniquely determines $x^{(i)} \in \bitset^{m_i}$ satisfying:\footnote{We remark that the mappings $\Pi \mapsto (\Lambda^{(1)},\Lambda^{(2)},\ldots,\Lambda^{(k)})$ and $x \mapsto (x^{(1)}, x^{(2)},\ldots,x^{(k)})$ are functions of the specification $y \in S$ of the decomposition. Although the notation $x^{(i),y}$ and $\Lambda^{(i),y}$ is formally more accurate, the dependency on $y$ will be clear from context and we omit it for ease of notation.}
	\begin{enumerate}
		\item If $x \in \Pi$, then there exists $y \in S$ such that $x^{(i)} \in \Lambda^{(i)}$ for all $i \in [k]$; and
		\item If $x$ is $\eps$-far from $\Pi$, then, for all $y \in S$ and $i \in [k]$, the string $x^{(i)}$ is $\eps_i$-far from $\Lambda^{(i)}$ and $\E_{i \leftarrow \mathcal{D}}[\eps_i] = \Omega(\eps)$, where $\mathcal{D}$ is the distribution over $[k]$ with probability mass $m_i/(\sum_{j \in [k]} m_j)$ on $i$.
	\end{enumerate}
	If $s = O(k \log n)$ we say $\Pi$ is \emph{succinclty} $k$-decomposable. If the strings $x^{(i)}$ form a partition of $x$, we say $\Pi$ is $(k,s)$-\emph{partitionable}.
\end{definition}

Note that $k$-decompositions specified by $O(k)$ coordinates of the input string are succinct. All of our applications are to succincly decomposable properties, and often the $\set{x^{(i)}}$ form an equipartition of $x$ (so each bit of $x^{(i)}$ depends on a single bit of $x$) and $\mathcal{D}$ is thus the uniform distribution. However, note that if a decomposition is significantly asymmetric, then $\mathcal{D}$ preserves (average) distance while uniform sampling may deteriorate it to $o(\eps)$ (e.g., if $m_i = o(m_1)$ when $i > 1$ and $x^{(1)}$ concentrates all of the corruption).

While the second condition of \Cref{def:decomposable} requires the expectation lower bound to hold for arbitrary $\eps$, the definition is still meaningful when it holds only for restricted values of $\eps$. Indeed, we will make use of it for \emph{exact decision} problems in \Cref{sec:exact}, where the only proximity parameter we consider is $\eps = 1/n$; we call such properties decomposable (or partitionable) with respect to exact decision.

As we will see in the next sections, decomposable properties enable the construction of efficient proof of proximity protocols and generalise the notion of ``parametrised concatenation properties'' introduced by \cite{GR18}.

\subsection{Boosting decompositions via amplitude amplification}
\label{sec:dec-map}

As the next theorem shows, decomposable properties allow for quantum speedups regardless of whether they admit proximity-oblivious MAPs.

\begin{theorem}
	\label{thm:decomposable}
	Let $\Pi$ be a property that is $(k,s)$-decomposable into properties of $m_i$-bit strings, and set $m = \max_{i \in [k]}\set{m_i}$. Suppose each bit of $x^{(i)}$ can be determined by reading $b$ bits of the input string, and each $\Lambda^{(i)}$ admits a one-sided MAP with proximity parameter $\eps$, query complexity $q = q(m,\eps) = m^\alpha/\eps^\beta$ and proof complexity $p = p(m,\eps)$. Then
	\begin{equation*}
		\Pi \in \QCMAP(\eps, s + kp, q')\;,
	\end{equation*}
	with
	\begin{equation*}
		q' =
		\begin{cases}
		  \tilde O\left(b \cdot m^\alpha \cdot \eps^{-\max\left(\frac1{2},\beta\right)}
		  \right) & \text{if }\alpha > 0
		  \text{ and } \beta \geq 0\\
		  \tilde O\left(b \cdot \min\set{m^{1 - \frac1{2\beta}} / \sqrt{\eps},~ m^{1 - \frac1{\beta}} / \eps}
		  \right) & \text{if }\alpha = 0 \text{ and } \beta \geq 1,
		\end{cases}
    \end{equation*}
    where $\tilde O$ hides polylogarithmic factors in $1/\eps$ (but not $m$). Moreover, for \emph{exact decision} (i.e., testing with proximity $\eps = 1/n$), a proof of length $s$ and $O(b m \sqrt{k})$ queries suffice.
\end{theorem}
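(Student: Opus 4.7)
The plan is to take as the QCMAP proof the pair consisting of the specification $y \in S$ ($s$ bits) and the concatenation of MAP proofs $\pi_1, \ldots, \pi_k$ for the sub-properties $\Lambda^{(i)}$ (total $kp$ bits). The verifier samples a subindex $i$ from the distribution $\mathcal D$ of \Cref{def:decomposable} (probability $m_i/\sum_j m_j$) and runs the one-sided MAP for $\Lambda^{(i)}$ with proof $\pi_i$, translating each query to $x^{(i)}$ into at most $b$ queries to $x$. Decomposability guarantees that in the yes-case the prover can pick $y$ and the $\pi_i$'s so that every sub-verifier accepts with probability $1$, while in the no-case $\E_{i \sim \mathcal D}[\eps_i] = \Omega(\eps)$.

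The core of the argument is to turn this average-case lower bound into query-efficient rejection via amplitude amplification (\Cref{cor:aa}). Running the subprocedure at a fixed proximity $\delta$ rejects with probability $\Omega(\P_{i \sim \mathcal D}[\eps_i \geq \delta])$, so the question is how to choose $\delta$ without knowing the distribution of the $\eps_i$. I would use precision sampling: partition $[\eps, 1]$ dyadically into scales $\delta_j = 2^{-j}$ for $j \in \{0, \ldots, L\}$ with $L = \lceil \log(1/\eps) \rceil$, and run amplitude amplification \emph{separately at each level} on the subprocedure ``sample $i \sim \mathcal D$ and invoke the MAP at proximity $\delta_j$ using $\pi_i$,'' accepting only if every amplified test accepts. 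A dyadic pigeonhole applied to $\E_{i \sim \mathcal D}[\eps_i] = \Theta\bigl(\sum_j \delta_j \cdot \P[\eps_i \in [\delta_{j+1}, \delta_j)]\bigr)$ forces some level $j^*$ to satisfy $\P[\eps_i \geq \delta_{j^*}] \geq \Omega(\eps \cdot 2^{j^*}/L)$, so $\tilde O(\sqrt{L/(\eps \cdot 2^{j^*})})$ iterations at level $j^*$, each costing $b \cdot m^\alpha \cdot 2^{j^* \beta}$ queries, suffice. Summing the per-level cost $\tilde O(b m^\alpha \cdot 2^{j(\beta - 1/2)}/\sqrt{\eps})$ over $j \in [0, L]$ gives a geometric series dominated by the $j = L$ endpoint when $\beta > 1/2$ and by $j = 0$ when $\beta < 1/2$, yielding the claimed $\tilde O(b m^\alpha / \eps^{\max(1/2, \beta)})$ in the generic case $\alpha > 0$.

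The main subtlety is that the good level $j^*$ is unknown to the verifier, so the iteration schedule at each level must be fixed in advance; I would set $T_j = \tilde\Theta\!\bigl(\sqrt{1/(\eps \cdot 2^j)}\bigr)$, depending only on $j$ and $\eps$, ensuring that the amplification at $j = j^*$ succeeds by a standard analysis of \Cref{thm:aa} while the total cost remains within the stated bound up to polylogarithmic factors in $1/\eps$. The case $\alpha = 0$ deserves a separate optimisation: since the per-iteration MAP cost is independent of $m$, it is sometimes cheaper to read the entirety of $x^{(i)}$ (at cost $bm$ per iteration) than to run the MAP at a very small proximity, and taking the better of the two strategies at each level yields the $\min\{m^{1 - 1/(2\beta)}/\sqrt{\eps},\ m^{1 - 1/\beta}/\eps\}$ branch of the theorem. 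Finally, for \emph{exact decision} ($\eps = 1/n$), the MAP proofs can be dispensed with entirely: sampling $i$ uniformly from $[k]$ and reading all $m_i \leq m$ bits of $x^{(i)}$ detects a violation with probability $\Omega(1/k)$ whenever $x \not\in \Pi$ (since then some $x^{(i)} \notin \Lambda^{(i)}$ under any specification $y$), so amplitude amplification gives $O(\sqrt{k})$ repetitions and $O(b m \sqrt{k})$ total queries with proof length just $s$.
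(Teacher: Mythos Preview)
Your proposal is correct and matches the paper's proof essentially step for step: the same proof string $(y,\pi_1,\ldots,\pi_k)$, the same precision-sampling schedule over dyadic proximity levels $2^{-j}$ with amplitude amplification run for $\tilde O(\sqrt{1/(2^j\eps)})$ iterations at each level, the same pigeonhole guarantee that some level $j^*$ has $\P_{i\sim\mathcal D}[\eps_i\geq 2^{-j^*}]=\Omega(2^{j^*}\eps/\log(1/\eps))$, the same geometric-series summation, the same use of the trivial tester bound $q\leq m$ in the $\alpha=0$ branch, and the same uniform-sampling-plus-amplification argument for exact decision.
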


Before proceeding to the proof, we note that if the MAP protocols for the subproperties are proximity-oblivious, the query complexity can be improved (see \Cref{rem:po-decomposable}). Let us also summarise the proof strategy of \cite{GR18}, which we build upon and generalise.

Consider the special case where a property $\Pi$ is \emph{$k$-partitionable} and the strings $x^{(i)}$ are simply the substrings of $x$ of length $m = n/k$ which, concatenated, form $x$. Suppose, moreover, that the subproperties $\Lambda^{(i)}$ admit \emph{testers} with query complexity $q = m^\alpha/\eps^\beta$ and that $\Pi$ is the union of $\Lambda^{(1)} \times \Lambda^{(2)} \times \cdots \Lambda^{(k)}$ (over all strings in $S$). Note that while, in general, a specification must show how to obtain $\Lambda^{(i)}$ from $\Pi$ \emph{and} how to obtain $x^{(i)}$ from $x$, for an equipartition the latter is implicit.

A natural candidate for a (classical) MAP protocol for $\Pi$ is to guess an index $i \in [k]$ and run the tester for $\Lambda^{(i)}$ on $x^{(i)}$. If $x \in \Pi$, then $x^{(i)} \in \Lambda^{(i)}$ for $i \in [k]$ and the tester always accepts; while if $x$ is $\eps$-far from $\Pi$, then $x^{(i)}$ is $\eps_i$-far from $\Lambda^{(i)}$ for some $\eps_i$ satisfying $\frac1{k} \sum_i \eps_i \geq \eps$, \emph{regardless of the specification} (recall that $x$ is $\eps$-far from $\bigcup_{y \in S} \Lambda^{(1)} \times \Lambda^{(2)} \times \cdots \Lambda^{(k)}$).

We now proceed to the proof of the general case, where the decomposition need not be a partition, and it suffices for the subproperties to admit a MAP (rather than a tester). Moreover, we show that quantum algorithms enable a speedup via amplitude amplification (but this requires the MAPs to be \emph{one-sided}, unlike in the classical case).

\begin{proof}
	Recall that we have a property $\Pi$ that is $(k,s)$-decomposable by a collection of strings $S \subseteq \bitset^s$, where each $y \in S$ determines $k$ properties $\Lambda^{(i)} \subseteq \bitset^{m_i}$ and a decomposition of $x$ into $k$ strings $x^{(i)} \in \bitset^{m_i}$. Moreover, each $\Lambda^{(i)}$ admits a MAP with proof complexity $p$ and query complexity $m^\alpha/\eps^\beta$. The verifier for $\Pi$ executes the steps shown in \Cref{fig:decomposable}.

	We note that a more naive strategy would succeed, albeit with a worse dependence on $\eps$: choosing $i \in [k]$ with probability proportional to $m_i$ yields a string $x^{(i)}$ which is $\eps/2$-far from $\Lambda^{(i)}$ with probability at least $\eps/2$, so that one could execute the MAP verifier for $\Lambda^{(i)}$ with proximity parameter $\eps/2$ (and use amplitude amplification to achieve constant soundness by repeating this $O(1/\sqrt{\eps})$ times). However, the technique of \emph{precision sampling} \cite{L87} overcomes the issue of not knowing the distances $\eps_i$ between $x^{(i)}$ and $\Lambda^{(i)}$ more economically: trying every proximity parameter $2^j$ with $j \in [O(\log 1/\eps)]$, in the spirit of binary search, incurs a merely logarithmic overhead.

  \begin{figure}[h]
    \begin{boxedminipage}{\textwidth}
      \small \medskip \noindent

      \textbf{Input:} explicit access to a proximity parameter $\eps > 0$ and a proof string $\pi \in \bitset^{s + kp}$, and oracle access to a string $x \in \bitset^n$.

      \begin{enumerate}
      \item Interpret the proof as a concatenation of a string $y \in \bitset^s$ with $k$ strings $\pi_1, \ldots, \pi_k \in \bitset^p$. If $y \notin S$, i.e., $y$ does not specify a decomposition, then reject.

      \item For every $j \in [\lceil\log 1/\eps\rceil + 1]$, let $M_j$ be the algorithm obtained from \Cref{cor:aa} by performing $O\left(\sqrt{\frac{\log 1/\eps}{2^j \eps}}\right)$ rounds of amplitude amplification to the following subroutine:
	  \begin{enumerate}[ref =\theenumi{(\alph*)}]
	  	\item\label{itm:aa-subroutine} Sample $i \in [k]$ with probability $m_i/\sum_{j \in [k]} m_j$ and run the MAP verifier for $\Lambda^{(i)}$ on input $x^{(i)}$, with proximity parameter $2^{-j}$, using $\pi_i$ as the proof string. Reject if the MAP for $\Lambda^{(i)}$ rejects.
	  \end{enumerate}
      \item Execute $M_j$ for every $j \in [O(\log 1/\eps)]$. If any of them rejects, then reject; otherwise, accept.
      \end{enumerate}
    \end{boxedminipage}

    \caption{QCMAP verifier for a $(k,s)$-decomposable property $\Pi$}
    \label{fig:decomposable}
  \end{figure}

	Completeness follows immediately: if $x \in \Pi$, then there exists a string $y \in S$ such that $x$ determines $x^{(i)} \in \Lambda^{(i)}$ for all $i \in [k]$. Since the properties $\Lambda^{(i)}$ admit one-sided MAP protocols with proof complexity $p$, the procedure in \Cref{itm:aa-subroutine} always accepts when given the proof string $\pi = (y, \pi_1, \ldots, \pi_k)$, where $\pi_i$ is a valid proof for $x^{(i)}$. Therefore, the verifier always accepts as well.

   Now, suppose $x$ is $\eps$-far from $\Pi$ and the proof string $\pi = (y, \pi_1, \ldots, \pi_n)$ is such that $y \in S$ (since otherwise the verifier rejects immediately). Then, since $\Pi$ is decomposable, $x^{(i)}$ is $\eps_i$-far from $\Lambda^{(i)}$ and $\E_{i \leftarrow \mathcal{D}}[\eps_i] = \Omega(\eps)$, where $\Lambda^{(i)}$ are the subproperties defined by $y$ and $\mathcal{D}$ is the distribution over $[k]$ that samples $i$ with probability proportional to $m_i$.

   To show soundness, we will make use of the following (precision sampling) lemma.
  \begin{lemma}[{\cite[Fact A.1]{G14}}]
  	\label{lem:prec-samp}
    There exists $j \in \left[ \left\lceil\log\frac1{\eps}\right\rceil + 1 \right]$
    such that
    \begin{equation*}
    	\P_{i \leftarrow \mathcal{D}}[\eps_i \geq 2^{-j}] = \Omega\left(\frac{2^j \eps}{\log 1/\eps}\right)\;.
    \end{equation*}
  \end{lemma}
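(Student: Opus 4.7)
The plan is to invoke a standard dyadic bucketing (averaging) argument, pigeonholing over $O(\log(1/\eps))$ buckets of geometrically decreasing size. Set $J = \lceil \log(1/\eps) \rceil + 1$ and, for each $j \in [J]$, define the bucket
\[
  B_j = \set{i \in [k] : 2^{-j} \leq \eps_i < 2^{-(j-1)}},
\]
together with a ``tail bucket'' $B_\infty = \set{i \in [k] : \eps_i < 2^{-J}}$. By the choice of $J$ one has $2^{-J} \leq \eps/2$, so the contribution of $B_\infty$ to $\E_{i \leftarrow \mathcal{D}}[\eps_i]$ is at most $\eps/2$.

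I would then use the hypothesis $\E_{i \leftarrow \mathcal{D}}[\eps_i] = \Omega(\eps)$, absorbing constants into the implicit constant so that this expectation is at least (say) $\eps$. Writing the expectation as a sum over the dyadic buckets, discarding $B_\infty$, and bounding $\eps_i < 2^{-(j-1)}$ on $B_j$, I would obtain
\[
  \sum_{j=1}^{J} \P_{i \leftarrow \mathcal{D}}[i \in B_j] \cdot 2^{-(j-1)} \;\geq\; \E_{i \leftarrow \mathcal{D}}[\eps_i] - \tfrac{\eps}{2} \;=\; \Omega(\eps).
\]

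A simple averaging over the $J = O(\log(1/\eps))$ terms produces some index $j^\star \in [J]$ with
\[
  \P_{i \leftarrow \mathcal{D}}[i \in B_{j^\star}] \cdot 2^{-(j^\star-1)} \;=\; \Omega\!\left(\frac{\eps}{\log(1/\eps)}\right),
\]
and rearranging gives $\P_{i \leftarrow \mathcal{D}}[i \in B_{j^\star}] = \Omega(2^{j^\star} \eps / \log(1/\eps))$. The monotone containment $B_{j^\star} \subseteq \set{i : \eps_i \geq 2^{-j^\star}}$ then yields the claimed bound at $j = j^\star$.

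The only delicate point is bookkeeping of constants when moving from the $\Omega(\eps)$ hypothesis on the expectation to the truncation threshold $2^{-J} \leq \eps/2$; this is immediate once the hidden constants are fixed in advance. There is no genuine obstacle, as the argument is a textbook dyadic pigeonhole.
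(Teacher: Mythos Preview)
Your dyadic bucketing argument is correct and is the standard proof of this precision-sampling fact. Note, however, that the paper does not actually prove this lemma: it is quoted verbatim as \cite[Fact~A.1]{G14} and used as a black box inside the proof of \Cref{thm:decomposable}. So there is no ``paper's own proof'' to compare against; your argument is precisely the kind of elementary pigeonhole over $O(\log(1/\eps))$ geometric levels that underlies the cited fact. The constant-bookkeeping caveat you flag (matching the hidden constant in $\E[\eps_i]=\Omega(\eps)$ to the truncation level $2^{-J}\le \eps/2$) is real but harmless, since the conclusion is also stated only up to $\Omega(\cdot)$.
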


  If the procedure in \Cref{itm:aa-subroutine} samples $i \in [k]$ such that $\eps_i \geq 2^{-j}$, then it rejects with probability $2/3$ (since the MAP has soundness $2/3$). With $j$ as ensured by \Cref{lem:prec-samp}, the probability it samples such an $i \in [k]$ is $\Omega\left(\frac{2^j \eps}{\log 1/\eps}\right)$, so that the probability it rejects is $\frac{2}{3} \cdot \Omega\left(\frac{2^j \eps}{\log 1/\eps}\right) = \Omega\left(\frac{2^j \eps}{\log 1/\eps}\right)$; therefore, the algorithm $M_j$ obtained from \Cref{cor:aa} by $O\left(\sqrt{\frac{\log 1/\eps}{2^j \eps}}\right)$ rounds of amplitude amplification rejects, causing the verifier to also reject, with probability $2/3$.

  We now prove the stated upper bounds on the query complexity. For every $j$, each execution of the MAP verifier for $\Lambda^{(i)}$ makes $q(m,2^{-j})$ queries to $x^{(i)}$, which translate into $b \cdot q(m,2^{-j})$ queries to $x$ (since each query to $x^{(i)}$ can be emulated with $b$ queries to $x$). The total query complexity is therefore

  \begin{equation*}
   \sum_{j \in [\lceil\log{1/\eps}\rceil + 1]} \sqrt{\frac{\log 1/\eps}{2^{j}\eps}} \cdot b \cdot q \left(
    m, 2^{-j} \right) = \tilde{O}\left(\frac{b}{\sqrt{\eps}} \sum_{j \in [\lceil\log{1/\eps}\rceil + 1]} \frac{q \left(
    m, 2^{-j} \right)}{2^{j/2}} \right)\;.
  \end{equation*}
  If $q(m,\eps) = m^\alpha/\eps^\beta$ with $\alpha > 0$ and $\beta \geq 0$, then

  \begin{align*}
    \tilde{O}\left(\frac{b}{\sqrt{\eps}} \sum_{j \in [\lceil\log{1/\eps}\rceil + 1]} \frac{q \left(
    m, 2^{-j} \right)}{2^{j/2}} \right) &= \tilde{O} \left( \frac{b m^{\alpha}}{\sqrt{\eps}} \sum_{j \in [\lceil\log(1/\eps)\rceil + 1]}
                                  2^{j\left(\beta-\frac{1}{2}\right)} \right) \\
                                &= \tilde{O} \left(b m^{\alpha} \eps^{-\max(1/2,\beta)}
                                  \right)\;.
  \end{align*}
  If $\alpha = 0$ and $\beta > 0$, we use the bound $q(m,2^{-j}) \leq m$ for all $\eps$ (from the trivial tester that queries the entire input) to obtain two upper bounds for the query complexity: the first is

  \begin{align*}
    \tilde{O}\left(\frac{b}{\sqrt{\eps}} \sum_{j \in [\lceil\log(1/\eps)\rceil + 1]} \frac{q \left(
    m, 2^{-j} \right)}{2^{j/2}} \right) &= \tilde{O} \left( \frac{b}{\sqrt{\eps}}
                                  \sum_{j \in [\lceil\log(1/\eps)\rceil + 1]} \min \set{
                                  \frac{m}{2^{j/2}}, 2^{j \left(\beta-\frac{1}{2}\right)}} \right) \\
                                &= \tilde{O} \left( \frac{b}{\sqrt{\eps}} \sum_{j \in [\lceil\log(1/\eps)\rceil + 1]} m^{1-\frac1{2\beta}} \right)\\
                                &= \tilde{O} \left( \frac{b m^{1-\frac1{2\beta}}}{\sqrt{\eps}} \right)\;.
  \end{align*}

  The second upper bound matches that obtained by the classical MAP, and is tighter when $m^{\frac1{2\beta}} \geq 1/\sqrt{\eps}$. Since $2^{j/2} \leq 4/\sqrt{\eps}$ for all $j$ in the sum above, we have $\sqrt{\eps} = O(2^{-j/2})$; therefore,
    \begin{align*}
    \tilde{O}\left(\frac{b}{\sqrt{\eps}} \sum_{j \in [\lceil\log(1/\eps)\rceil + 1]} \frac{q \left(
    m, 2^{-j} \right)}{2^{j/2}} \right) &= \tilde{O}\left(\frac{b}{\eps} \sum_{j \in [\lceil\log(1/\eps)\rceil + 1]} \frac{\sqrt{\eps}}{2^{j/2}} \cdot q \left(
    m, 2^{-j} \right) \right) \\
    &= \tilde{O} \left( \frac{b}{\eps}
                                  \sum_{j \in [\lceil\log(1/\eps)\rceil + 1]} \min \set{
                                  \frac{m}{2^j}, 2^{j (\beta-1)}} \right) \\
                                &= \tilde{O} \left( \frac{b}{\eps} \sum_{j \in [\lceil\log(1/\eps)\rceil + 1]} m^{1-\frac1{\beta}} \right)\\
                                &= \tilde{O} \left( \frac{b m^{1-\frac1{\beta}}}{\eps} \right)\;.
  \end{align*}

  Finally, observe that, for testing with $\eps = 1/n$ (i.e., deciding exactly), one may take the MAPs for $\Lambda^{(i)}$ to be the trivial testers (with query complexity $m$ and no proof). Moreover, it is unnecessary to iterate over $j$ and apply \Cref{lem:prec-samp}; sampling $i \in [k]$ uniformly and running the trivial tester requires $b m$ queries to $x$ and leads to a rejection with probability at least $1/k$, since $\frac1{k} \sum_{i \in [k]} \eps_i > 0$ implies $x^{(i)} \notin \Lambda^{(i)}$ for at least one $i \in [k]$. Therefore, applying $O(\sqrt{k})$ rounds of amplitude amplification to this procedure ensures rejection of an $x$ that is $\eps$-far from $\Pi$ with constant probability and yields query complexity $O(bm\sqrt{k})$.
\end{proof}

\begin{remark}[Additional speedup for proximity-oblivious MAPs]
\label{rem:po-decomposable}
	If the MAP verifiers for the subproperties $\Lambda^{(i)}$ are proximity-oblivious, it is possible to improve on the query complexity of \Cref{thm:decomposable} significantly. More precisely, suppose each $\Lambda^{(i)}$ admits a proximity-oblivious MAP with query complexity $O(1)$ and detection probability $\rho(\eps,m) = \eps^\beta/m^\alpha$. Then, if an input is $\eps$-far from $\Pi$, for some $j \in [O(\log 1/\eps)]$ (as ensured by \Cref{lem:prec-samp}), the procedure of \Cref{itm:aa-subroutine} samples $i \in [k]$ such that $x^{(i)}$ is $\eps_i$-far from $\Lambda^{(i)}$ with $\eps_i = \tilde{\Omega}(2^{j} \eps)$.

	The procedure thus rejects with probability $\tilde{\Omega}(2^j \eps)\cdot \rho(2^{-j},m) = \tilde{\Omega}(2^{j(1-\beta)} \eps/m^\alpha)$ in this case. By applying $\tilde{O}(m^{\alpha/2}/\sqrt{2^{-j(\beta - 1)} \eps})$ rounds of amplitude amplification for each $j$ (instead of $\tilde{O}(1/\sqrt{2^j \eps})$ performed for MAPs that are not proximity-oblivious), and, since $2^{-j}  =\Omega(\eps)$, the total query complexity becomes $\tilde{O}(b \sqrt{m^\alpha/\eps^\beta})$.
\end{remark}

We finish this section with a corollary of \Cref{thm:decomposable} in the \emph{graph orientation model}. A directed graph is called \emph{Eulerian} if the in-degree of each of its vertices is equal to its out-degree. An orientation is a mapping from the edges to $\bitset$, representing whether each edge is oriented from $i$ to $j$ or from $j$ to $i$, and the distance between two orientations is the fraction of edges whose orientation must be changed to transform one into the other. Let $\Pi_E$ be the property consisting of all Eulerian orientations of the complete bipartite graph $K_{2,n-2}$, i.e., the graph with vertex set $[n]$ and edge set $\set{\set{i,j} : i \leq 2, j \geq 3}$.

A MAP protocol for $\Pi_E$ parametrised by an integer $k$ is constructed in \cite{GR18}; its proof and query complexities are $O(k \cdot \log{n})$ and $\tilde{O}(n/(\eps k))$, respectively. That protocol is obtained by applying their (classical) version of \Cref{thm:decomposable} using the trivial tester for each of the subproperties. Using the same proof (and thus the same decomposition), and recalling that the trivial tester makes $q(m,\eps) = m$ queries for a sub-property of length $m$, we obtain

\begin{corollary}
\label{cor:eulerian}
  The property $\Pi_E$ has a one-sided QCMAP, with
  respect to proximity parameter $\eps$, that uses a proof of length
  $O(k \cdot \log{n})$ and has query complexity $\tilde{O} \left(
    \frac{n}{k\sqrt{\eps}} \right)$.
\end{corollary}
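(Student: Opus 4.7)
}

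The plan is to invoke \Cref{thm:decomposable} as a black box, so the bulk of the work is exhibiting a suitable $(k, O(k \log n))$-decomposition of $\Pi_E$ and identifying the parameters $m$, $b$, $\alpha$, $\beta$, $p$ to plug in. I would import the decomposition of $\Pi_E$ used in the (classical) MAP of \cite{GR18}: a specification consists of a partition of the $n-2$ degree-$2$ vertices of $K_{2,n-2}$ into $k$ consecutive blocks of size $\approx (n-2)/k$, together with, for each block, a pair of integers indicating the claimed number of edges inside that block oriented towards vertex $1$ (equivalently, away from vertex $2$) and towards vertex $2$. Such a specification takes $s = O(k \log n)$ bits, so the decomposition is succinct. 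Given the specification, the $i$-th sub-string $x^{(i)}$ is the orientation restricted to the edges of the $i$-th block (length $m_i = \Theta(n/k)$, each bit being a single bit of $x$, so $b = O(1)$), and the $i$-th sub-property $\Lambda^{(i)}$ is the (finite) set of orientations of that block that realise the claimed in/out counts for the two hubs.

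Next I would verify the two defining conditions of \Cref{def:decomposable}. Completeness is immediate: for an Eulerian orientation $x \in \Pi_E$, pick as specification the true hub in-degree contributions of each block; then each $x^{(i)}$ lies in $\Lambda^{(i)}$ by construction. For soundness, the key observation (inherited from \cite{GR18}) is that any global orientation consistent with \emph{some} choice of block-wise counts that sum correctly to the Eulerian hub degrees is itself Eulerian, because the non-hub vertices impose only local parity constraints. Hence if $x$ is $\eps$-far from $\Pi_E$, then for \emph{every} specification $y$, the total Hamming distance $\sum_i \eps_i m_i$ of the $x^{(i)}$ from their respective $\Lambda^{(i)}$ is at least $\eps n$, giving $\E_{i \leftarrow \mathcal{D}}[\eps_i] = \Omega(\eps)$ as required, where $\mathcal{D}$ samples $i$ proportionally to $m_i$.

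With the decomposition in hand, I would use the \emph{trivial tester} on each $\Lambda^{(i)}$: read all $m_i = \Theta(n/k)$ bits of $x^{(i)}$ and check membership exactly. This has proof complexity $p = 0$ and query complexity $q(m,\eps) = m$, i.e., $\alpha = 1$ and $\beta = 0$. Plugging $(\alpha,\beta) = (1,0)$, $b = O(1)$, $m = O(n/k)$, $s = O(k \log n)$, $p = 0$ into \Cref{thm:decomposable} yields proof complexity $s + kp = O(k \log n)$ and query complexity
\[
\tilde O\!\left(b \cdot m^\alpha \cdot \eps^{-\max(1/2,\beta)}\right) = \tilde O\!\left(\frac{n}{k\sqrt{\eps}}\right),
\]
which is exactly the claimed bound. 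One-sidedness is inherited from the trivial tester (which never errs on yes-instances) together with the fact that the amplitude amplification step in \Cref{thm:decomposable} preserves perfect completeness.

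The main obstacle is really only the soundness check for the decomposition: one must argue that no ``clever'' specification can mask the global $\eps$-farness of $x$ by concentrating the error in a few blocks whose weight in $\mathcal{D}$ is negligible. This is handled by the observation above that the blocks have comparable sizes $m_i = \Theta(n/k)$, so $\mathcal{D}$ is within a constant factor of uniform, and by the fact that the block-wise counts in the specification are constrained to sum to the Eulerian hub degrees; any attempt to hide corruption in one block only displaces it to others. Everything else is a direct substitution into \Cref{thm:decomposable}.
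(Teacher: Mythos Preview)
Your proposal is correct and follows essentially the same approach as the paper: invoke the $(k, O(k\log n))$-decomposition of $\Pi_E$ from \cite{GR18}, use the trivial tester on each subproperty (so $\alpha=1$, $\beta=0$, $p=0$, $b=O(1)$, $m=\Theta(n/k)$), and plug into \Cref{thm:decomposable}. The only difference is cosmetic: the paper simply cites \cite{GR18} for the decomposition, whereas you spell out its details and verify the conditions of \Cref{def:decomposable} explicitly.
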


Note that the query complexity of the classical MAP becomes linear with $\eps = \Omega(1/k)$, whereas the QCMAP is able to decide \emph{exactly} (i.e. test with $\eps = 1/n$) with query complexity $O(n^{3/2} / k)$ (which is still sublinear whenever $k = \omega(\sqrt{n})$). In particular, with $k = n^{3/4}$, both query and communication complexities are $\tilde{O}(n^{3/4})$; see the \Cref{sec:exact} for further discussion and applications of \Cref{thm:decomposable} to exact decision problems.

\subsection{\texorpdfstring{$k$}{k}-monotonicity}
\label{sec:k-monotonicity}

In this section, we show that a generalisation of monotonicity of Boolean functions over the line $[n]$ is efficiently testable by QCMAP protocols. A function $f: [n] \to \bitset$ is \emph{$k$-monotone} if any sequence of integers $1 \leq x_1 < x_2 \ldots < x_\ell \leq n$ such that $f(x_1) = 1$ and $f(x_i) \neq f(x_{i+1})$ for all $i < \ell$ has length $\ell \leq k$. This problem was studied in \cite{CGGKW19}, where one-sided $\eps$-testers for $k$-monotonicity on the line are shown to require $\Omega(k/\eps)$ queries, while two-sided testers can achieve query complexity $\tilde{O}(1/\eps^7)$ (which, although a far worse dependence on $\eps$ than the lower bound, is \emph{independent} of $k$).

In order to apply \Cref{thm:decomposable}, we must only show that $k$-monotone functions are decomposable. Define $\Pi_{k,[n]}$ as the set of $k$-monotone Boolean functions on the line $[n]$.\footnote{We consider the standard representation of a Boolean function as the bit string obtained by concatenating all function evaluations, i.e., $f \colon [n] \to \bitset$ is represented by $x \in \bitset^n$ with $x_i = f(i)$.} Then,

\begin{theorem}
	For any $k \in [n]$, the property $\Pi_{k,[n]}$ is succinctly $k$-decomposable.
\end{theorem}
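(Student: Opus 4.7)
The plan is to exhibit the decomposition explicitly. The specification $y \in S \subseteq \bitset^s$ will encode $k-1$ breakpoints $0 \leq n_1 \leq n_2 \leq \cdots \leq n_{k-1} \leq n$, which (with the convention $n_0 = 0$ and $n_k = n$) partition $[n]$ into $k$ contiguous (possibly empty) intervals of sizes $m_i \coloneqq n_i - n_{i-1}$. Since each breakpoint takes $\lceil \log(n+1) \rceil$ bits, $s = (k-1)\lceil \log(n+1) \rceil = O(k \log n)$, confirming succinctness. The string $x^{(i)} \coloneqq x_{n_{i-1}+1} \cdots x_{n_i}$ is a substring of $x$ (each of its bits being a single bit of $x$), and $\Lambda^{(i)} \subseteq \bitset^{m_i}$ is the set of nondecreasing length-$m_i$ strings if $i$ is odd and nonincreasing ones if $i$ is even.

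The crux of correctness will be a combinatorial characterisation: $f \in \Pi_{k,[n]}$ if and only if $[n]$ admits an ordered partition into at most $k$ (possibly empty) contiguous intervals on each of which $f$ is monotone, with the first piece nondecreasing and directions alternating. The ``if'' direction follows because within a monotone Boolean piece the longest alternating subsequence starting with $1$ contributes at most one element to the global alternation, bounding the latter by the number of pieces. The ``only if'' direction can be proven by induction on $k$: peel off the maximal nondecreasing prefix $[1, p]$ as the first piece, observe that $f|_{[p+1, n]}$ must begin with a nonincreasing stretch (so that swapping $0 \leftrightarrow 1$ reduces to a function that is $(k-1)$-monotone), and apply the inductive hypothesis.

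Completeness then follows immediately: for $f \in \Pi_{k,[n]}$, take $y$ to describe the partition guaranteed by the characterisation, so that $x^{(i)} \in \Lambda^{(i)}$ for every $i$. For soundness, fix any $y \in S$ and let $\eps_i$ be the (normalised Hamming) distance from $x^{(i)}$ to its nearest element $g^{(i)} \in \Lambda^{(i)}$. The concatenation $g \coloneqq g^{(1)} g^{(2)} \cdots g^{(k)}$ lies in $\Pi_{k,[n]}$ by the characterisation. Since $\sum_i m_i = n$, the distribution $\mathcal{D}$ of \Cref{def:decomposable} assigns mass $m_i/n$ to $i$, whence
\begin{equation*}
    \E_{i \leftarrow \mathcal{D}}[\eps_i] \;=\; \frac{1}{n}\sum_{i=1}^k \eps_i m_i \;=\; d(x, g) \;>\; \eps
\end{equation*}
by the $\eps$-farness of $x$, which is the required $\Omega(\eps)$ bound.

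The main (minor) obstacle is handling the asymmetry in the CGGKW definition of $k$-monotonicity: a ``V-shaped'' input such as $10\cdots 01$ is $3$-monotone yet admits only two nonempty alternating monotone pieces, starting with nonincreasing. Allowing breakpoints to coincide---equivalently, allowing empty intervals---is precisely what lets decompositions whose first substantive piece is nonincreasing be captured by the partition formulation that fixes the nominal first direction as nondecreasing (via an empty first interval).
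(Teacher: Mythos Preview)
Your proof is correct and follows the same approach as the paper: specify breakpoints and decompose into alternating monotone pieces. The differences are in the details. You use disjoint intervals (so $\sum_i m_i = n$ and hence $\E_{\mathcal D}[\eps_i] \geq \eps$ exactly) and allow empty pieces, which simultaneously handles padding to exactly $k$ pieces and the ``first substantive piece is nonincreasing'' edge case; the paper instead uses intervals that overlap at their endpoints (so $\sum_i m_i = n+\ell$, yielding only $\E_{\mathcal D}[\eps_i] \geq \eps \cdot n/(n+\ell)$), pins each $\Lambda^{(i)}$ by the extra constraint $g(1) = f_i(1)$, and pads from $\ell+1$ to $k$ pieces by iteratively bisecting the largest interval. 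Your explicit combinatorial characterisation (with its short inductive proof) and your soundness argument via concatenating the nearest $g^{(i)}$'s are arguably cleaner, since disjoint pieces avoid any overlap-consistency concerns when gluing.
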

\begin{proof}
	Since a $k$-monotone function $f$ has at most $k - 1$ critical points, where it changes from nondecreasing to nonincreasing or vice-versa, specifying these points yields a decomposition of $f$ into (1-)monotone subfunctions.

	More precisely, a string of length $s \leq (k-1) \log n$ determines a decomposition of the input $f$ by specifying $\ell \leq k - 1$ integers $1 < n_1 < n_2 < \cdots < n_\ell < n$. Define $n_0 = 1$, $n_{\ell + 1} = n$, $m_i \coloneqq n_i - n_{i-1} + 1$ and the function $f_i: [m_i] \to \bitset$ by $f_i(x) = f(x + n_{i-1} - 1)$ for all $i \in [\ell + 1]$. Then, $f \in \Pi_{k,[n]}$ if and only if:
	\begin{enumerate}
		\item for $i \in [\ell + 1]$ odd, $f_i \in \Lambda^{(i)} \coloneqq \set{g \in \Pi_{1, m_i} : g(1) = f_i(1)}$; and
		\item for $i \in [\ell + 1]$ even, $f_i \in \Lambda^{(i)} \coloneqq \set{1 - g : g \in \Pi_{1, m_i} \text{ and } g(1) = 1 - f_i(1)}$.
	\end{enumerate}

	It is clear that, when $f \in \Pi_{k,[n]}$, there exists a set of $\ell \leq k - 1$ distinct integers in $[2,n-1]$ that satisfies both conditions. When $f$ is $\eps$-far from $\Pi_{k,[n]}$, the sum of absolute distances $\eps_i m_i$ from each $f_i$ to $\Lambda^{(i)}$ is $\sum_i \eps_i m_i \geq \eps n$. Therefore,
	\begin{equation*}
		\E_{i \leftarrow \mathcal{D}}[\eps_i] \geq \eps \cdot \frac{n}{\sum_{i \in [\ell + 1]} m_i} = \eps \cdot \frac{n}{n + \ell} = \Omega(\eps)\;,
	\end{equation*}
	where $\mathcal{D}$ is the distribution over $[\ell + 1]$ that has probability mass $m_i/(\sum_{j \in [\ell + 1]} m_j)$ at point $i$.

	Finally, while this ensures $(\ell+1)$-decomposability for some $\ell \leq k - 1$, one may deterministically transform it into a $k$-decomposition (and, in fact, a $K$-decomposition for any $K \geq \ell$) by, for example, iteratively finding the largest interval and dividing it at its midpoint $k - \ell - 1$ times (with the requirement that nonincreasing functions in the large interval are monotone nonincreasing in both subintervals, and likewise for the nondecreasing case).
\end{proof}

Since monotonicity on the line $[m]$ is $\eps$-testable with $q(m,\eps) = O(1/\eps)$ queries \cite[Proposition 1.5]{G17}, applying (the second case of) \Cref{thm:decomposable} yields \Cref{cor:kmonotone}: a QCMAP protocol for $k$-monotonicity with proof complexity $O(k \log n)$ and query complexity $\tilde{O}(1/\eps)$ exists. (The theorem also gives an upper bound of $\tilde O(\sqrt{n/\eps})$, which is no better: $\tilde O(1/\eps)$ is smaller up to $\eps \approx 1/n$, where the two bounds match.)

It is worth noting that the standard monotonicity tester on the line is \emph{not} proximity-oblivious, unlike, e.g., on the Boolean hypercube, where both the ``edge tester'' \cite{GGLRS00} and the state-of-the-art \cite{KMS18} are proximity-oblivious; thus, one could not directly apply amplitude amplification, and must exploit decomposability via \Cref{thm:decomposable}.

Note that the (one-sided) QCMAP for $k$-monotonicity is more efficient than any one-sided tester, e.g., when $k = \Theta(\log n)$ and $\eps = \Theta(1/\log^2 n)$: then the QCMAP's proof and query complexities are $\tilde O(\log^2 n)$, whereas one-sided testers must make $\Omega(\log^3 n)$ queries. Moreover, our QCMAP outperforms the best known \emph{two-sided} tester of \cite{CGGKW19}, which makes $\tilde{O}(1/\eps^7)$ queries, even with mild dependencies of the proximity parameter $\eps$ on $n$. Indeed, when $\eps = o(1/\log^{1/7} n)$ and $k = O(1)$, the tester's query complexity is superlogarithmic while the proof and query complexities of the QCMAP are logarithmic.

With the transformation from two-sided testers to one-sided MAPs of \cite{GR18}, it is also possible to compare our one-sided QCMAP against one-sided MAPs. From the aforementioned two-sided tester, one obtains a MAP with proof complexity $\polylog n$ and query complexity $O(\polylog(n/\eps)/\eps^7)$. Thus, the QCMAP remains advantageous except when $k$ and $\eps$ are large (e.g., $k = \omega(\polylog n)$ and $\eps = \Omega(1)$).

\subsection{Exact problems}
\label{sec:exact}

To conclude the discussion of decomposability and its consequences, we shift focus to a special case: that of testing $n$-bit strings with proximity parameter $\eps = 1/n$. Since, for any $\Pi \subseteq \bitset^n$ and $x \in \bitset^n \setminus \Pi$, the string $x$ is at least $1/n$-far from $\Pi$, this is the task of \emph{exactly} deciding membership in $\Pi$.

Observe that for classical MAPs, nontrivial properties require $\Omega(n)$ queries in this case: even if a verifier receives as proof a claim $x'$ that is allegedly equal to its input string, it requires $O(1/\eps) = \Omega(n)$ queries to check the validity of the claim. Remarkably, \emph{quantum} algorithms are able to solve exact decision problems with sublinear queries (as illustrated by Grover's algorithm, which makes $O(\sqrt{n})$ queries). Thus, as we show next, insights arising from decomposability are applicable to this setting. We begin by showing a nontrivial QCMA protocol for the parity of a bit string in \Cref{sec:parity}, and then extend it to \emph{branching programs} in \Cref{sec:bp}.

\subsubsection{Parity}
\label{sec:parity}

Consider the problem of deciding if an $n$-bit string has even parity. This is clearly maximally hard, requiring $\Omega(n)$ queries even for \emph{interactive proofs with arbitrary communication}, which we show next for completeness.

\begin{lemma}
    Any IP verifier that accepts strings of even parity and rejects strings of odd parity with probability $2/3$ must make at least $n/3$ queries to its input.
\end{lemma}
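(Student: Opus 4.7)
The plan is to run a standard input-flipping adversary argument: a single bit flip inverts the parity, so if the verifier queries few coordinates it cannot detect such a flip, and a ``cheating'' prover can reuse its honest strategy to make the verifier erroneously accept an odd-parity input.

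First I would fix any $x \in \bitset^n$ of even parity and let $P^*$ be a prover strategy that maximises the verifier's acceptance probability against $x$. Completeness gives $\P_r[\langle P^*, V^x(r) \rangle = 1] \ge 2/3$, where $r$ denotes the verifier's internal randomness. For each $r$, let $Q(r) \subseteq [n]$ denote the set of coordinates queried by $V^x(r)$ while interacting with $P^*$, so $\abs{Q(r)} \le q$.

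Next, sample $i \in [n]$ uniformly and consider $x' \coloneqq x \oplus e_i$, which has odd parity. The key observation is that whenever $i \notin Q(r)$, the executions $\langle P^*, V^x(r)\rangle$ and $\langle P^*, V^{x'}(r)\rangle$ produce identical transcripts and identical query sets: the verifier reads the same bits, hence sends the same messages; the prover's map from transcripts to messages was fixed once and for all from $x$, so it too responds identically; and consequently the verifier queries exactly the same coordinates. Therefore
\[
    \P_r[\langle P^*, V^{x'}(r)\rangle = 1] \;\ge\; \P_r[\langle P^*, V^x(r)\rangle = 1] - \P_r[i \in Q(r)] \;\ge\; \tfrac{2}{3} - \P_r[i \in Q(r)].
\]

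Averaging over $i$ yields $\E_i \P_r[i \in Q(r)] = \E_r[\abs{Q(r)}/n] \le q/n$, so some $i$ satisfies $\P_r[\langle P^*, V^{x'}(r)\rangle = 1] \ge 2/3 - q/n$. Because $x'$ has odd parity, soundness (applied to the specific prover $P^*$, which is one particular cheating strategy) forces this probability to be at most $1/3$, whence $q \ge n/3$ by rearranging. The only genuinely delicate step is the claim that the two transcripts coincide on the event $i \notin Q(r)$; this rests on $P^*$ being defined once as a function of $x$ (and not $x'$), so that both parties' behaviour during the interaction is a deterministic function of $r$ together with the bits actually read, which match across $x$ and $x'$ outside of $Q(r)$.
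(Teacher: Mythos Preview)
Your proof is correct and follows essentially the same route as the paper's: fix an even-parity $x$, flip a single bit chosen by averaging to find a coordinate queried with probability at most $q/n$, and reuse the honest prover's strategy as a cheating strategy on the flipped (odd-parity) input to violate soundness unless $q \ge n/3$. The paper argues by contradiction (assuming $q < n/3$) while you derive $q \ge n/3$ directly, and you are slightly more explicit about why the query set $Q(r)$ is unchanged on the event $i \notin Q(r)$; but the substance of the two arguments is identical.
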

\begin{proof}
    Let $V$ and $P$ be a verifier and an honest prover for an IP for parity, and assume, towards contradiction, that the query complexity of $V$ is less than $n/3$.

   Fix an arbitrary input $x \in \bitset^n$ of even parity. Define $S_x$ as the random variable comprising all the coordinates queried by $V$ in an execution $\langle V^x, P(x)\rangle$ of the protocol, and let $I \in [n]$ be a uniform random variable independent from $S_x$. Then, since $\abs{S_x} < n/3$,
   \begin{equation*}
    \frac{1}{n} \sum_{i = 1}^n \P[i \in S_x] = \sum_{i = 1}^n \P[I = i] \cdot \P[I \in S_x ~|~ I = i] = \P[I \in S_x] < \frac{1}{3}\;,
   \end{equation*}
   so there exists $i \in [n]$ such that
   \begin{equation*}
    \P[V^x \text{ queries } i \text{ in the execution } \langle V^x, P(x)\rangle] = \P[i \in S_x] < \frac{1}{3}\;.
   \end{equation*}
  Now, consider the execution of a protocol on input $y \in \bitset^n$ obtained by flipping the $i^\text{th}$ bit of $x$ (i.e., such that $y_j = x_j$ if $j \neq i$ and $y_i = 1 - x_i$ otherwise). Let $\tilde{P}$ be a (malicious) prover that executes on $y$ exactly as $P$ does on $x$; that is, set $\tilde{P}(y) = P(x)$. We thus have
  \begin{align*}
    \P[\langle V^y, \tilde{P}(y) \rangle \text{ rejects}] &= \P[i \in S_y] \cdot \P[\langle V^y, \tilde{P}(y) \rangle \text{ rejects} ~|~ i \in S_y]\\
    &+ \P[i \notin S_y] \cdot \P[\langle V^y, \tilde{P}(y) \rangle \text{ rejects} ~|~ i \notin S_y]\;,
  \end{align*}
  and, moreover, the following equalities between events hold:
  \begin{align*}
    [i \notin S_y] &= [i \notin S_x]\;, \text{ and thus}\\
    [\langle V^y, \tilde{P}(y) \rangle \text{ rejects} ~|~ i \notin S_y] &= [\langle V^x, P(x) \rangle \text{ rejects} ~|~ i \notin S_x]\;.
  \end{align*}
  Therefore,
  \begin{align*}
    \P[\langle V^y, \tilde{P}(y) \rangle \text{ rejects}] &< \frac{1}{3} + \P[i \notin S_y] \cdot \P[\langle V^y, \tilde{P}(y) \rangle \text{ rejects} ~|~ i \notin S_y]\\
    &= \frac{1}{3} + \P[\langle V^x, P(x) \rangle \text{ rejects and } i \notin S_x]\\
    &\leq \frac{1}{3} + \P[\langle V^x, P(x) \rangle \text{ rejects}] \leq \frac{2}{3} \;,
  \end{align*}
  contradicting the correctness of the protocol.
\end{proof}

Rather surprisingly, however, there exists a quantum \emph{non-interactive} protocol that exploits amplitude amplification and achieves sublinear query and communication complexities. This is a direct consequence of the following.
\begin{proposition}
	For any $k \leq n$, the property $\Pi \coloneqq \set{x \in \bitset^n : \bigoplus_{j \in [n]} x_j = 0}$ is succinctly $k$-partitionable (with respect to exact decision).
\end{proposition}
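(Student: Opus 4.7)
The plan is to use the natural block partition and let the prover declare each block's parity. Fix $k \leq n$, partition $[n]$ into $k$ consecutive blocks $B_1, \ldots, B_k$ of sizes $m_i \in \{\lfloor n/k \rfloor, \lceil n/k \rceil\}$, and let $x^{(i)} \in \{0,1\}^{m_i}$ be the restriction of $x$ to $B_i$. These strings clearly partition $x$. The specification space is
\begin{equation*}
    S = \Bigl\{y \in \{0,1\}^k : \bigoplus_{i \in [k]} y_i = 0\Bigr\},
\end{equation*}
and to each $y \in S$ I would associate the subproperties $\Lambda^{(i)} = \{z \in \{0,1\}^{m_i} : \bigoplus_j z_j = y_i\}$. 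Since $|y| = k = O(k \log n)$, this is a \emph{succinct} decomposition, so all that remains is to verify the two conditions of \Cref{def:decomposable} in the exact decision regime $\eps = 1/n$.

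For completeness, if $x \in \Pi$, I would take $y_i$ to be the true parity of $x^{(i)}$. Then $\bigoplus_i y_i = \bigoplus_{j \in [n]} x_j = 0$, so $y \in S$, and by construction $x^{(i)} \in \Lambda^{(i)}$ for every $i \in [k]$.

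For soundness, $x$ being $1/n$-far from $\Pi$ is equivalent to $x \notin \Pi$, i.e., $\bigoplus_j x_j = 1$. Fix any $y \in S$. I would argue that there must exist some $i^* \in [k]$ for which the parity of $x^{(i^*)}$ disagrees with $y_{i^*}$: otherwise summing the block parities would yield $\bigoplus_j x_j = \bigoplus_i y_i = 0$, contradicting $x \notin \Pi$. Since flipping a single bit of $x^{(i^*)}$ toggles its parity, the Hamming distance from $x^{(i^*)}$ to $\Lambda^{(i^*)}$ is at least $1$, giving $\eps_{i^*} \geq 1/m_{i^*}$. Recalling that $\mathcal{D}$ places mass $m_i / n$ on index $i$, the contribution of $i^*$ alone yields
\begin{equation*}
    \mathbb{E}_{i \leftarrow \mathcal{D}}[\eps_i] \;\geq\; \frac{m_{i^*}}{n} \cdot \frac{1}{m_{i^*}} \;=\; \frac{1}{n} \;=\; \Omega(\eps),
\end{equation*}
which is exactly what is required.

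There is no serious obstacle here; the only subtlety worth flagging is that when $k \nmid n$ the blocks have unequal sizes, and one must verify that the weighted distribution $\mathcal{D}$ (rather than the uniform distribution over $[k]$) is precisely what absorbs this asymmetry — the single corrupted block of size $m_{i^*}$ contributes weight proportional to $m_{i^*}$, cancelling the $1/m_{i^*}$ in its local distance. With this observation the argument goes through verbatim for all $k \leq n$.
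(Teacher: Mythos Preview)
Your proposal is correct and follows essentially the same approach as the paper: block partition, specification set $S$ of even-parity $k$-bit strings, subproperties given by the declared block parities, and the soundness argument via a disagreeing block contributing $1/n$ to the expected distance. If anything, you are slightly more careful than the paper, which tacitly assumes $k \mid n$ and uses equal blocks (so that $\mathcal{D}$ is uniform), whereas you handle unequal block sizes via the weighted distribution $\mathcal{D}$.
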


\begin{proof}
	The set of strings that specify decompositions is $S = \set{y \in \bitset^k : \bigoplus_{i \in [k]} y_i = 0}$, i.e., the set of $k$-bit strings of even parity. A string $y \in S$ specifies a decomposition where, for each $i \in [k]$, the $i^\text{th}$ subproperty is $\Lambda^{(i)} = \set{x^{(i)} \in \bitset^{n/k} : \bigoplus_{j \in [n/k]} x^{(i)}_j = y_i}$ and $x \in \bitset^n$ induces $x^{(i)}$ as the substring of $x$ consisting of the $i^\text{th}$ block of $n/k$ bits, i.e., $x^{(i)} = \left(x_{\frac{(i-1)n}{k} + 1}, x_{\frac{(i-1)n}{k} + 2}, \ldots, x_{\frac{in}{k}}\right)$.

	Note that the condition $x^{(i)} \in \Lambda^{(i)}$ for all $i \in [k]$ uniquely defines $y \in \bitset^k$ by $y_i = \bigoplus_{\frac{(i-1)n}{k} < j \leq \frac{in}{k}} x_j$. Therefore, if $x$ has even parity, there exists $y \in S$ satisfying the condition, while if $x \notin \Pi$ the only string that satisfies it is not in $S$. Thus $x$ is $1/n$-far from $\Pi$ and, \emph{for all $y \in S$}, there exists $i \in [k]$ such that $x^{(i)} \notin \Lambda^{(i)}$, so that the distances $\eps_j$ from $x^{(j)}$ to $\Lambda^{(j)}$ satisfy $\frac1{k} \sum_{j \in [k]} \eps_j \geq \frac{\eps_i}{k} \geq \frac1{k \cdot \frac{n}{k}} = 1/n$.
\end{proof}

Applying \Cref{thm:decomposable} with $k = n^{2/3}$, we have:

\begin{corollary}
	There exists a QCMA protocol for parity with $O(n^{2/3})$ query and communication complexities.
\end{corollary}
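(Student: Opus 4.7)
The plan is to directly instantiate the exact-decision case of \Cref{thm:decomposable} using the succinct $k$-partition of parity furnished by the preceding proposition. Specifically, set $k = n^{2/3}$ so that each block has length $m = n/k = n^{1/3}$, and observe that the decomposition is a \emph{partition} of the input, so each bit of $x^{(i)}$ is determined by exactly one bit of $x$, i.e., $b = 1$. The specification is a $k$-bit string $y \in \{0,1\}^k$ of even parity, so $s = k = n^{2/3}$, and no further proof bits are needed for the subproperties (we use the trivial testers that simply read all $m$ bits of a block), so $p = 0$ and the total proof length is $s + kp = n^{2/3}$.

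The final step is to invoke the ``exact decision'' clause of \Cref{thm:decomposable}, which asserts a query upper bound of $O(b m \sqrt{k})$. Substituting yields
\begin{equation*}
    O\!\left(b m \sqrt{k}\right) \;=\; O\!\left(1 \cdot \tfrac{n}{k} \cdot \sqrt{k}\right) \;=\; O\!\left(\tfrac{n}{\sqrt{k}}\right) \;=\; O\!\left(\tfrac{n}{n^{1/3}}\right) \;=\; O(n^{2/3}),
\end{equation*}
matching the claimed proof length. Completeness and soundness are inherited from the theorem: when $x$ has even parity, the honest prover sends $y_i = \bigoplus_{j} x^{(i)}_j$ and every trivial block-tester accepts; when $x$ has odd parity, then for \emph{every} $y \in S$ at least one $x^{(i)}$ disagrees with $\Lambda^{(i)}$, which is detected with probability $\Omega(1/k)$ per random block choice, and the $O(\sqrt{k})$ rounds of amplitude amplification inside the theorem boost this to constant.

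There is essentially no obstacle beyond verifying that the decomposition fits the theorem's hypotheses with the right constants, since the algorithmic content (amplitude amplification applied to the ``pick a random block and check its parity against $y_i$'' subroutine) is already packaged inside \Cref{thm:decomposable}. The only mild subtlety is that the theorem statement is phrased for $\epsilon > 0$ in terms of MAPs for the $\Lambda^{(i)}$, whereas here we need the ``exact decision'' addendum; this is precisely the clause that handles the case where the subprotocols are replaced by the trivial tester and \Cref{lem:prec-samp} is bypassed, so the bound $O(bm\sqrt{k})$ applies without modification.
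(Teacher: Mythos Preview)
Your proposal is correct and takes essentially the same approach as the paper: apply the exact-decision clause of \Cref{thm:decomposable} to the succinct $k$-partition of parity with $k = n^{2/3}$, yielding proof length $s = k = n^{2/3}$ and query complexity $O(bm\sqrt{k}) = O((n/k)\sqrt{k}) = O(n^{2/3})$. The paper's own proof is in fact just the one-line invocation of \Cref{thm:decomposable} with $k = n^{2/3}$; your additional unpacking of the parameters ($b=1$, $m=n/k$, $p=0$) and of the completeness/soundness mechanism is accurate and matches the content of the theorem's proof.
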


\subsubsection{Branching programs}
\label{sec:bp}

Recall that \Cref{cor:robp} shows membership in the set of strings accepted by \emph{read-once} branching programs can be decided with $O(n^{3/4})$ query and proof complexities; thus, since parity is computed by an ROBP (of width 2), we obtain a QCMA protocol with the same parameters. Observe, however, that this is significantly worse than the $O(n^{2/3})$ upper bound of the previous section, which directly exploits decomposability. This suggests a similar improvement may be possible for branching programs, which we now show to be indeed the case for \emph{layered} branching programs parametrised by their \emph{length} $\ell$ and \emph{width} $w$ (see \Cref{sec:prelims} for the definitions).

\begin{proposition}
\label{prop:bp}
	Let $B$ be a layered branching program on $n$-bit strings of length $\ell = \ell(n)$ and width $w = w(n)$. For any $k \leq \ell$, the set $A_B \subseteq \bitset^n$ of strings accepted by $B$ is $(k, O(k \log w))$-partitionable (with respect to exact decision).
\end{proposition}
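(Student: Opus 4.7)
The plan is to have the prover specify the trajectory of $B$ on the (alleged) accepting input through $k$ evenly-spaced ``checkpoint'' layers, thereby reducing membership in $A_B$ to verifying $k$ independent local transition statements, one per chunk of $\ell/k$ consecutive layers. Concretely, let $S$ be the set of tuples $y = (v_1, v_2, \ldots, v_k)$ with $v_i \in V_{i\ell/k}$ and $v_k$ required to be an accepting sink; setting $v_0$ to be the source, each $v_i$ is encoded in $\lceil \log w \rceil$ bits, so $|y| = O(k \log w)$. Given $y \in S$, define $\Lambda^{(i)} \subseteq \bitset^{\ell/k}$ to be the set of length-$\ell/k$ bit sequences $z$ such that running $B$ from $v_{i-1}$ while reading the bits of $z$ in order (at whichever indices are queried by the vertices traversed) terminates at $v_i$. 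Define $x^{(i)} \in \bitset^{\ell/k}$ to be the sequence of bits of $x$ actually read during the $\ell/k$ steps of $B$ started at $v_{i-1}$, so each bit of $x^{(i)}$ depends on a single bit of $x$ (i.e., $b = 1$).

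Completeness follows by setting $y$ to the actual sequence of states $B$ visits at layers $\ell/k, 2\ell/k, \ldots, \ell$ on input $x \in A_B$: then $v_k$ is an accepting sink and by construction $x^{(i)} \in \Lambda^{(i)}$ for every $i$. For the distance condition with exact-decision proximity $\eps = 1/n$, suppose $x \notin A_B$ and fix any $y \in S$. The honest trajectory of $B$ on $x$ must deviate from $y$ at some checkpoint---else $B$ would reach the accepting sink $v_k$ and accept $x$---so there exists $i^{*} \in [k]$ with $x^{(i^{*})} \notin \Lambda^{(i^{*})}$. Since $\Lambda^{(i^{*})} \subseteq \bitset^{\ell/k}$, the normalised Hamming distance from $x^{(i^{*})}$ to $\Lambda^{(i^{*})}$ is at least $k/\ell$, giving $\eps_{i^{*}} \geq k/\ell$. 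With $\mathcal{D}$ uniform on $[k]$ (all $m_i$ equal $\ell/k$), this yields $\E_{i \leftarrow \mathcal{D}}[\eps_i] \geq 1/\ell$, which is the required $\Omega(1/n)$ lower bound whenever $\ell = O(n)$ (or, for the application in \Cref{thm:bp}, with $1/n$ replaced by $1/\ell$ throughout, since the relevant length parameter is $\ell$).

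The step I expect to be the main subtlety is reconciling the construction with the strict reading of \emph{partitionable}: for a general layered BP, the same input bit of $x$ may be read in multiple chunks, so the $x^{(i)}$ do not, strictly speaking, form an index-disjoint partition of $x$. The resolution is that the proof of \Cref{thm:decomposable} only uses (i) that each bit of $x^{(i)}$ depends on $b = O(1)$ bits of $x$, so that queries to $x^{(i)}$ can be simulated with constant overhead, and (ii) the expectation bound $\E_{i \leftarrow \mathcal{D}}[\eps_i] = \Omega(\eps)$; neither requires disjointness of the indices involved, and both are established above. Accordingly, ``partitionable'' is read as an ordered partition of the sequence of $\ell$ reads of $B$ into $k$ equal-length blocks, and plugging the resulting decomposition into \Cref{thm:decomposable} with the trivial tester for each $\Lambda^{(i)}$ will yield \Cref{thm:bp}.
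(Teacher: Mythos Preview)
Your proposal is correct and matches the paper's proof: both specify $k$ checkpoint vertices $v_i \in V_{i\ell/k}$ with $v_k$ accepting, take the sub-branching program from $v_{i-1}$ to $v_i$ as $\Lambda^{(i)}$, and argue that if $x \notin A_B$ then at least one chunk fails. Your explicit treatment of the partitionability subtlety (that repeated reads mean the $x^{(i)}$ need not be index-disjoint, but this is irrelevant for the application since only $b=O(1)$ and the expectation bound are used) is a genuine clarification that the paper's terse ``the $i^{\text{th}}$ block of $\ell/k$ bits of the input $x$'' glosses over.
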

\begin{proof}
	Let $V$ be the vertex set of the graph that defines the branching program $B$, whose layers are $V_0 = \set{v_0}$, $V_1$, \ldots, $V_\ell$ and whose set of accepting nodes is $F \subseteq V_\ell$. Decompositions are specified by the set $S = \set{(v_1, \ldots, v_k) : v_i \in V_{i \ell/k} \text{ and } v_k \in F}$ (whose elements are the $k \log w$-bit representations of the sequence of vertices). (Note that $\abs{V_i} \leq w$ implies $\log w$ bits suffice to identify each vertex in a given layer.)  A specification fixes $k$ nodes that partition an alleged accepting path into paths of length $\ell/k$. Thus, $\Lambda^{(i)}$ is the set of strings accepted by the branching program that is the subgraph of $B$ with layers $\set{v_{i-1}}$ and $V_j$ for $\frac{(i-1)\ell}{k} < j \leq \frac{i\ell}{k}$, source node $v_{i-1}$ and accepting node $v_i$. The string $x^{(i)}$ is the $i^\text{th}$ block of $\ell/k$ bits of the input $x$.

	If $x \in A_B$, there clearly exists an accepting path $(v_0, u_1, \ldots, u_\ell)$, namely the path determined by the execution of $B$ on $x$.  Therefore, $y = (u_{\ell/k}, u_{2 \ell/k}, \ldots, u_{(k-1)\ell/k}, u_\ell) \in S$ is a specification satisfying $x^{(i)} \in \Lambda^{(i)}$ for all $i \in [k]$. If $x \notin A_B$, on the other hand, every specification is such that at least one sub-path does not match the corresponding sub-path determined by $x$ (since otherwise $x$ would be accepted by $B$), implying $x^{(i)} \notin \Lambda^{(i)}$ for some $i$.
\end{proof}

Applying \Cref{thm:decomposable} with $k = \ell^{2/3}$, we obtain \Cref{thm:bp}: width-$w$, length-$\ell$ branching programs admit QCMAP protocols with proof complexity $O(\ell^{2/3} \log w)$ and query complexity $O(\ell/\sqrt{k}) = O(\ell^{2/3})$. Note that the $\Omega(n)$ classical complexity lower bound for parity implies the same bound for width-2 branching programs of length $n$, and that the complexities of the QCMA protocol are sublinear up to length $o(n^{3/2}/\log w)$; in particular, this holds for width-$2^{n^\alpha}$ branching programs of length $o(n^{\frac{3}{2}(1 - \alpha)})$ for every $\alpha \geq 0$. Thus, besides lifting the read-once restriction, it improves on \Cref{cor:robp} for a wide range of parameters.

\section{Bipartiteness in bounded-degree graphs}
\label{sec:bipartite}

In the bounded-degree graph model, an algorithm is given query access to the \emph{adjacency list} of a graph $G$ whose vertices have their degree bounded by $d = O(1)$. More precisely, given a vertex $v$ and an index $i \in [d]$, the oracle corresponds to the mapping $(v,i) \mapsto w$, where $w$ is the $i^\text{th}$ neighbour of $v$ (if it exists) or $w = \bot$ (if $v$ has fewer than $i$ neighbours). The distance between two graphs is the fraction of pairs $(v,i)$ whose outputs differ between the adjacency list mappings.

Our goal in this section will be to construct a QMAP protocol for testing whether a bounded-degree \emph{rapidly-mixing} graph $G$ (i.e., where the last vertex in a random walk of sufficiently large length $\ell = O(\log n)$ starting from any vertex is distributed roughly uniformly), given as an adjacency list oracle, is bipartite or $\eps$-far from every (rapidly-mixing) bipartite graph. Note that this differs from the standard testing setting on graphs by the additional restriction that \emph{both} yes- and no-inputs be rapidly mixing: whether $G$ is bipartite or $\eps$-far from bipartite, for any pair $u,v$ of vertices, the length-$\ell$ random walk starting from $u$ ends at $v$ with probability between $1/(2n)$ and $2/n$.

Our QMAP protocol for bipartiteness builds on the MAP protocol of \cite[Theorem 7.1]{GR18}, modifying it to take full advantage of quantum speedups. In the strategy laid out in that work, the classical verifier receives as proof a set $S$ of $k$ vertices that are allegedly on the same side of a bipartition. It repeatedly samples a uniformly random vertex, takes many lazy random walks of length $\ell$ and,\footnote{A lazy random walk moves from a vertex $v$ to a uniform random neighbour with probability $\frac{d_v}{2d}$ and otherwise stays at $v$, where $d_v$ is the degree of $v$ and $d = O(1)$ is the graph's degree bound.} for each of them, records if it stops at a vertex in $S$ as well as the parity of the number of non-lazy steps (where the walk moves to another vertex). If two walks starting from the same vertex end in $S$ with different parities, the verifier has found a witness to the fact that the graph is not bipartite and rejects.

For any proximity parameter $\eps$ and $k \leq n/2$, the MAP protocol of \cite{GR18} requires a proof of length $k \log n$ and makes $O(\eps^{-2} \cdot n/k)$ queries. By making the verifier quantum (but using the same classical proof), we obtain improvements in the dependence on both $n/k$ and $\eps$:

\begin{theorem}
\label{thm:bipartite}
    For every $k \leq n/2$, there exists a one-sided QCMAP for deciding whether an $n$-vertex bounded-degree graph $G$ is bipartite or $\eps$-far from being bipartite, under the promise that $G$ is rapidly-mixing, with proof complexity $k \cdot \log n$ and query complexity $\tilde{O}((n/k)^{2/3} \cdot \eps^{-5/6})$.
\end{theorem}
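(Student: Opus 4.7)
The plan is to adapt the classical MAP verifier of \cite{GR18} by replacing two of its components with quantum subroutines, then combine them so that both speedups compose. The verifier receives the same classical proof as in \cite{GR18}: a set $S \subseteq V(G)$ of $|S| = k$ vertices allegedly on the same side of a bipartition. Setting $m := n/k$, the analysis of that work guarantees two facts, both of which I will reuse as a black box. First, if $G$ is bipartite and $S$ is consistent with a bipartition, then for every vertex $v$, all length-$\ell$ lazy random walks (with $\ell = O(\log n)$) starting at $v$ that end in $S$ have the same parity. Second, if $G$ is $\eps$-far from bipartite, then for an $\Omega(\eps)$ fraction of starting vertices $v$, among $R = \tilde\Theta(m/\eps)$ such walks from $v$ there exists a \emph{bad} pair: two walks ending in $S$ with opposite parities, witnessing non-bipartiteness.

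To cut down the $R$ factor incurred by the classical verifier at a fixed vertex, I adopt the viewpoint of \cite{ACL11}. For a fixed $v$, define $g_v : [R] \to \bitset^2$ by $g_v(r) = (a,b)$, where $a = 1$ iff the lazy walk of length $\ell$ starting at $v$ with internal randomness $r$ ends in $S$, and $b$ is the parity of the walk; each evaluation of $g_v$ costs $O(\log n)$ adjacency-list queries. A bad pair at $v$ is a pair $(r_0, r_1)$ with $g_v(r_i) = (1, i)$, which is a $1$-certificate of size $2$ for the corresponding ``collision'' predicate. Ambainis's quantum collision-finding algorithm, when run on $g_v$, finds such a pair whenever one exists, using $O(R^{2/3}) = \tilde O((m/\eps)^{2/3})$ evaluations of $g_v$, for a total of $\tilde O((m/\eps)^{2/3})$ queries to the graph. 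To cut down the remaining $1/\eps$ factor from sampling starting vertices, I then wrap this subroutine in amplitude amplification. Let $A$ be the quantum procedure that prepares a uniform superposition over $v \in V(G)$, invokes collision-finding coherently on $g_v$, and flags acceptance when a bad pair is returned. Since Ambainis's algorithm performs intermediate measurements, I first use the standard principle of deferred measurement to obtain an equivalent unitary implementation with the same query complexity, so that \Cref{thm:aa} is applicable. When $G$ is $\eps$-far from bipartite the success probability of $A$ is $\Omega(\eps)$, hence $O(1/\sqrt{\eps})$ rounds of amplification yield constant rejection probability, giving a total query complexity of $\tilde O\bigl((m/\eps)^{2/3} \cdot 1/\sqrt{\eps}\bigr) = \tilde O\bigl((n/k)^{2/3} \eps^{-5/6}\bigr)$, with proof complexity $k \log n$ as desired.

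Completeness is immediate from the first property above: no bad pair exists anywhere when $G$ is bipartite and $S$ is consistent with a bipartition, so Ambainis's algorithm never reports a collision and the verifier accepts with probability $1$, giving a one-sided protocol. Soundness reduces to invoking the second property of \cite{GR18} and the correctness guarantees of Ambainis's algorithm and of amplitude amplification. The main technical care point, rather than a conceptual obstacle, is to verify that the composition of these two speedups is legal: Ambainis's algorithm is usually stated as outputting a classical collision, and one must check that (i) the principle of deferred measurement turns it into a reversible unitary without any query overhead, (ii) the function $g_v$ can be implemented coherently as a quantum oracle (which it can, since a single lazy walk of length $O(\log n)$ is computed from adjacency queries in superposition on randomness $r$), and (iii) the rapid-mixing promise on \emph{both} yes- and no-instances, which is used by \cite{GR18} to guarantee the $\Omega(\eps)$ fraction of good starting vertices, carries through unchanged to our quantum setting. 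Once these are in place, the query bound follows by multiplying the collision-finding and amplitude-amplification costs, and absorbing the $O(\log n)$ per-walk overhead into the $\tilde O(\cdot)$.
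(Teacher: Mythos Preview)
Your proposal is correct and follows essentially the same approach as the paper: receive the set $S$ as a classical proof, apply Ambainis-style collision-finding to the walk-outcome function $g_v$ at a fixed starting vertex (costing $\tilde O((m/\eps)^{2/3})$ queries), and then wrap this in amplitude amplification over the choice of $v$ using deferred measurement to make the inner routine unitary. The only cosmetic discrepancy is that the paper quotes the fraction of good starting vertices as $\Omega(\eps/\log n)$ rather than $\Omega(\eps)$, but this is absorbed by the $\tilde O(\cdot)$ and does not affect the argument.
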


Our strategy to obtain a quantum speedup uses the quantum \emph{collision-finding} algorithm \cite{A07, ACL11} to a suitable modification of the verifier strategy outlined above.\footnote{We remark that collision-finding here refers to a generalisation of the \emph{element distinctness} algorithm to symmetric relations beyond equality. In particular, it is \emph{not} a quantum algorithm for the ``$r$-to-1 collision problem'', where (many) collisions are promised to exist.} We note that the function $f$ to which we will apply the collision-finding algorithm is \emph{not} the adjacency list oracle, but one whose unitary representation can be obtained by adjacency list queries and classical computation.

\begin{theorem}[Quantum collision-finding {\cite[Theorem 9]{ACL11}}]
	\label{thm:collision}
	Let $f \colon X \to Y$ be a function given via a unitary oracle and $R \subseteq Y \times Y$ be a symmetric binary relation. There exists a quantum algorithm that makes $O(\abs{X}^{2/3} \polylog \abs{Y})$ queries to $f$, always accepts if $(x,x') \notin R$ for every distinct $x, x' \in X$, and rejects with probability $2/3$ if there exist distinct $x,x' \in X$ such that $(x,x') \in R$.
\end{theorem}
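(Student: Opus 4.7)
The plan is to generalize Ambainis's element-distinctness algorithm by setting up a quantum walk on the Johnson graph and invoking the Szegedy/MNRS quantum-walk search framework. Recall that MNRS produces, from a reversible random walk on a graph with spectral gap $\delta$ and a fraction $\epsilon$ of marked vertices, a quantum algorithm with query complexity $O\bigl(s+\tfrac{1}{\sqrt{\epsilon}}(\tfrac{1}{\sqrt{\delta}}u + c)\bigr)$, where $s$, $u$, $c$ are the setup, update, and checking query costs. I would instantiate this framework on the Johnson graph $J(|X|,r)$, whose vertices are the $r$-element subsets of $X$ and whose edges connect subsets differing in exactly one element, with $r$ to be optimized.

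Each vertex $S$ of $J(|X|,r)$ carries a data structure $D_S$ storing $\{(x,f(x)):x\in S\}$ in a history-independent representation (e.g., a sorted/hashed structure keyed by the $f$-values) together with a single bit indicating whether $S$ contains a \emph{colliding pair}, i.e., distinct $x,x'\in S$ with $(f(x),f(x'))\in R$; mark $S$ exactly when this bit is set. The setup cost is $s=O(r)$ (one query per element of $S$); the update cost is $u=O(1)$ queries (removing an element costs no query, inserting one costs exactly one query to $f$), while the collision bit can be refreshed in $\polylog|Y|$ time per step using bucket counters keyed by $f$-value. The check cost is $c=0$, since marking is read directly from $D_S$. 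Standard facts give $\delta=\Theta(1/r)$ for $J(|X|,r)$, and if any colliding pair $(a,b)$ exists then a uniformly random $r$-subset contains both with probability $\epsilon=\Omega(r^2/|X|^2)$.

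Substituting yields total query complexity $O\bigl(r+\tfrac{|X|}{r}\cdot\sqrt{r}\bigr)=O(r+|X|/\sqrt{r})$, minimized at $r=|X|^{2/3}$, which gives $O(|X|^{2/3})$ queries to $f$ with the $\polylog|Y|$ factor absorbed from the data-structure operations. One-sidedness is automatic: when no colliding pair exists no vertex is ever marked, so the walk cannot produce a spurious rejection; when a pair exists the MNRS guarantee yields a marked vertex with probability $\Omega(1)$, and constant rounds of amplification push the rejection probability up to $2/3$. The main obstacle I expect is the usual one in quantum-walk algorithms: ensuring that $D_S$ is a \emph{canonical} function of $S$ so that the walk operator acts coherently on the superposition $\sum_S \alpha_S\ket{S}\ket{D_S}$ without accumulating garbage phases. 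This is handled by fixing a history-independent layout for $D_S$ (e.g., radix-sorted by $f$-value with a precomputed canonical tie-breaking rule), which also lets the marked bit be maintained incrementally in $\polylog|Y|$ time per edge of the walk, as needed for the query count above.
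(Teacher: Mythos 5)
The paper does not prove this statement itself---it imports it verbatim from Ambainis--Childs--Liu (\cite{ACL11}, Theorem~9), whose proof is exactly the Johnson-graph quantum walk you describe. Your argument is correct and matches that standard route (setup/update/check costs $r$, $O(1)$, $0$; marked fraction $\Theta(r^2/|X|^2)$; spectral gap $\Theta(1/r)$; optimised at $r=|X|^{2/3}$), including the essential history-independence requirement on the data structure; the only minor imprecision is attributing the $\polylog |Y|$ factor to query cost rather than time, which only makes your bound stronger than the stated one.
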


Returning to the aforementioned classical MAP, observe that the information of each set of $t$ random walks of length $\ell$ starting from a fixed vertex can be represented by a function $f \colon T \to \bitset^2$, where $T \subset \bitset^{\ell'}$ is composed of $t$ strings of length $\ell' = O(\ell)$ (note that each step of the random walk can be performed with $O(1)$ bits, for a total of $\ell' = O(\ell)$ random coins per walk) and $f(r)$ encodes whether the walk reaches a vertex in $S$ as well as its parity when its inner randomness corresponds to the string $r$. Now, applying the collision-finding algorithm to this function for each sampled vertex already yields a speedup \cite{ACL11}; but it can be improved by combining this strategy with amplitude amplification (which improves quadratically the number of samples taken as the starting vertices of the random walks), as we show next.

\begin{proof}[Proof of \Cref{thm:bipartite}]
    We first consider the classical MAP outlined earlier when it samples a \emph{single} starting vertex for a random walk, and apply collision-finding to the function $f$ whose domain is $T$, a random (multi-)set of sequences of coin flips. The proof specifies a subset $S \subseteq L$ of size $k$, for some $L$ (allegedly) defining a bipartition $V = L \cup R$ of $G$ with $\abs{L} \geq \abs{R}$. The resulting algorithm (which is \emph{not} yet the final QCMAP verifier), is shown in \Cref{fig:bipartiteness}.

  \begin{figure}[h]
    \begin{boxedminipage}{\textwidth}
      \small \medskip \noindent

      \textbf{Input:} oracle access to the adjacency list of an $n$-vertex graph $G$, as well as explicit access to a proximity parameter $\eps > 0$ and a proof string $\pi \in \bitset^{k \log n}$ (representing a set $S \subset V(G)$ of size $k$).

      \begin{enumerate}

      \item Sample a uniformly random  vertex $v \in V(G)$ and select a multi-set $T \subset \bitset^{\ell'}$, where $\ell' = O(\ell)$ and $\ell = O(\log n)$, by sampling $O \left( \frac{n}{k} \cdot \frac{\log{n}}{ \eps} \right)$ bit strings uniformly and independently.

      \item Let $f: T \to \bitset^2$ be the function computed by the following subroutine:
	  \begin{enumerate}[ref =\theenumi{(\alph*)}]
	  	\item\label{itm:cf-subroutine} Let $r \in T$ be the input. Take the (lazy) random walk of length $\ell$ starting at $v$ using $r$ as the random bits. Let $a \in \bitset$ be the indicator of whether the walk stops at a vertex in $S$ and $b \in \bitset$ be the parity of the walk (i.e., the number of non-lazy steps). Return $(a, b)$.
	  \end{enumerate}
      \item Execute the algorithm of \Cref{thm:collision} with respect to $f$ to find $t$ and $t'$ such that $f(t) = (a,b)$ and $f(t') = (a',b')$ with $a = a' = 1$ and $b \neq b'$. Reject if such a pair is found, and accept otherwise.
      \end{enumerate}
    \end{boxedminipage}

    \caption{Quantum algorithm for bipartiteness with low detection probability}
    \label{fig:bipartiteness}
  \end{figure}

	First, note that the function $f$ contains a collision with respect to the relation $R \subset \bitset^4$ that comprises all pairs of pairs of the type $((1,b),(1,1-b))$ \emph{if and only if} there are two random walks of length $\ell$ that start from $v$ end at a vertex in $S$ with different parities.

    If $G$ is bipartite with vertex set $L \cup R$ and $S \subseteq L$, every path with both endpoints in $S$ has even length. But the existence of two paths from the same vertex into $S$ with different parities implies the existence of a path of odd length with both endpoints in $S$; therefore, since the function $f$ does not contain a collision for any starting vertex $v$, the verifier always accepts when $G$ is bipartite and the proof consists of $k$ elements on the same side of a bipartition.

    If $G$ is $\eps$-far from bipartite, then \cite{GR18} (building on \cite{GR99}) show the following: defining $a$ as the indicator of whether a random walk of length $\ell = O(\log n)$ starting from vertex $v$ stops at a vertex in $S$, and $b$ as the parity of this random walk, then an $\Omega(\eps/\log n)$ fraction of vertices in $V(G)$ are such that \emph{both} $\P[a = 1, b = 0]$ and $\P[a = 1, b = 1]$ are $\Omega(\frac{k \eps}{n \log n})$. Therefore, with probability $\Omega(\eps/\log n)$, the sampled vertex $v \in V(G)$ satisfies this condition. If such a vertex was sampled, with probability $\Omega(1)$, in a sufficiently large number $\abs{T} = O(\frac{n \log n}{k \eps})$ of random walks starting from it there exists a pair of length-$\ell$ walks that end in $S$ with different parities. Finally, since the algorithm of \Cref{thm:collision} rejects with constant probability if the function $f$ contains a collision, the algorithm in \Cref{fig:bipartiteness} rejects with probability $\Omega(\eps/\log n)$.

    The quantum collision-finding algorithm computes $O(\abs{T}^{2/3})$ times the function $f$, each of which simulates a random walk of length $\ell = O(\log n)$. Each step of the random walk requires $O(1)$ queries to the graph $G$, so that $O(\log n)$ queries are made per walk. The query complexity of the algorithm in \Cref{fig:bipartiteness} is therefore
    \begin{equation*}
    	O\left(\abs{T}^{2/3} \cdot \ell\right) = \tilde{O}\left( \left(\frac{n}{k \eps}\right)^{2/3} \right).
    \end{equation*}

    Finally, applying amplitude amplification (\Cref{thm:aa}) to this algorithm (rather, more precisely, to a reversible quantum algorithm that is obtained from \Cref{fig:bipartiteness} by deferring its measurements; see, e.g., \cite{NC16}), we obtain a QCMAP verifier for bipartiteness of bounded-degree graphs with query complexity
     \begin{equation*}
    	\tilde{O}\left( \left(\frac{n}{k \eps}\right)^{2/3} \right) \cdot O\left(\sqrt{\frac{\log n}{\eps}}\right) = \tilde{O}\left( \left(\frac{n}{k}\right)^{2/3} \cdot \frac1{\eps^{5/6}} \right).\qedhere
    \end{equation*}
\end{proof}

We remark that classically testing bipartiteness in the bounded-degree model requires $\Omega(\sqrt{n})$ queries \emph{even under the rapidly-mixing promise}, as shown by Goldreich and Ron \cite{GR02}; therefore, for constant $\eps$, a QCMAP with proof complexity $\tilde{O}(n^{1/4})$ is enough to overcome the testing lower bound, while the MAP of \cite{GR18} requires a proof of length $\tilde{O}(\sqrt{n})$. Of course, a fairer comparison would be to \emph{quantum} testers, for which $\tilde{O}(n^{1/3})$ queries suffice \cite{A07} but no nontrivial lower bound is known. The quantum tester is sound without the rapidly-mixing promise, however, so even though a QCMAP with proof length $\tilde{O}(\sqrt{n})$ makes fewer queries than the best known quantum tester, it is not clear how the two algorithms compare.


\appendix

\section{A QCMAP lower bound for testing unitaries}
\label{sec:qmap-vs-qcmap}

The known $\QMA$ versus $\mathcal{QCMA}$ (oracle) separation of Aaronson and Kuperberg \cite{AK07} is naturally cast as a testing problem, and thus yields a corresponding separation between $\QMAP$ and $\QCMAP$. More precisely, we consider the following property of unitaries:
\begin{equation*}
    \Pi_{AK} = \set{U \in \mathbb{C}^{n \times n} : \begin{array}{ll}U U^\dagger = U^\dagger U = I \text{ and } \exists \ket{\psi} \text{ such that } U \ket{\psi} = - \ket{\psi}\\ \text{and } U \ket{\varphi} = \ket{\varphi} \text{ when } \bra{\varphi}\ket{\psi} = 0\end{array}}\;,
\end{equation*}
where the distance measure is the one induced by the (normalised) Hilbert-Schmidt norm: $d(U,V) = \sqrt{\frac1{n} \cdot \Tr\big((U - V)^\dagger (U - V)\big)}$. Note that, for every $U \in \Pi_{AK}$,
\begin{equation*}
    d(U,I) = \sqrt{\frac1{n}\sum_{i = 1}^n \sigma_i(U - I)^2} = \frac{2}{\sqrt{n}},
\end{equation*}
where $I$ is the $n \times n$ identity matrix and $\sigma_i(A)$ is the $i^\text{th}$ eigenvalue of $A$ in (say) nonincreasing order. Then $d(\Pi_{AK}, I) = 2/\sqrt{n}$, so that distinguishing between a unitary in $\Pi_{AK}$ and $I$ reduces to testing $\Pi_{AK}$ with proximity parameter $\eps \leq 2/\sqrt{n}$. Formally,
\begin{theorem}
    \label{thm:qmap-vs-qcmap}
    For any proximity parameter $\eps \leq 2/\sqrt{n}$, we have $\Pi_{AK} \in \QMAP(\eps, \log n, 1)$ and $\Pi_{AK} \notin \QCMAP(\eps, p, q)$ when $\sqrt{p} \cdot q = o(\sqrt{n})$.
\end{theorem}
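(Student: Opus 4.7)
The argument splits into two largely independent pieces: an upper bound exhibiting a QMAP protocol with the stated resources, and a lower bound ruling out efficient QCMAPs in the specified regime. Both pieces are property-testing adaptations of the $\QMA$ vs.\ $\QCMA$ oracle separation of Aaronson and Kuperberg~\cite{AK07}, which is why the upper bound uses a \emph{quantum} proof of $\log n$ qubits while the lower bound forbids a \emph{classical} one.

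For the upper bound, the prover is asked to send the $\log n$-qubit state $\ket{\psi}$ that is the $-1$ eigenvector of the input unitary $U$ (which exists uniquely whenever $U \in \Pi_{AK}$). The verifier performs a single-query Hadamard-style test: it prepares an ancilla in $\ket{+}$, applies controlled-$U$ to the proof register, performs a Hadamard on the ancilla, and measures. When $U \in \Pi_{AK}$ and the proof is honest, the measurement returns $1$ with probability $1$, so completeness holds with a single oracle call. In general the acceptance probability equals $(1-\text{Re}\bra{\psi}U\ket{\psi})/2$, and soundness amounts to bounding this away from $1$ for every state $\ket{\psi}$ whenever $U$ is $\eps$-far from $\Pi_{AK}$.

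For the lower bound I would first observe that $d(I, \Pi_{AK}) = 2/\sqrt{n}$: a direct computation in the normalised Hilbert-Schmidt norm gives $\|I - R_\phi\|^2 = 4/n$ for every reflection $R_\phi = I - 2\ket{\phi}\bra{\phi}$, independent of $\phi$. In particular, for every $\eps \leq 2/\sqrt{n}$, the identity is an $\eps$-far no-instance. A QCMAP with classical proof length $p$ and query complexity $q$ satisfying $\sqrt{p}\,q = o(\sqrt{n})$ would thus distinguish $U = I$ from a Haar-random reflection, directly contradicting the Aaronson-Kuperberg lower bound $\sqrt{p}\,q = \Omega(\sqrt{n})$ for that distinguishing task. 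No adaptation of their proof is needed beyond noting that their ``no'' instance $U = I$ lies in the $\eps$-far region for our choice of $\eps$.

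The main obstacle is the soundness of the QMAP verifier against \emph{arbitrary} $\eps$-far unitaries, rather than merely $U = I$. Pathological examples such as $U = -I$ are at Hilbert-Schmidt distance close to $2$ from $\Pi_{AK}$ yet admit states $\ket{\psi}$ with $\text{Re}\bra{\psi}U\ket{\psi} = -1$, so the naive Hadamard test cannot reject them. I would address this by analysing the constraint that distance from the reflection set imposes on the smallest eigenvalue of the Hermitian part $(U+U^\dagger)/2$, and, if a simple eigenvalue argument falls short, by augmenting the test with a lightweight verifier-side probe of the purported $+1$ eigenspace (e.g., folding a verifier-chosen auxiliary state into the controlled-$U$ step) so that unitaries with a trivially negative spectrum are detected without exceeding the single-query budget.
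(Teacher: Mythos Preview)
Your lower-bound argument is exactly the paper's: observe that $d(I,\Pi_{AK})=2/\sqrt{n}$, so for $\eps\le 2/\sqrt{n}$ the identity is an $\eps$-far no-instance, and any QCMAP tester would solve the Aaronson--Kuperberg distinguishing problem, inheriting their $\sqrt{p}\,q=\Omega(\sqrt{n})$ bound verbatim.

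For the upper bound you and the paper diverge. The paper does \emph{not} attempt soundness against arbitrary $\eps$-far unitaries. Its proof opens with ``without loss of generality, we assume query access to a controlled unitary $U\in\Pi_{AK}$ or to the identity unitary'', i.e.\ it treats the problem as the promise problem with $I$ as the sole no-instance. Under that reading the Hadamard/phase-kickback test you describe is already the whole protocol: it accepts with certainty on the $-1$ eigenvector when $U\in\Pi_{AK}$, and rejects $U=I$ with certainty for every proof. No further soundness analysis is carried out or needed.

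Your worry about $U=-I$ is therefore well-founded for the unrestricted universe of all unitaries---and your computation that $-I$ is roughly $2$-far yet fools the Hadamard test is correct---but the paper simply does not claim to handle that case. The ``without loss of generality'' is genuinely a restriction of the universe, legitimate for the lower bound (a tester yields a distinguisher) but an assumption for the upper bound. Your proposed remedies (eigenvalue bounds on the Hermitian part, auxiliary probes folded into the single query) go beyond what the paper establishes, and it is unclear they can be made to work within a single query; in particular, distinguishing $-I$ from $\Pi_{AK}$ with one query and a $\log n$-qubit witness is itself nontrivial. If you want to match the paper, drop that discussion and state the promise explicitly.
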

\begin{proof}
    Without loss of generality, we assume query access to a \emph{controlled} unitary $U \in \Pi_{AK}$ or to the identity unitary. Note that, since the identity is $\eps$-far from $\Pi_{AK}$, distinguishing between $U \in \Pi_{AK}$ and $I$ is at least as hard as testing $\Pi_{AK}$.

    A $\QMAP$ algorithm with logarithmic proof length and query complexity 1 is as follows: given a $\log n$-qubit quantum state $\ket{\psi}$ as proof, apply the unitary to $\ket{\psi}$, accepting if and only if a phase flip is detected (by measuring and inspecting the outcome of the control qubit). Clearly, the eigenstate with eigenvalue $-1$ is a proof that causes the algorithm to accept with certainty if $U \in \Pi_{AK}$, while no quantum state is accepted if $U = I$ (also with certainty).

    The lower bound is immediate from the one proven in \cite{AK07}, which can be rephrased as follows: any QCMA algorithm that receives a proof of length $p$ and makes $q = o(\sqrt{n/p})$ oracle queries to $U \in \Pi_{AK}$ or the identity operator either accepts $I$ or rejects some element of $\Pi_{AK}$ with probability at least $1/2$.
\end{proof}

\begin{remark}
    In the usual encoding of an oracle for an $n$-bit string $x$ as a unitary $\ket{i}\ket{b} \mapsto \ket{i}\ket{b \oplus x_i}$, a Hamming distance of $\Theta(1)$ translates into $\Theta(1)$ distance in the Hilbert-Schmidt metric. Therefore, \Cref{thm:qmap-vs-qcmap} falls short of proving $\QMAP \not \subseteq \QCMAP$ (where the omitted proof and query complexities are polylogarithmic, and the proximity parameter is constant).
\end{remark}

\section{Interaction versus quantum proofs}
\label{sec:ipp-vs-qmap}

In this section we compare the power of classical interactive proofs of proximity (IPPs) and non-interactive quantum proofs of proximity (QMAPs), and show that the rather well studied problem of \emph{permutation testing} admits an efficient IPP but no efficient QMAP. In fact, for permutation testing, even an Arthur-Merlin Proof of Proximity is sufficient, as shown in \cite{GLR21}. Informally, an Arthur-Merlin Proof of Proximity (AMP) is a proof system with one round of communication where the verifier sends the first message.

Let $\Pi_P$ be the property (of \emph{functions} $f: [n] \to [n]$) defined as
\begin{equation*}
    \Pi_P = \set{ f : f \text{ is a bijection} }\;,
\end{equation*}
i.e., $\Pi_P$ is the set of all permutations. We note that in an oracle query to $f$, an algorithm sends $x \in [n]$ and receives (the $\log n$-bit string) $f(x)$; accordingly, a quantum query maps ($2\log n$-qubit states via) $\ket{x}\ket{y} \mapsto \ket{x} \ket{y+f(x)}$. Moreover, distance is measured in terms of the fraction of inputs where functions disagree (rather than with respect to their representations as bit strings), i.e., $f$ and $g$ are $\eps$-far when $\abs{\set{x \in [n] : f(x) \neq g(x)}}/n \geq \eps$.

The separation follows immediately from the two following theorems.

\begin{theorem}[{\cite[Lemma 4.2]{GLR21}}]
	\label{pi2_in_IPP}
	For every with $\eps>0$, There exists an AMP for $\eps$-testing $\Pi_P$ with query complexity $O(1/\eps)$ and communication complexity $O(\log n/\eps)$, that communicates two messages: the first is sent from verifier to prover and the second from prover to verifier. Therefore,
	\begin{equation*}
		\Pi_P \in \mathcal{AMP}(\eps, O(\log n/\eps), O(1/\eps),2)\;.
	\end{equation*}
\end{theorem}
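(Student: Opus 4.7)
The plan is to exploit a clean characterization of distance from $\Pi_P$: for any $f\colon[n]\to[n]$, one has $d(f,\Pi_P) = (n - |I(f)|)/n$, where $I(f) = f([n])$ denotes the image. The lower bound holds because any permutation $\pi$ must send some input to each value of $[n]\setminus I(f)$, and at every such input $\pi$ necessarily disagrees with $f$; the matching upper bound is an explicit construction, selecting one representative preimage for every $y\in I(f)$ and biject\-ively reassigning the remaining $n - |I(f)|$ inputs to cover the missing values. Consequently, $f$ is $\eps$-far from $\Pi_P$ if and only if $|I(f)| \le (1-\eps)n$.

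With this characterization in hand, I would describe the following AMP protocol. In the first round the verifier samples $k = \lceil \ln 3/\eps\rceil = O(1/\eps)$ uniformly random values $y_1,\ldots,y_k \in [n]$ and sends them to the prover. The prover replies in the second round with claimed preimages $x_1,\ldots,x_k\in[n]$, nominally satisfying $f(x_i) = y_i$. The verifier then makes $k$ oracle queries to obtain $f(x_1),\ldots,f(x_k)$ and accepts if and only if $f(x_i)=y_i$ for every $i\in[k]$. The query complexity is $k = O(1/\eps)$ and the communication complexity is $2k\log n = O(\log n/\eps)$, as required.

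Completeness holds with probability one: when $f$ is a permutation, the honest prover sends $x_i = f^{-1}(y_i)$ and every check passes. For soundness, assume $f$ is $\eps$-far from $\Pi_P$, so $|[n]\setminus I(f)| \ge \eps n$. For any $y_i \notin I(f)$ there is simply no $x$ with $f(x) = y_i$, so the $i$-th check rejects under every prover strategy. The probability that all $k$ of the verifier's samples fall inside $I(f)$ is at most $(1-\eps)^k \le e^{-\eps k} \le 1/3$ by our choice of $k$, so the verifier rejects with probability at least $2/3$.

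The conceptual bulk of the argument is the distance characterization; once it is established, soundness reduces to a one-line sampling bound and completeness is immediate. The only care needed is in the upper-bound direction of the distance formula, which I would phrase as an explicit matching between unused inputs (those outside a chosen system of representatives) and the missing values $[n]\setminus I(f)$.
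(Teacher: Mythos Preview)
Your proof is correct. The paper does not supply its own argument for this statement---it quotes it directly from \cite[Lemma 4.2]{GLR21}---but your protocol (sample random targets $y_i$, ask the prover for preimages, verify by querying $f$) together with the image-size characterization of distance to $\Pi_P$ is exactly the standard construction, and is the argument given in the cited source.
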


Since $\mathcal{AMP}(\eps, c, q, r) \subseteq \IPP(\eps, c, q, r + 1)$ (by initiating the protocol with a ``dummy'' message by the prover), the following corollary is immediate.
\begin{corollary}
	$\Pi_P \in \IPP(\eps, O(\log n/\eps),O(1/\eps),3)$.
\end{corollary}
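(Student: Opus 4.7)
The plan is to leverage the inclusion $\mathcal{AMP}(\eps, c, q, r) \subseteq \IPP(\eps, c, q, r+1)$ applied to the AMP protocol of \Cref{pi2_in_IPP}. Recall that in an AMP the verifier speaks first, while in an IPP (as defined in \Cref{sec:prelims}) the prover is required to send the initial message. The entire content of the corollary is therefore bridging this syntactic mismatch at the cost of a single additional round of communication.

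Concretely, let $(V_{\mathrm{AM}}, P_{\mathrm{AM}})$ denote the 2-message AMP protocol from \Cref{pi2_in_IPP}, in which $V_{\mathrm{AM}}$ sends a message $\alpha$ of length $O(\log n / \eps)$ to $P_{\mathrm{AM}}$, receives a reply $\beta$ of length $O(\log n / \eps)$, and then performs $O(1/\eps)$ queries to $f$ before deciding. I would construct an IPP $(V, P)$ as follows. In the first round, the (honest) prover $P$ sends an arbitrary dummy symbol (say, a single bit $0$) to $V$; the verifier discards this message. In the second round, $V$ sends $\alpha$ to $P$, and in the third round $P$ returns $\beta = P_{\mathrm{AM}}(f, \alpha)$; finally, $V$ makes queries to $f$ and outputs exactly as $V_{\mathrm{AM}}$ does on transcript $(\alpha, \beta)$.

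Completeness and soundness carry over verbatim from the AMP: for any $f \in \Pi_P$, completeness of the AMP yields an honest prover $P$ causing acceptance with probability at least $2/3$, while for any $f$ that is $\eps$-far from $\Pi_P$, any cheating prover $P^\ast$ induces some distribution on $(\alpha, \beta)$-transcripts, each of which is rejected by $V_{\mathrm{AM}}$ with probability at least $2/3$ by soundness of the AMP. The dummy first message does not affect either analysis, since it is independent of the input and ignored by the verifier. The resulting protocol uses $3$ messages, communication complexity $O(\log n / \eps) + 1 = O(\log n / \eps)$, and query complexity $O(1/\eps)$, yielding $\Pi_P \in \IPP(\eps, O(\log n/\eps), O(1/\eps), 3)$.

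There is no real obstacle here: the statement is essentially a definitional observation that prepending a vacuous prover message converts an AMP into a protocol whose first message is from the prover, at the cost of one additional round. The only thing to double-check is that the added dummy message does not inflate the communication complexity beyond $O(\log n / \eps)$, which is immediate.
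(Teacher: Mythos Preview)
Your proposal is correct and matches the paper's own argument exactly: the paper simply observes that $\mathcal{AMP}(\eps, c, q, r) \subseteq \IPP(\eps, c, q, r+1)$ via prepending a dummy prover message, and declares the corollary immediate. Your write-up is a faithful (and more detailed) unpacking of precisely that observation.
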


Having established that $\Pi_P$ admits an efficient IPP, we must now show it is \emph{not} efficiently testable by a QMAP:

\begin{lemma}[{\cite[Theorem 1.2]{ST19}}]
\label{pi2_not_in_QMAP}
 Any QMAP protocol for testing $\Pi_P$ with respect to proximity parameter $\eps = \Omega(1)$, using a proof of length $p$ and making $q$ queries, satisfies $p \cdot q^3 = \Omega(n)$; i.e.,
 \begin{equation*}
     \Pi_P \notin \QMAP(\eps, p,q) \text{ when } p \cdot q^3 = o(n)\;.
 \end{equation*}
\end{lemma}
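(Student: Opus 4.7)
The plan is to follow the approach of Sherstov and Thaler \cite{ST19}, whose main theorem directly yields the statement. The route proceeds through the \emph{vanishing-error approximate degree} of the partial permutation-testing function $f_P$, defined by $f_P(f) = 1$ if $f$ is a permutation and $f_P(f) = 0$ if $f$ is $\eps$-far from every permutation. At a high level, a QMAP protocol will be converted into a low-degree polynomial approximating $f_P$ with extremely small error, and this will be compared against a matching approximate-degree lower bound tailored to permutation testing.

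First, given a QMAP with proof length $p$ and query complexity $q$, I would apply Marriott-Watrous amplification in the query model to boost the completeness-soundness gap to $1 - 2^{-\Omega(p)}$ using $O(p)$ executions of the original verifier. Then, by replacing the proof register with the maximally mixed state (the standard $\QMA \subseteq \PP$ argument adapted to the query setting, as already invoked for \Cref{cor:kwise}), the acceptance probability becomes a \emph{proof-independent} polynomial of degree $O(pq)$ in the $n^2$ indicator variables of $f$, which separates permutations from $\eps$-far inputs multiplicatively by a factor of $2^{\Omega(p)}$. In other words, $f_P$ admits a polynomial approximation of degree $O(pq)$ with vanishing error $2^{-\Omega(p)}$.

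Second, I would establish a matching lower bound: any polynomial achieving this multiplicative separation between permutations and $\eps$-far inputs must have degree $\Omega(n^{1/3} p^{2/3})$. This is obtained by amplifying the dual polynomial witnessing the Aaronson–Shi $\Omega(n^{1/3})$ bound for the collision problem, which is the technical centrepiece of \cite{ST19}. The reduction from collision to permutation testing is comparatively routine, since any function $\eps$-far from every permutation contains $\Omega(\eps n)$ collisions for constant $\eps$. Combining the two bounds gives $pq = \Omega(n^{1/3} p^{2/3})$, which rearranges to $p \cdot q^3 = \Omega(n)$, as required. The main obstacle is the dual-polynomial amplification: extending the classical $\Omega(n^{1/3})$ approximate-degree bound for collision into the vanishing-error regime requires delicate tools from approximation theory (in particular, a careful construction of an amplified dual witness that simultaneously maintains pure-high-degree properties and exponentially small $\ell_1$-mass on one side of the partial function's domain), and this is precisely the step that distinguishes the $pq^3$ bound from the weaker $pq = \Omega(n^{1/3})$ one would obtain by naively combining Marriott-Watrous with the Aaronson-Shi bound.
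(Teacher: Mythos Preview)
The paper does not supply its own proof of this lemma; it simply cites \cite[Theorem 1.2]{ST19} as a black box. Your proposal correctly sketches the argument of \cite{ST19}: the upper-bound direction (QMAP to vanishing-error approximate degree via Marriott--Watrous amplification plus the maximally mixed proof) and the lower-bound direction (amplifying the dual witness for the $\Omega(n^{1/3})$ approximate-degree bound into the vanishing-error regime) are exactly the two halves of that paper's approach, so you are aligned with what the paper relies on.
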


Note that this implies that either $p$ or $q$ must be $\Omega(n^{1/4})$ in any QMAP protocol for $\Pi_P$. Finally, \Cref{pi2_in_IPP} and \Cref{pi2_not_in_QMAP} together imply the main result of this section.

\begin{theorem}
\label{thm:ipp-vs-qmap}
	Let $\Pi_P \subset \set{f: [n] \to [n]}$ be the set of bijective functions from $[n]$ to $[n]$, with the distance between functions $f$ and $g$ defined as $\abs{\set{x \in [n] : f(x) \neq g(x)}}/n$. Then, for any $\eps = \Omega(1)$,
	\begin{align*}
		\Pi_P &\in \IPP(\eps, O(\log n), O(1), 3) \text{ and}\\
		\Pi_P &\notin \QMAP(\eps, p,q) \text{ when } p \cdot q^3 = o(n)\;.
	\end{align*}
	In particular,
	\begin{equation*}
		\IPP(\eps, O(\log n), O(1), 3) \not \subseteq \QMAP(\eps, o(n^{1/4}),o(n^{1/4}))\;.
	\end{equation*}
\end{theorem}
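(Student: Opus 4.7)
The plan is essentially to combine the two results cited immediately before the theorem statement, so the proof will be very short and largely a matter of bookkeeping.

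First, I would invoke \Cref{pi2_in_IPP} (the result of \cite{GLR21}) to obtain an AMP protocol for $\Pi_P$ with communication complexity $O(\log n / \eps)$, query complexity $O(1/\eps)$, and round complexity $2$. Since AMPs are a restricted form of IPPs where the verifier speaks first, I would then observe that any $r$-round AMP can be simulated by an $(r+1)$-round IPP by simply prepending a trivial ``dummy'' message from the prover to the verifier (which costs nothing in communication and is ignored by the verifier). This yields $\Pi_P \in \IPP(\eps, O(\log n/\eps), O(1/\eps), 3)$, and specialising to $\eps = \Omega(1)$ collapses the dependencies on $\eps$ to constants, giving the first containment $\Pi_P \in \IPP(\eps, O(\log n), O(1), 3)$.

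Next, I would invoke \Cref{pi2_not_in_QMAP} (the result of \cite{ST19}, via Marriott--Watrous style amplification in the query model) directly, which gives exactly the second claim: any QMAP protocol for $\Pi_P$ with constant proximity parameter must have proof and query complexities satisfying $p \cdot q^3 = \Omega(n)$.

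For the ``in particular'' statement, the reasoning is purely arithmetic. Suppose towards contradiction that $\Pi_P \in \QMAP(\eps, p, q)$ with $p = o(n^{1/4})$ and $q = o(n^{1/4})$. Then $p \cdot q^3 = o(n^{1/4}) \cdot o(n^{3/4}) = o(n)$, contradicting the lower bound from \Cref{pi2_not_in_QMAP}. Combined with the IPP upper bound above, this yields the class separation $\IPP(\eps, O(\log n), O(1), 3) \not\subseteq \QMAP(\eps, o(n^{1/4}), o(n^{1/4}))$.

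There is no real obstacle here: the two cited results do all the heavy lifting, and the only minor subtlety is the syntactic AMP-to-IPP conversion by padding with an initial dummy message, together with checking that $p \cdot q^3 = o(n)$ whenever both $p$ and $q$ are $o(n^{1/4})$.
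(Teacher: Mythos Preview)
Your proposal is correct and matches the paper's approach essentially line for line: the paper also derives the IPP upper bound from \Cref{pi2_in_IPP} via the dummy-message AMP-to-IPP conversion (stated as a corollary just before the theorem), quotes \Cref{pi2_not_in_QMAP} verbatim for the lower bound, and notes that $p\cdot q^3 = o(n)$ forces $\max(p,q) = \Omega(n^{1/4})$ to obtain the class separation.
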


\end{document}